\newcommand{\shortpage}{\enlargethispage{-\baselineskip}}
\title{A Framework for Differential Privacy Against Timing Attacks}
\author{Zachary Ratliff\thanks{Harvard University \& OpenDP. Email: \texttt{zacharyratliff@g.harvard.edu}. Supported in part by Cooperative Agreement CB20ADR0160001 with the Census Bureau, and in
part by Salil Vadhan’s Simons Investigator Award}
\and
Salil Vadhan\thanks{Harvard University \& OpenDP. Email: \texttt{salil\_vadhan@harvard.edu}}
}
\date{September 2024}
\begin{document}

\maketitle
\begin{abstract}
The standard definition of differential privacy (DP) ensures that a mechanism's \emph{output} distribution on adjacent datasets is indistinguishable. However, real-world implementations of DP can, and often do, reveal information through their \emph{runtime} distributions, making them susceptible to timing attacks. In this work, we establish a general framework for ensuring differential privacy in the presence of timing side channels. We define a new notion of \emph{timing privacy}, which captures programs that remain differentially private to an adversary that observes the program's runtime in addition to the output. Our framework enables chaining together component programs that are \emph{timing-stable} followed by a random delay to obtain DP programs that achieve timing privacy.  Importantly, our definitions allow for measuring timing privacy and output privacy using different privacy measures. We illustrate how to instantiate our framework by giving programs for standard DP computations in the RAM and Word RAM models of computation.  Furthermore, we show how our framework can be realized in code through a natural extension of the OpenDP Programming Framework. 
\end{abstract}
\newpage
\tableofcontents
\newpage
\section{Introduction}
The framework of differential privacy (DP) \cite{dwork2006calibrating} is used extensively for computing privacy-preserving statistics over sensitive data. A differentially private algorithm has the property that \emph{close} inputs map to \emph{indistinguishable} output distributions. Here, ``close'' and ``indistinguishable'' are often given by various metrics and probability distance measures respectively (\S\ref{sec:prelims}). For example, algorithms that compute dataset statistics commonly require that adding or removing any individual's data does not change the probability of outputting any given value by more than a constant factor. More formally, the definition of differential privacy is given as follows.

\begin{definition}[Differential Privacy \cite{dwork2006calibrating, dwork2006our}]
Let $M: \mathcal{X} \rightarrow \mathcal{Y}$ be a randomized function.
We say $M$ is $(\varepsilon,\delta)$-\emph{differentially private} if for every pair of adjacent datasets $x$ and $x'$ and every subset $S \subseteq \mathcal{Y}$
\begin{align*}
\Pr[M(x) \in S] \le e^\varepsilon\cdot \Pr[M(x') \in S] + \delta
\end{align*}
\end{definition}

\noindent When $\delta = 0$ we say that $M$ satisfies \emph{pure} differential privacy. Intuitively, this property guarantees that the result of a statistical analysis is essentially the same regardless of the presence or absence of any individual's data. Notable examples of differential privacy in practice include the US Census Bureau~\cite{abowd2018us, machanavajjhala2008privacy},  Apple~\cite{greenberg2016apple}, Facebook~\cite{messing2020facebook}, and Google~\cite{aktay2020google} who have deployed DP to expand access to sensitive data while protecting individual privacy. DP has also been applied to problems that are not explicitly about privacy-preserving statistics such as private machine learning~\cite{abadi2016deep, papernotscalable}, adaptive data analysis~\cite{bassily2016algorithmic, dwork2015preserving, neel2018mitigating}, anonymous messaging~\cite{lazar2018karaoke,tyagi2017stadium,van2015vuvuzela}, and distributed analytics~\cite{roth2021mycelium, roth2020orchard}.

Unfortunately, implementing differential privacy faithfully is a tricky business. Algorithms that are proven on paper to achieve differential privacy are eventually implemented and ran on hardware with various resource constraints. These constraints can lead to discrepancies between the computational model used in the privacy proof and the actual implementation, potentially invalidating the algorithm's privacy guarantees. For example, many mathematical proofs in the DP literature assume exact arithmetic over the real numbers while their implementations instead use finite-precision floating-point or integer arithmetic. Such inconsistencies have led to faulty implementations of textbook mechanisms that do not actually achieve differential privacy~\cite{casacuberta2022widespread, jin2022we, mironov2012significance}, and prompted new research on differential privacy in finite models of computation~\cite{balcer2018differential, canonne2020discrete}. In a similar vein, real-world implementations of differential privacy are often exposed to observable \emph{side channels} that are not modeled in the definition of differential privacy. For instance, in the online query setting, where an analyst submits a query and receives a differentially private response, the analyst observes not only the query's output but also the query's execution time. Leveraging the timing of operations to learn otherwise protected information is known as a \emph{timing attack}, and prior work has demonstrated that such attacks can be used to violate differential privacy. For example, Haeberlen, Pierce, and Narayan showed how user-defined queries can be used to leak the presence of records in a sensitive dataset via their running time~\cite{haeberlen2011differential}. More recently, Jin, McMurtry, Rubinstein, and Ohrimenko discovered that the runtime of discrete sampling algorithms (specifically, the Discrete Laplace~\cite{ghosh2012universally} and Discrete Gaussian~\cite{canonne2020discrete}) can reveal the magnitude of their sampled noise values~\cite{jin2022we}. Leaking the sampled noise value can result in a violation of DP since it can be used to remove the noise and hence the privacy protection from the output of the algorithm.

Despite some growing awareness of timing attacks on differential privacy, prior work has focused on ad hoc mitigation strategies that typically apply to a single class of differentially private mechanisms\footnote{The differential privacy literature uses the word ``mechanism" to refer to a randomized algorithm. Since our work focuses primarily on actual implementations of such mechanisms within a computational model, we will instead use the term ``programs" throughout the paper.} or database system~\cite{haeberlen2011differential, balcer2018differential, awan2023privacy}. The most commonly suggested strategy has been to enforce constant-time program execution. However, enforcing constant-time execution inherently limits the amount of data that the program can process. The program must either truncate the input (e.g., by performing subsampling) or reveal an upper bound on the accepted dataset size. Constant-time programs also come with a significant performance cost, as fast computations must be padded to take the same amount of time as the slowest possible computation. Additionally, constant-time program execution is overkill. It prevents {\em all} information leakage from the runtime, whereas differential privacy only requires that executions over neighboring inputs are indistinguishable. 

To avoid enforcing constant-time program execution, Haeberlen et al. suggested making the program's runtime differentially private by adding random delay before returning the output~\cite{haeberlen2011differential}. This approach mimics that of adding noise to the program's output to achieve differential privacy. However, this proposal is incomplete because it does not specify 
\begin{enumerate}[label=(\arabic*)]
    \item What it means for a program's runtime to be differentially private? and 
    \item How much delay, and from what distribution, is necessary to achieve that definition of privacy?
\end{enumerate}  

New definitions and theorems are needed to precisely reason about the privacy guarantees of such programs in the presence of timing side channels. 

\subsection{Our Results}
To reason about timing side channels, we need to reason about programs rather than functions. The behavior of programs includes both the program's output and runtime, which can depend on the program's execution environment. This might include, for example, the current state of cache memory or any resource contention caused by concurrent program execution (see Section~\ref{sub-sec:computational-models} below for more on execution environments). For this reason, we explicitly generalize the definition of differential privacy to this setting.
    
\noindent\begin{definition}[$(\varepsilon, \delta)$-Differentially Private Programs]
    \label{def:approx-dp-programs}
        Let $\mathcal{E}$ be the set of possible execution environments for a fixed computational model and let $P: \mathcal{X}\times\mathcal{E}\to \mathcal{Y}\times\mathcal{E}$ be a randomized program.
        We say $P$ is $(\varepsilon,\delta)$-\emph{differentially private} if for every pair of adjacent datasets $x$ and $x'$, every pair of input-compatible execution environments $\env, \env' \in \mathcal{E}$, and every subset $S \subseteq \mathcal{Y}$
        \[
        \Pr[\out(P(x, \env)) \in S] \le e^\varepsilon\cdot \Pr[\out(P(x', \env')) \in S] + \delta
        \]

        where $\out(P(x, \env))$ indicates $P's$ output value $y \in \mathcal{Y}$.
    \end{definition}
    
    \noindent Given this more generalized definition, we define a new notion of privacy with respect to timing attacks. Specifically, our definition asks that the running time of a program is differentially private \emph{conditioned on the output of the program}. For the special case of pure DP, our definition is as follows:

    \begin{definition}[$\varepsilon$-OC-Timing Privacy, special case of Def.~\ref{def:rtprivacy}]\label{def:rt-privacy-simple} Let $P: \mathcal{X} \times\mathcal{E} \rightarrow \mathcal{Y}\times\mathcal{E}$ be a (possibly) randomized program. Then we say that $P$ is $\varepsilon$-\emph{OC-timing-private} if for all adjacent $x, x' \in \mathcal{X}$, all pairs of input-compatible execution environments $\env, \env' \in \mathcal{E}$, all $y \in \supp(\out(P(x, \env))) \cap \supp(\out(P(x', \env')))$, and all $S \subseteq \mathcal{T}$
    \begin{flalign*}
                \Pr[\TP{x, \env} \in S |\out(P(x, \env)) = y] \le e^\varepsilon\cdot  \Pr[\TP{x', \env'} \in S | \out(P(x', \env')) = y]
    \end{flalign*}
        
        where $\TP{x, \env}$ denotes the running time of $P$ on input $x$ in execution environment $\env$, 
    $\mathcal{T}\subseteq\mathbb{R}_{\ge 0}$ represents the units in which we measure time (e.g., $\mathcal{T} = \mathbb{N}$ if we count the number of executed instructions), and $\out(P(x, \env))$ indicates the program's output.
    \end{definition}

    \noindent Intuitively, this definitional approach provides several benefits, one of which is that it enables measuring timing privacy using a different privacy measure than that used to measure output privacy. For example, one can be measured by approximate DP and the other with Rényi DP. This feature is useful since one generally compromises on utility to achieve output privacy, while one can instead compromise on efficiency (longer runtimes) to achieve timing privacy. In contrast, a previous definition of Ben Dov, David, Naor, and Tzalik~\cite{ben2023resistance} (which we rename\footnote{Ben Dov et al. described mechanisms that meet this criteria as \emph{differentially private time oblivious mechanisms}. We will instead use the more informative term \emph{jointly output/timing-private}.}) couples the leakage of the output and the runtime: 

    \begin{definition}[$(\varepsilon, \delta)$-Joint Output/Timing Privacy~\cite{ben2023resistance}, special case of Def.~\ref{def:joint-privacy}]
    \label{def:approx-joint-privacy}
         Let $P: \mathcal{X}\times\mathcal{E} \rightarrow \mathcal{Y}\times\mathcal{E}$ be a (possibly) randomized program. Then we say that $P$ is $(\varepsilon, \delta)$-\emph{jointly output/timing-private} if for all adjacent $x, x' \in \mathcal{X}$, all pairs of input-compatible execution environments $\env, \env' \in \mathcal{E}$, and for all $S \subseteq \mathcal{Y}\times\mathcal{T}$
    \begin{flalign*}
        \Pr[(\out(P(x, \env)), \TP{x, \env}) \in S] \le e^\varepsilon\cdot  \Pr[(\out(P(x', \env')), \TP{x', \env'}) \in S] + \delta
    \end{flalign*}
    \end{definition}
        
    This is the standard definition of DP applied to the joint random variable of the program's output and running time. We show that in the case of pure-DP ($\delta = 0$), joint output/timing privacy is equivalent to output privacy together with our definition of OC-timing privacy up to a constant factor in $\varepsilon$. More generally, our definition together with output privacy implies joint output/timing privacy when both output and timing privacy are measured using the same privacy measure (Lemma~\ref{lemma:timing-privacy-output}).  \\
    
    \noindent We establish a general framework for chaining together \emph{timing-stable programs} (programs where close inputs map to close runtime distributions) with \emph{timing-private delay programs} (programs that add random delay before releasing their output) such that the entire execution is \emph{timing-private}. DP software libraries such as OpenDP~\cite{gaboardi2020programming, Shoemate_OpenDP_Library} and Tumult Core~\cite{berghel2022tumult} support the concatenation of multiple data operations, such as clamping, summation, and adding noise from a distribution, into a unified mechanism. Our framework enables examining how adjacent datasets impact the execution time of modular components within such a chain. A single timing delay can then be applied before returning the mechanism's output to safeguard the entire chain against timing attacks. \\

    \noindent To illustrate how our framework can be instantiated, we provide concrete examples of RAM and Word RAM programs that satisfy our notion of timing privacy when we treat each RAM instruction as a single time step. These include mechanisms for computing a randomized response, private sums, and private means (Section~\ref{sec:chaining}).
    Notably, our mechanisms are often much more efficient than constant-time constructions. We also describe mechanisms that achieve timing privacy in the unbounded DP setting and have accuracy that matches their non-timing-private counterparts. \\

    \noindent To show the compatibility of our framework with existing libraries for differential privacy, we implement a proof-of-concept timing-privacy framework within the OpenDP library~\cite{Shoemate_OpenDP_Library}. 
    Our implementation supports the tracking of timing stability across dataset transformations, and includes an implementation of a timing-private delay program for delaying the output release to achieve timing-privacy.  This proof-of-concept is only meant to show how our framework can be easily implemented on top of existing frameworks, not to provide actual guarantees of privacy against an adversary who can measure physical time, which we leave as an important direction for future work. 

\subsection{Future Work}\label{sec:future-work}
We have shown how one can rigorously reason about timing privacy in idealized computational models (like the RAM and Word-RAM models) where every instruction is assumed to take the same amount of time. On physical CPUs, this assumption does not hold and the varying time for different operations has been used as the basis for past timing attacks~\cite{andrysco2015subnormal}. Nevertheless, we believe that our {\em framework} can be fruitfully applied to physical computations with suitably constrained execution environments. Importantly, our notion of timing stability does not require precise estimates of computing time, but only an {\em upper bound} on the time difference that one individual's data can make on a computation, which seems feasible to estimate in practice.

Another direction for future work is to design timing-private and timing-stable programs for a wider array of mechanisms and transformations from the differential privacy literature.  Some examples of interest would be the Discrete Gaussian Mechanism~\cite{canonne2020discrete} and forms of the Exponential Mechanism~\cite{mcsherry2007mechanism}, as well providing user-level privacy on datasets where a user may contribute many records, and node-level privacy on graphs.  Getting timing privacy for DP algorithms that involve superlinear-time operations such as sorting (e.g. approximate medians via the exponential mechanism~\cite{mcsherry2007mechanism}) or pairwise comparisons (e.g. the DP Theil-Sen linear regression algorithm~\cite{alabi2022differentially}) are also an intriguing challenge, because timing-stable programs must have $O(n)$ runtime (Lemma~\ref{lemma:timing-stable-implies-linear}).

Finally, from a theoretical point of view, it is interesting to explore the extent to which {\em pure} timing privacy is achievable.  Our examples of timing-private programs mostly come from adding random delays to timing-stable programs, which seems to inherently yield approximate DP (i.e. $\delta>0$).   Ben-Dov et al.~\cite{ben2023resistance} show some limitations of what can be achieved with pure DP, but their result does not rule out many useful DP computations (such as a RAM program that randomly samples
$O(1)$ records from a dataset and then carries out a constant-time DP computation on the subsample, which achieves perfect timing privacy even on unbounded-sized datasets).
\section{Preliminaries}
\label{sec:prelims}
\subsection{Models of Computation}\label{sub-sec:computational-models}
The runtime of a program is determined by its computational model. For instance, within the idealized RAM model, runtime is typically measured by the number of executed instructions until halting. Conversely, for a C program running on an x86 architecture, runtime might be quantified in CPU cycles, though the same program on identical hardware can exhibit varied runtimes due to differences in microcode patches \cite{stallings2003computer}. Moreover, runtime can be affected by non-deterministic factors like cache interference from concurrent executions on modern CPUs. Thus, we allow a program's output and runtime to depend on an execution environment $\env \in \mathcal{E}$, where an execution environment is defined to be any state stored by the computational model that affects program execution (e.g., the contents of uninitialized memory, the program's input, the initial values stored in CPU registers, etc.). We consider the program's input to be part of the program's execution environment and say that $\supp(P(x, \env)) = \emptyset$ if the input and environment are \emph{incompatible}.

\begin{definition}[Input/Environment Compatibility]
    Our computational models come with an input/environment compatibility relation, and when an environment $\env \in \mathcal{E}$ is incompatible with an input $x\in\mathcal{X}$ for program $P$, we define $\supp(P(x, \env)) = \emptyset$.
\end{definition}

The compatibility of a given $(x, \env)$ pair depends on the computational model. For example, the calling convention for C functions on modern x86 hardware require the EBP and ESP registers, which indicate the base and top of the stack, to specify the function's stack frame in memory. If these registers point to conflicting memory locations or do not contain the correct setup for the function's input arguments, the execution environment would be incompatible with the function's input. We give further examples of such incompatible input/environment pairs below when discussing the RAM model below.

\begin{definition}[Program Execution]
A randomized program $P : \mathcal{X}\times\mathcal{E}\to \mathcal{Y}\times\mathcal{E}$ takes in an input $x \in \mathcal{X}$, and an execution environment $\env \in \mathcal{E}$, and outputs a value $y \in \mathcal{Y}$ denoted by the random variable $\out(P(x, \env))$. The execution of $P$ may also induce changes on the environment, and we indicate the new state of the environment as the random variable $\out\env(P(x, \env))$. 
\end{definition}

\begin{definition}[Program Runtime]
 Suppose that our computational model's runtime is measured in units $\mathcal{T}\subseteq\mathbb{R}_{\ge 0}$ (e.g., $\mathcal{T} = \mathbb{N}$ if runtime measures the number of executed instructions). Then we denote the \emph{runtime} of program $P:\mathcal{X}\times\mathcal{E}\to\mathcal{Y}\times\mathcal{E}$ within execution environment $\env$ on input $x$ to be the random variable $T_P(x, \env) \in \mathcal{T}$.
\end{definition}

We can drop the environment $\env$ from our notation when a program's output and runtime is independent of the environment. We call such programs \emph{pure} to follow standard programming languages terminology, but not to be confused with pure DP. 

\begin{definition}[Output-Pure Programs]
\label{def:output-pure-programs}
     A program $P:\mathcal{X}\times\mathcal{E}\to\mathcal{Y}\times\mathcal{E}$ is \emph{output-pure} if there exists a (possibly randomized) function $f:\mathcal{X}\to\mathcal{Y}$ such that for all $x\in\mathcal{X}$ and all input-compatible $\env\in\mathcal{E}$, $f(x)$ is identically distributed to $\out(P(x, \env))$.
\end{definition}

\begin{definition}[Timing-Pure Programs]\label{def:timing-pure-programs}
 A program $P:\mathcal{X}\times\mathcal{E}\to\mathcal{Y}\times\mathcal{E}$ is \emph{timing-pure} if there exists a (possibly randomized) function $f:\mathcal{X}\to\mathcal{T}$ such that for all $x \in\mathcal{X}$, and all input-compatible $\env \in\mathcal{E}$, $f(x)$ is identically distributed to $\TP{x, \env}$.
\end{definition}

We remark that the standard definition of differential privacy implicitly assumes that programs are \emph{output-pure}, and in fact, its composition and post-processing theorems fail if that is not the case.  \\

\noindent\textbf{RAM Model}. In this work, we will use the idealized RAM and Word RAM models of computation to give proof-of-concept results on how our framework can be used. Applying our framework to physical hardware implementations is an important challenge for future work. The RAM model includes an infinite sequence of memory cells, each capable of holding an arbitrarily large natural number. The model stores variables in registers and supports a set of basic operations for performing arithmetic, logical, and memory operations. Programs also include instructions for conditional jumps (\textbf{if} CONDITIONAL \textbf{goto} LINE) which allow simulation of the standard \textbf{if}, \textbf{else}, and \textbf{while} expressions which we freely use.\footnote{We define condition matching in branching expressions to take $1$ instruction.} Additionally, many of our RAM programs are randomized and use a $\texttt{RAND}(n)$ instruction to sample a uniform random integer in $\{0, \dots, n\}$. We define the runtime of a RAM program to be the number of basic instructions executed before halting, which is indicated by a $\texttt{HALT}$ command. That is, the space of runtime values  $\mathcal{T}$ is equal to the natural numbers $\mathbb{N}$. The model includes a built-in variable $\texttt{input\_len}$ that indicates the length $|x|$ of the program's input $x$, as well as a built-in variable $\texttt{input\_ptr}$ that indicates where in memory the input resides. Additionally, every program writes their output to some location in memory indexed by the variable $\texttt{output\_ptr}$ and sets a variable $\texttt{output\_len}$ to indicate the length of the output.

RAM programs should not make assumptions about the values loaded in memory locations outside the locations where the input resides, and are responsible for initializing any such memory that they use. This specification aligns with the behavior observed in, for instance, C programs, wherein it cannot be presumed that newly allocated memory through a \texttt{malloc}\footnote{\texttt{malloc} is the standard C library function for dynamic memory allocation.} instruction is initialized to zero. We remark that this specification introduces an element of non-determinism to our model since a program may read and use uninitialized memory values. The space of environments $\mathcal{E}$ for a RAM program is therefore the set of possible memory configurations as well as the initial values stored by the built-in variables $\texttt{input\_ptr}, \texttt{input\_len}, \texttt{output\_ptr}, $ and $\texttt{output\_len}$. Some environments may not be relevant for a given input, e.g., the environment corresponding to all memory locations being uninitialized is not possible for programs whose input lengths are greater than zero. Similarly, compatible $(x, \env)$ pairs have the property that the memory locations of $\env$ indicated by the variables $\texttt{input\_ptr}$ and $\texttt{input\_len}$ should contain the input $x$. For all such incompatible pairs $(x, \env)$, $P(x, \env)$ is undefined and so we say that $\supp(P(x, \env)) = \emptyset$ in Definition~\ref{def:rt-privacy-simple} and elsewhere. Finally, most of our RAM programs are \emph{pure} and will not interact with uninitialized memory, and therefore both the program's output and running time will be independent of its execution environment for all compatible $(x, \env)$ pairs. \\

\begin{definition}[RAM Environment]
The \emph{environment} $\env$ of a RAM program is the infinite sequence $(v_{0}, v_{1}, \dots, )$ such that $M[i] = v_i$ for all $i$ along with the values stored by the built-in variables \texttt{input\_ptr}, \texttt{input\_len}, \texttt{output\_ptr}, and \texttt{output\_len}. 
\end{definition}

\begin{remark}
    A program's environment includes the program's input $x$ by definition, and similarly, a program's output environment $\out\env(P(x, \env))$ includes the program's output $\out(P(x, \env))$. \\
\end{remark}

\noindent\textbf{Word RAM.} We will use a more realistic $\omega$-bit Word RAM model for describing programs that operate on bounded-length values. The word size $\omega$ corresponds to the bitlength of values held in memory and variables, effectively capping the total addressable memory to $2^\omega$. This limitation arises because memory access, denoted by $M[\texttt{var}]$, relies on the size of $\texttt{var}$, which cannot exceed $2^\omega - 1$. We fix the word size $\omega$ up front which bounds the worst-case dataset size.  Therefore the model does not allow inputs to grow unboundedly. This constraint is principally driven by the fact that computers in the real world are equipped with finite memory and a pre-determined word size. Furthermore, allowing the word size to vary with the length of the input (e.g., it is standard in the algorithms literature $\omega = \Theta(\log n)$ and allow $n\to\infty$) introduces an additional side channel within the computational framework. In particular, the output itself is given as a sequence of $\omega$-bit words, which reveals information about the input length. For these reasons, we treat the word size as a separate public parameter which implies an upper bound on the input length. 


\subsection{Datasets, Distance Metrics, and Privacy Measures}
Differential privacy (and timing privacy, Definition~\ref{def:rt-privacy-simple}) is defined with respect to a dataset space $\mathcal{X}$. Typically, datasets consist of $n\ge 0$ records drawn from a row-domain $\mathcal{D}$, i.e., $\mathcal{X} = \bigcup_{n=0}^\infty \mathcal{D}^n$. We take elements of $\mathcal{D}$ as consisting of an individual's data. As such, $\mathcal{D}$ often consists of $d$-dimensional entries, for example, in tabular data where each individual has $d$ attributes. Thus, in the RAM model we take $\mathcal{D} = \mathbb{N}^d$ and in the Word RAM model $\mathcal{D} = \{0, \dots, 2^\omega - 1\}^d$. Since input lengths are bounded by $2^\omega$ in Word RAM, we only consider datasets of size at most some $n_{\text{max}} < 2^{\omega/d}$. 

Following the OpenDP Programming Framework~\cite{gaboardi2020programming}, which is also the basis of Tumult Core~\cite{berghel2022tumult}, our timing privacy and stability definitions work with arbitrary metrics on input data(sets) and arbitrary measures of privacy. The dataset metrics are arbitrary, user-defined metrics that are not required to satisfy the standard properties of non-negativity, symmetry, and triangle inequality as in the standard mathematical definition of a metric. Two common choices for dataset distance metrics are the \emph{Hamming distance} and \emph{insert-delete distance}.

\begin{definition}
    [Hamming Distance]\label{def:hamming-metric}
    Let $x, x' \in \mathcal{D}^*$ be datasets. The \emph{Hamming distance} denoted $\dham$ of $x$ and $x'$ is
    \begin{align*}
        \dham(x, x') = 
        \begin{cases}
            \#\{i : x_i \ne x'_i \} & \textrm{if } |x| = |x'| \\
            \infty & \textrm{otherwise}
        \end{cases}
    \end{align*}
\end{definition}

\begin{definition}[Insert-Delete Distance]
For $x \in \mathcal{D}^*$, an insertion to $x$ is an addition of an
element $z$ to a location in $x$ resulting in a new dataset $x' = [x_1, \dots, x_i, z, x_{i+1}, \dots, x_n]$.
Likewise, a deletion from $x$ is the removal of an element from some location within $x$, resulting in a new dataset $x' = [x_1, \dots, x_{i - 1}, x_{i + 1}, \dots, x_n]$. The \emph{insert-delete distance} denoted $\did$ of datasets $x, x' \in \mathcal{D}^*$ is the minimum number of insertion and deletion operations needed to change $x$ into $x'$.
\end{definition}

We focus our attention on ordered distance metrics since the running time of a program often depends on the ordering of its input data, e.g., if the algorithm starts by sorting its input. 

We say that two datasets $x$ and $x'$ are \emph{adjacent} with respect to dataset distance metric $d_\mathcal{X}$ if $d_{\mathcal{X}}(x, x') \le 1$. When $d_{\mathcal{X}}=\dham$, this notion of adjacency captures ``bounded differential privacy,'' where the dataset size $n$ is known and public.  When $d_\mathcal{X}=\did$, it captures ``unbounded differential privacy,'' where the dataset size itself may be unknown. We will also use what we call ``upper-bounded differential privacy,'' where we assume a known and public upper bound $n_{\textrm{max}}$ on the dataset size, i.e., $\mathcal{X} = \mathcal{D}^{\le n_{\textrm{max}}} = \bigcup_{n=0}^{n_{\textrm{max}}} \mathcal{D}^n$ with metric $d_\mathcal{X} = \did$ so that the exact dataset size is unknown and needs to be protected with DP. This, for example, arises in the $\omega$-bit Word RAM model of computation where $n_{\textrm{max}} \le 2^{\omega / d}$ for $d$-dimensional datasets. We note that these metrics are appropriate for tabular datasets where each record corresponds to one individual's data. Our framework can also be instantiated for other data domains and metrics (e.g. user-level DP in a dataset of events or graph DP with node privacy), but we leave designing algorithms for these settings as future work. 

A core feature of several DP systems and libraries is the ability to combine various component functions into more complex differentially private mechanisms~\cite{mcsherry2009privacy, haeberlen2011differential, gaboardi2020programming}. Analyzing the \emph{stability} of the various component functions is a useful method for understanding the privacy implications of arbitrarily combining the functions~\cite{mcsherry2009privacy}. The stability of a deterministic program characterizes the relationship between its input and output distances. Executing the program on ``close'' inputs will result in outputs that are also ``close.'' To define stability for randomized programs, we'll need the notion of couplings. 

\begin{definition}[Coupling]
A \emph{coupling} of two random variables $r$ and $r'$ taking values in sets $\mathcal{R}$ and $\mathcal{R}'$ respectively, is a joint random variable $(\Tilde{r}, \Tilde{r}')$ taking values in $\mathcal{R} \times \mathcal{R}'$ such that $\Tilde{r}$ has the same marginal distribution as $r$ and $\Tilde{r}'$ has the same marginal distribution as $r'$. 
\end{definition}

Couplings allow us to extend the notion of stability to randomized programs by measuring stability in terms of distributions rather than fixed outcomes. Specifically, the stability of a randomized program can be interpreted as a worst-case analogue of the Wasserstein\footnote{The Wasserstein distance between probability distributions $Q$ and $S$ on metric space $(\mathcal{Y}, d_\mathcal{Y})$ is defined as:
    \begin{align*}
        W_1(Q, S) &= \inf_{\gamma \in \Gamma} \mathbb{E}_{(q, s) \sim \gamma} d_\mathcal{Y}(q, s)
    \end{align*}
where $\Gamma$ is the set of all couplings of $Q$ and $S$.} distance between the program's output distributions on close inputs.

We now generalize the notion of stability to programs.

\begin{definition}[Output-Stable Programs]
    A program $P: \mathcal{X}\times\mathcal{E}\to\mathcal{Y}\times\mathcal{E}$ is $(\din \mapsto \dout)$-\emph{output stable} with respect to input metric $d_\mathcal{X}$ and output metric $d_\mathcal{Y}$ if $\forall x, x' \in \mathcal{X}$ such that $d_\mathcal{X}(x, x')\le \din$, and all pairs of input-compatible $\env, \env' \in \mathcal{E}$, there exists a coupling $(\Tilde{y}, \Tilde{y}')$ of the random variables $(\out(P(x, \env)), \out(P(x', \env')))$ such that $d_\mathcal{Y}(\Tilde{y}, \Tilde{y}')\le \dout$ with probability $1$.
\end{definition}

As an example, consider an output-pure program $P: \mathcal{X}\times\mathcal{E}\to\mathcal{Y}\times\mathcal{E}$ that takes an input dataset $x \in\{0, 1\}^n$ and outputs the sum $y = \sum_{i = 1}^n x_i$. Such a program would be $(1\mapsto 1)$-\emph{output stable} under the input distance metric $\dham$ and output distance metric $d_{\mathbb{N}}(y, y') = |y - y'|$, since changing a row's value in the input can affect the output sum by at most $1$. The program $P$ can therefore be made differentially private (Definition~\ref{def:dp-programs}) by chaining it with another program that adds Laplace noise with scale $1/\varepsilon$ to the output (see Section~\ref{sec:chaining} for a formal definition of program chaining and Lemma~\ref{lemma:noisy-sum-output-private} for a RAM program that computes a DP sum).

We now introduce \emph{privacy measures} which are arbitrary distances between probability distributions.

\begin{definition}[Generalized Privacy Measures]
    A \emph{privacy measure} is a tuple $(\mathcal{M}, \le, M)$ where $(\mathcal{M}, \le)$ is a partially-ordered set, and $M$ is a mapping of two random variables $X$ and $X'$ over the same measurable space to an element $M(X, X') \in \mathcal{M}$ satisfying:
    \begin{enumerate}
        \item (\emph{Post-processing}). For every function $g$, $M(g(X), g(X')) \le M(X, X')$
        \item (\emph{Joint convexity}). For any collection of random variables $(X_i, X'_i)_{i \in\mathcal{I}}$ and a random variable $I\sim \mathcal{I}$, if $M(X_i, X'_i)\le c$ for all $i$, then $M(X_I, X'_I) \le c$.
    \end{enumerate}
\end{definition}

We will frequently refer to a privacy measure only by the mapping $M$, where $\mathcal{M}$ and $\le$ are implicit. Some useful examples of privacy measures are \emph{pure} and \emph{approximate} DP which use the \emph{max-divergence} and \emph{smoothed max-divergence} as their respective distance mappings.

\begin{definition}[Max-Divergence]\label{def:max-divergence}
    The \emph{max-divergence} between two random variables $Y$ and $Z$ taking values from the same domain is defined to be:
    \[
        D_{\infty}(Y||Z) = \max_{S \subseteq \textrm{\supp}(Y)} \left[\ln \dfrac{\Pr[Y \in S]}{\Pr[Z \in S]} \right]
    \]
\end{definition}

\begin{lemma}
    A program $P: \mathcal{X}\times\mathcal{E}\to\mathcal{Y}\times\mathcal{E}$ is $\varepsilon$-differentially private if and only if for every pair of adjacent datasets $x$ and $x'$, and every pair of input-compatible execution environments $\env, \env'\in\mathcal{E}$, $D_{\infty}(\out(P(x, \env))||\out(P(x', \env'))) \le \varepsilon$ and $D_{\infty}(\out(P(x', \env'))||\out(P(x, \env))) \le \varepsilon$.
\end{lemma}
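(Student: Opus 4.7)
The plan is to show this as a direct unpacking of definitions, with no real machinery beyond taking logarithms and handling a small measure-theoretic subtlety at zero. I begin by specializing Definition~\ref{def:approx-dp-programs} with $\delta=0$: $P$ is $\varepsilon$-DP iff for every adjacent pair $(x,x')$, every input-compatible pair $(\env,\env')$, and every $S\subseteq\mathcal{Y}$,
\[
\Pr[\out(P(x,\env))\in S] \;\le\; e^\varepsilon\cdot \Pr[\out(P(x',\env'))\in S].
\]
Because adjacency is symmetric in the two datasets and the quantifier ranges over ordered pairs $(\env,\env')$, this inequality must hold in both orderings of $(x,\env)$ and $(x',\env')$.

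For the forward direction, fix adjacent $x,x'$ and input-compatible $\env,\env'$, and write $Y=\out(P(x,\env))$, $Z=\out(P(x',\env'))$. For every $S\subseteq \supp(Y)$, the DP inequality gives $\Pr[Y\in S]\le e^\varepsilon\Pr[Z\in S]$, hence $\ln(\Pr[Y\in S]/\Pr[Z\in S])\le \varepsilon$ whenever $\Pr[Z\in S]>0$; taking the supremum over such $S$ yields $D_\infty(Y\|Z)\le \varepsilon$ by Definition~\ref{def:max-divergence}. Applying the same argument with the roles of $(x,\env)$ and $(x',\env')$ swapped gives $D_\infty(Z\|Y)\le \varepsilon$.

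For the reverse direction, assume both max-divergences are at most $\varepsilon$ and let $S\subseteq\mathcal{Y}$ be arbitrary. Set $S' = S\cap \supp(Y)$, so that $\Pr[Y\in S]=\Pr[Y\in S']$ and $\Pr[Z\in S']\le \Pr[Z\in S]$. Since $S'\subseteq \supp(Y)$ is an admissible event in the supremum defining $D_\infty(Y\|Z)$, the bound $D_\infty(Y\|Z)\le \varepsilon$ yields $\Pr[Y\in S']\le e^\varepsilon \Pr[Z\in S']$, and chaining the two inequalities gives $\Pr[Y\in S]\le e^\varepsilon\Pr[Z\in S]$. Applying the symmetric bound $D_\infty(Z\|Y)\le\varepsilon$ the same way covers the ordering with $(x',\env')$ first. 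Thus $P$ satisfies Definition~\ref{def:approx-dp-programs} with $\delta=0$.

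The only mild obstacle is the degenerate case $\Pr[Z\in S']=0$ with $\Pr[Y\in S']>0$: here the ratio would be $+\infty$, so $D_\infty(Y\|Z)\le\varepsilon$ already rules this out, which is precisely why the restriction $S\subseteq\supp(Y)$ in Definition~\ref{def:max-divergence} gives exactly the right equivalence. Everything else is formal manipulation, so I expect the proof to be a short three- or four-line argument in each direction.
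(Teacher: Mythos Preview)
Your proposal is correct: the lemma is a direct translation between the $\delta=0$ case of Definition~\ref{def:approx-dp-programs} and the max-divergence bound, and your handling of the support restriction and the degenerate $\Pr[Z\in S']=0$ case is exactly what is needed. The paper itself states this lemma without proof, treating it as a standard fact from the differential privacy literature, so there is no approach to compare against; your write-up supplies the routine verification the paper omits.
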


\begin{definition}[Smoothed Max-Divergence]\label{def:smooth-max-divergence}
    The \emph{smoothed max-divergence} between $Y$ and $Z$ is defined to be:
    \[
        D^{\delta}_{\infty}(Y||Z) = \max_{S \subseteq \supp(Y): \Pr[Y \in S] \geq \delta} \left[\ln \dfrac{\Pr[Y \in S]-\delta}{\Pr[Z \in S]} \right]
    \]
\end{definition}

\begin{lemma}\label{lemma:smoothed-divergence-dp}
    A program $P :\mathcal{X}\times\mathcal{E}\to\mathcal{Y}\times\mathcal{E}$ is $(\epsilon, \delta)$-differentially private if and only if for every pair of adjacent datasets $x$ and $x'$, and every pair of input-compatible execution environments $\env, \env' \in\mathcal{E}$, 
    \[
    D_{\infty}^\delta(\out(P(x, \env))||\out(P(x', \env'))) \leq \varepsilon
    \] and 
    \[
    D_{\infty}^\delta(\out(P(x', \env'))||\out(P(x, \env))) \leq \varepsilon
    \]
\end{lemma}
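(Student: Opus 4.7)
The plan is to prove both directions by directly unpacking Definitions~\ref{def:approx-dp-programs} and~\ref{def:smooth-max-divergence}. Fix adjacent datasets $x, x'$ and input-compatible environments $\env, \env'$, and abbreviate $Y = \out(P(x, \env))$ and $Z = \out(P(x', \env'))$; I will prove $D_\infty^\delta(Y\|Z)\le\varepsilon$ iff $\Pr[Y\in S]\le e^\varepsilon\Pr[Z\in S]+\delta$ for all $S$, and then the claim follows by applying the same equivalence with the roles of $(x,\env)$ and $(x',\env')$ swapped.

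For the ($\Rightarrow$) direction, assume the DP inequality holds for every $S\subseteq\mathcal{Y}$. Take any $S$ with $\Pr[Y\in S]\ge\delta$. Rearranging gives $\Pr[Y\in S]-\delta \le e^\varepsilon\Pr[Z\in S]$. If $\Pr[Z\in S]>0$, dividing and taking logs shows the quantity inside the $\max$ in Definition~\ref{def:smooth-max-divergence} is at most $\varepsilon$. If $\Pr[Z\in S]=0$, then the DP inequality forces $\Pr[Y\in S]\le\delta$, combined with the constraint giving equality, so $\Pr[Y\in S]-\delta=0$ and the ratio is $0/0$, which we treat as $0$ by convention (equivalently, $\ln$ of it is $-\infty\le\varepsilon$). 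Taking the max over all such $S$ yields $D_\infty^\delta(Y\|Z)\le\varepsilon$.

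For the ($\Leftarrow$) direction, assume $D_\infty^\delta(Y\|Z)\le\varepsilon$ and fix an arbitrary $S\subseteq\mathcal{Y}$. If $\Pr[Y\in S]<\delta$, then $\Pr[Y\in S]<\delta\le e^\varepsilon\Pr[Z\in S]+\delta$, and the DP inequality holds trivially. Otherwise $\Pr[Y\in S]\ge\delta$, so $S$ is in the range over which the $\max$ in Definition~\ref{def:smooth-max-divergence} is taken, yielding $\ln\bigl((\Pr[Y\in S]-\delta)/\Pr[Z\in S]\bigr)\le\varepsilon$ (interpreting the boundary case as above); rearranging gives $\Pr[Y\in S]\le e^\varepsilon\Pr[Z\in S]+\delta$, which is the DP inequality for this $S$.

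There is no substantive obstacle here: the lemma is essentially a rearrangement of Definitions~\ref{def:approx-dp-programs} and~\ref{def:smooth-max-divergence}, and the only care required is in handling the boundary cases where $\Pr[Z\in S]=0$ or $\Pr[Y\in S]$ is exactly $\delta$. The proof carries over verbatim from the classical ``function'' setting (e.g., as in Dwork--Roth) because the extension to programs only adds the quantification over input-compatible environments, which plays no role in the algebraic manipulation.
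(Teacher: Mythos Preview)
The paper does not provide a proof of this lemma; it is stated as a standard characterization (essentially the one in Dwork--Roth) and used as background. Your proof is correct and is exactly the standard argument, so there is nothing to compare against.
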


Other examples of privacy measures that can be used in our framework are concentrated DP~\cite{dwork2016concentrated,bun2016concentrated}, R\'enyi DP~\cite{mironov2017renyi}, and $f$-DP~\cite{dong2022gaussian}. We give a general definition of differential privacy for arbitrary input metrics and privacy measures. 

\begin{definition}\label{def:dp-programs}
    A program $P: \mathcal{X} \times \mathcal{E} \to \mathcal{Y} \times \mathcal{E}$ is $(\din \mapsto \dout)$-differentially private with respect to input metric $d_\mathcal{X}$ and privacy measure $M$ if for every pair of datasets $x, x' \in \mathcal{X}$ satisfying $d_\mathcal{X}(x, x')\le \din$ and every pair of input-compatible execution environments $\env, \env' \in \mathcal{E}$, $M(\out(P(x, \env)), \out(P(x', \env'))) \le \dout$.
\end{definition}

Definition~\ref{def:approx-dp-programs} is a special case of Definition~\ref{def:dp-programs} by replacing setting the privacy measure $M$ to be the smoothed max-divergence (Definition~\ref{def:smooth-max-divergence}), $\dout = (\varepsilon, \delta)$, $\din = 1$, and $d_\mathcal{X}$ is a dataset distance metric such as $\did$ or $\dham$.

Finally, we remark that $(\varepsilon,\delta)$-differentially private algorithms have the following properties.

\begin{lemma}[Post-processing \cite{dwork2006calibrating}]\label{lem:dp postprocess}
Let $M: \mathcal{X} \rightarrow \mathcal{Y}$ be an $(\varepsilon,\delta)$-differentially private function and $f: \mathcal{Y} \rightarrow \mathcal{Z}$ be a (possibly) randomized function.
Then $f\circ M: \mathcal{X} \rightarrow \mathcal{Z}$ is $(\varepsilon, \delta)$-differentially private.
\end{lemma}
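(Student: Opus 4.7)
The plan is to reduce the randomized case to the deterministic case by conditioning on the internal coins of $f$. First I would handle deterministic $f$: given any subset $S \subseteq \mathcal{Z}$ and any adjacent $x, x' \in \mathcal{X}$, take the preimage $T = f^{-1}(S) \subseteq \mathcal{Y}$. Since $\{f(M(x)) \in S\} = \{M(x) \in T\}$ as events, I can invoke the assumed $(\varepsilon,\delta)$-DP of $M$ directly on the set $T$ to conclude
\[
\Pr[f(M(x)) \in S] = \Pr[M(x) \in T] \le e^{\varepsilon}\Pr[M(x') \in T] + \delta = e^{\varepsilon}\Pr[f(M(x')) \in S] + \delta.
\]

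For the randomized case, I would write $f$ as a family $\{f_r\}$ of deterministic functions indexed by its internal randomness $r$, drawn from some distribution $R$ independent of $M$. Fixing $r$, define $T_r = f_r^{-1}(S)$. The deterministic argument above gives $\Pr[M(x) \in T_r] \le e^{\varepsilon}\Pr[M(x') \in T_r] + \delta$ for every $r$. Then averaging over $r \sim R$ (using independence of $R$ from $M$) yields
\[
\Pr[f(M(x)) \in S] = \mathbb{E}_{r \sim R}[\Pr[M(x) \in T_r]] \le e^{\varepsilon}\,\mathbb{E}_{r \sim R}[\Pr[M(x') \in T_r]] + \delta,
\]
and the right-hand side equals $e^{\varepsilon}\Pr[f(M(x')) \in S] + \delta$, completing the argument. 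Alternatively, one can appeal to the joint convexity property of the smoothed max-divergence (Definition~\ref{def:smooth-max-divergence}) that is built into the generalized privacy measure framework: each deterministic $f_r$ satisfies post-processing by the preimage argument, and joint convexity lifts this to the mixture.

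The only real subtlety is a measure-theoretic one: making sure that $f$ can be represented as a measurable family $r \mapsto f_r$ so that the preimages $T_r$ are measurable and the averaging step is justified. For the standard setting where $\mathcal{Y}, \mathcal{Z}$ are discrete or standard Borel, this is routine, and in the DP literature it is typically taken for granted. I do not expect any genuine obstacle beyond bookkeeping; the proof is essentially a one-line application of DP to a preimage set, with an integration over the auxiliary randomness of $f$.
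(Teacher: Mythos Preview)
Your argument is correct and is exactly the standard proof of post-processing for $(\varepsilon,\delta)$-DP: pull back the event through a deterministic $f$ via preimages, then average over the coins of a randomized $f$. There is nothing to compare against here, since the paper does not prove this lemma at all; it is stated as a cited background result from \cite{dwork2006calibrating} with no accompanying proof. Your write-up would serve perfectly well as the omitted justification.
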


\begin{lemma}[Composition \cite{dwork2006our}]\label{lem:dp composition}
Let $M_1: \mathcal{X} \rightarrow \mathcal{Y}_1$ be a $(\varepsilon_1,\delta_1)$-differentially-private function and $M_2: \mathcal{X} \rightarrow \mathcal{Y}_2$ be $(\varepsilon_2,\delta_2)$-differentially-private function. Then $(M_1\otimes M_2)(x) = (M_1(x), M_2(x))$ is a $(\varepsilon_1 + \varepsilon_2, \delta_1 + \delta_2)$-differentially-private function.
\end{lemma}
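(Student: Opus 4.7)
The plan is to reduce the joint statement to a one-dimensional calculation and then invoke each DP guarantee in sequence. Fix adjacent datasets $x, x'$ and a measurable set $S \subseteq \mathcal{Y}_1 \times \mathcal{Y}_2$. Assuming $M_1$ and $M_2$ are run with independent random coins (implicit in writing them as separate mechanisms), I would slice $S$ along its first coordinate and write
\[
\Pr[(M_1(x), M_2(x)) \in S] = \mathbb{E}_{y_1 \sim M_1(x)}\bigl[\Pr[M_2(x) \in S_{y_1}]\bigr],
\]
where $S_{y_1} = \{y_2 : (y_1, y_2) \in S\}$. This reduction decouples the two mechanisms so that each DP hypothesis can be applied separately.

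Next, for each fixed $y_1$, apply the $(\varepsilon_2, \delta_2)$-DP of $M_2$ to the slice $S_{y_1}$ to get $\Pr[M_2(x) \in S_{y_1}] \le e^{\varepsilon_2}\Pr[M_2(x') \in S_{y_1}] + \delta_2$. Setting $g(y_1) := \Pr[M_2(x') \in S_{y_1}] \in [0,1]$, it remains to bound $\mathbb{E}_{y_1 \sim M_1(x)}[g(y_1)]$ by $\mathbb{E}_{y_1 \sim M_1(x')}[g(y_1)]$. Here I would use the fact that $(\varepsilon_1, \delta_1)$-DP extends from sets to expectations of $[0,1]$-valued functions via the layer-cake identity $\mathbb{E}[g(Y)] = \int_0^1 \Pr[g(Y)\ge t]\,dt$ applied to each threshold set $\{y_1 : g(y_1)\ge t\}$, which yields $\mathbb{E}_{y_1 \sim M_1(x)}[g(y_1)] \le e^{\varepsilon_1}\mathbb{E}_{y_1 \sim M_1(x')}[g(y_1)] + \delta_1$.

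Chaining the two steps gives $\Pr[(M_1(x), M_2(x)) \in S] \le e^{\varepsilon_1+\varepsilon_2}\Pr[(M_1(x'), M_2(x'))\in S] + e^{\varepsilon_2}\delta_1 + \delta_2$. The main obstacle is the $\delta$-accounting: the naive chain above carries an extra $e^{\varepsilon_2}$ factor on $\delta_1$, whereas the lemma claims the sharper $\delta_1 + \delta_2$. To eliminate this slack, I would switch to the privacy-loss random variable (PLR) viewpoint: by independence of coins, the joint PLR at $(y_1, y_2)$ decomposes additively as $L(y_1, y_2) = L_1(y_1) + L_2(y_2)$, so a union bound gives $\Pr[L > \varepsilon_1+\varepsilon_2] \le \Pr[L_1 > \varepsilon_1] + \Pr[L_2 > \varepsilon_2]$. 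Translating this tail bound on the joint PLR back into an $(\varepsilon_1+\varepsilon_2, \delta_1+\delta_2)$-DP statement uses the standard equivalence (up to the usual measure-theoretic care) between approximate DP and tail bounds on the privacy loss, and this translation step is where one must be most careful.
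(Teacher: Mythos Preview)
The paper does not give its own proof of this lemma; it is stated as a standard preliminary with a citation to \cite{dwork2006our}. So there is nothing to compare against, and I will just assess your argument on its own.

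Your slicing-plus-layer-cake calculation is fine and correctly arrives at $e^{\varepsilon_1+\varepsilon_2}\Pr[\cdot]+e^{\varepsilon_2}\delta_1+\delta_2$. The gap is in your proposed repair. You write that the joint privacy loss decomposes as $L=L_1+L_2$ and then apply a union bound $\Pr[L>\varepsilon_1+\varepsilon_2]\le \Pr[L_1>\varepsilon_1]+\Pr[L_2>\varepsilon_2]$, intending to bound each summand by $\delta_i$. That last step fails: $(\varepsilon,\delta)$-DP does \emph{not} imply $\Pr_{y\sim M(x)}[L(y)>\varepsilon]\le \delta$. The implication only goes the other way (a tail bound on $L$ implies approximate DP). A concrete counterexample: take $P$ a point mass at $1$ and $Q=\mathrm{Bern}(1/2)$; then $D_\infty^{1/2}(P\|Q)=0$, i.e.\ $(0,1/2)$-indistinguishability holds, yet $\Pr_P[L>0]=\Pr_P[y=1]=1$, not $\le 1/2$. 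So your union-bound route cannot recover $\delta_1+\delta_2$ from the hypotheses as stated.

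The standard way to get the sharp additive $\delta_1+\delta_2$ is not via PLR tails but via the distributional characterization (e.g.\ Lemma~3.17 in Dwork--Roth): $D_\infty^\delta(Y\|Z)\le\varepsilon$ iff there exists $Y'$ with $\Delta(Y,Y')\le\delta$ and $D_\infty(Y'\|Z)\le\varepsilon$. Apply this to $M_1(x)$ versus $M_1(x')$ to obtain $M_1'(x)$ that is $\delta_1$-close in total variation to $M_1(x)$ and satisfies \emph{pure} $\varepsilon_1$-indistinguishability from $M_1(x')$. Now rerun your slicing argument with $M_1'(x)$ in place of $M_1(x)$: the layer-cake step incurs no additive term (pure DP), so you get $e^{\varepsilon_1+\varepsilon_2}\Pr[\cdot]+\delta_2$ for $(M_1'(x),M_2(x))$, and the $\delta_1$ TV gap between $(M_1(x),M_2(x))$ and $(M_1'(x),M_2(x))$ contributes the remaining $\delta_1$ additively. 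That is the missing idea your proposal needs.
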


\section{Timing-Stable Programs}
\label{sec:rtstability}
In this section, we introduce new stability definitions for program runtime. Analogous to how global sensitivity determines an amount of noise that suffices for achieving differential privacy in randomized functions, we present a similar notion that determines an added runtime delay that suffices for ensuring timing privacy in randomized programs.

\subsection{Timing Stability}

\begin{definition}[Timing Stability]
Let $P: \mathcal{X} \rightarrow \mathcal{Y}$ be a (possibly randomized) program and $d_{\mathcal{X}}$ a metric on $\mathcal{X}$. Then we say that $P$ is $(\din\mapsto\tout)$-\emph{timing-stable} with respect to $d_{\mathcal{X}}$ if $\forall x, x'\in \mathcal{X}$ satisfying $d_{\mathcal{X}}(x, x') \le \din$, and all pairs of input-compatible $\mathtt{env}, \mathtt{env}' \in \mathcal{E}$, $\exists$ a coupling $(\Tilde{r}, \Tilde{r}')$ of $\TP{x, \mathtt{env}}$ and $\TP{x', \mathtt{env}'}$ such that $|\Tilde{r} -  \Tilde{r}'| \le \tout$ with probability $1$.
\end{definition}

Programs that exhibit timing stability ensure that changes in their inputs only cause bounded changes in their runtime distributions. However, this property is less useful for addressing privacy concerns when the program's output is also made available. For instance, consider the execution tree of a program that returns a randomized response to a Boolean input, as shown in Figure~\ref{fig:randresp}. Each path in the tree signifies a possible execution trace of the program on input $x$. An analysis of the tree reveals that the runtime does not depend on the input bit $x$. Programs with such input-independent runtime are categorized as $0$-timing-stable (see Appendix, Lemma~\ref{lemma:input-independence} for proof). Furthermore, the program outputs an independent Bernoulli random variable with $p = 1/2$ and therefore achieves $0$-DP. This might suggest that the program cannot leak any information about the input. However, the runtime exposes the program's internal randomness, indicating whether the output is $x$ or $1 - x$. Thus, the input bit can be precisely determined by the combination of the program's output and runtime, despite the program being both $0$-timing stable and $0$-DP. This demonstrates the need for a refined definition of timing stability that considers the program's output, which we provide in the next section.

\begin{figure}[t]
    \begin{minipage}{.5\textwidth}
        \begin{algorithm}[H]
            \vspace{5px}
            \raggedright\textbf{Input:} A bit $x \in \{0, 1\}$ at memory location $M[\texttt{input\_ptr}]$\\
            \raggedright\textbf{Output:} A uniform random bit \\
            \vspace{5px}
            \begin{algorithmic}[1]
                \STATE $\texttt{x} = M[\texttt{input\_ptr}]$;
                \STATE $\texttt{output\_len} = 1$;
                \STATE $\texttt{output\_ptr} = \texttt{input\_ptr}$;
                \STATE $\texttt{b} = $ \texttt{RAND}$(1)$; \COMMENT{flip coin}
                \IF{$\texttt{b} == 1$}
                \STATE $M[\texttt{output\_ptr}] = 1 - \texttt{x}$; \COMMENT{flip bit}
                \ENDIF
                \STATE $\texttt{HALT}$;
            \end{algorithmic}
            \caption{\\ A RAM program for randomized response.}
            \label{alg:ram-rr}
        \end{algorithm}
    \end{minipage}
    \begin{minipage}{.5\textwidth}
        \centering
        \begin{tikzpicture}[sibling distance=50mm, level distance=30mm, ->]
        \tikzset{every node/.style={align=left}, every edge/.style={draw, ->, thick}}
        \node {(1) read input $x$ \\ (2) set output length \\ (3) set output ptr \\ (4) sample a uniform bit $b$ \\ (5) conditional check on $b$ }
            child {
                node {(6) Halt}
                edge from parent node[left] {$b = 0$}
            }
            child {
                node {(6) Write $1 - x$ to output \\ (7) Halt }
                edge from parent node[right] {$b = 1$}
            };
        \end{tikzpicture}
    \end{minipage}%
    \caption{A Boolean randomized response RAM program and its corresponding execution tree. The program executes $1$ additional instruction when it outputs $1 - x$ versus $x$.}
    \label{fig:randresp}
\end{figure}

\subsection{Output-Conditional Timing Stability}

\begin{definition}[Output-Conditional Timing Stability]\label{def:output-conditional-timing-stability}
Let $P: \mathcal{X}\times\mathcal{E} \to \mathcal{Y}\times\mathcal{E}$ be (possibly randomized) program and $d_{\mathcal{X}}$ a metric on $\mathcal{X}$. Then we say that $P$ is $(\din\mapsto\tout)$-\emph{OC timing-stable} with respect to $d_{\mathcal{X}}$ if $\forall x,x'$ satisfying $d_{\mathcal{X}}(x, x') \le \din$, $\forall$ input-compatible environments $\env, \env' \in \mathcal{E}$, and $\forall y \in \supp(\out(P(x, \env))) \cap \supp(\out(P(x', \env')))$, there exists a coupling $(\Tilde{r}, \Tilde{r}')$ of the conditional runtime random variables $\TP{x, \env}|_{\out(P(x, \env)) = y}$ and $\TP{x', \env'}|_{\out(P(x', \env')) = y}$ such that $|\Tilde{r} - \Tilde{r}'| \le t_{out}$ with probability $1$.
\end{definition}

OC timing stability imposes stricter requirements than basic timing stability when the support sets of the random variables $\out(P(x, \env))$ and $\out(P(x', \env'))$ are similar. For any output $y$ that has a non-zero probability of being produced by $P$ for inputs $x$ and $x'$, as is implied if $P$'s output is differentially private (e.g., for both pure and Rényi DP, the supports must be identical), output-conditional timing stability ensures that the distributions $\TP{x, \env}|_{\out(P(x, \env)) = y}$ and $\TP{x', \env'}|_{\out(P(x', \env')) = y}$ are close. 

We can now analyze the randomized response program depicted in Figure~\ref{fig:randresp} through the perspective of OC timing stability, and show that the program is $(1 \mapsto 1)$-OC timing stable with respect to the Hamming distance metric $\dham$ (Lemma~\ref{lemma:rr-oc-stable}).

\begin{lemma}\label{lemma:rr-oc-stable}
    The Boolean randomized response program $P$ shown in Figure~\ref{fig:randresp} is $(1\mapsto 1)$-\emph{OC timing stable} with respect to the Hamming distance metric $\dham$. 
\end{lemma}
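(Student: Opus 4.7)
The plan is to observe that the program has no interaction with uninitialized memory and no randomness beyond the single coin flip $\mathtt{b}$, so it is both output-pure and timing-pure, allowing us to suppress the environment variables $\env,\env'$ throughout. I would state this as a short preliminary observation and then reduce the task to exhibiting, for every pair $x,x' \in \{0,1\}$ with $\dham(x,x') \le 1$ and every $y$ in the intersection of the output supports, a coupling of the conditional runtime random variables with difference at most $1$ almost surely.

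Next I would do the case split on $\dham(x,x')$. If $x = x'$, the conditional runtime distributions are identical, so the identity coupling gives $|\tilde r - \tilde r'| = 0 \le 1$. The substantive case is $x \ne x'$, i.e.\ $x' = 1-x$. Here the support of $\out(P(x)) = \out(P(x'))$ is all of $\{0,1\}$, and the key point is that once both the input and the output are fixed, the coin $\mathtt b$ is determined: namely, $\mathtt b = 0$ iff the output equals the input. Hence the conditional runtime is a constant, and the only coupling of two point masses is the product coupling.

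I would then read the exact runtime off of the execution tree in Figure~\ref{fig:randresp}: the $\mathtt b = 0$ branch executes $6$ instructions (lines 1--5 plus HALT), while the $\mathtt b=1$ branch executes $7$ instructions (the extra write on line $6$). Thus for any $y \in \{0,1\}$:
\begin{itemize}
\item if $y = x$ (so $\mathtt b = 0$ on input $x$) then $y = 1 - x'$ (so $\mathtt b = 1$ on input $x'$), giving conditional runtimes $6$ and $7$;
\item if $y = 1-x$ (so $\mathtt b = 1$ on input $x$) then $y = x'$ (so $\mathtt b = 0$ on input $x'$), giving conditional runtimes $7$ and $6$.
\end{itemize}
In either subcase the runtime difference is exactly $1$, so the (unique) coupling of these two point masses satisfies $|\tilde r - \tilde r'| = 1 \le 1$ with probability $1$, establishing $(1 \mapsto 1)$-OC timing stability.

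The main obstacle, such as it is, is simply bookkeeping: confirming that conditioning on $y$ makes $\mathtt b$ (and hence the runtime) deterministic, and carefully counting instructions along each branch using the paper's convention that the conditional check itself costs one instruction. No delicate probabilistic argument is needed because the only randomness collapses once we condition on the output.
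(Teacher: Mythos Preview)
Your proposal is correct and follows essentially the same approach as the paper's proof: both identify that conditioning on the output $y$ determines the coin $\mathtt{b}$ and hence makes the runtime a point mass ($6$ or $7$), and then exhibit the point coupling with difference exactly $1$. You are simply more explicit than the paper in handling the trivial $x=x'$ case and in explaining why the conditional runtime is deterministic.
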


\begin{proof}
    $\TP{0, \env_0}$ is always $6$ conditioned on $\out(P(0, \env_0)) = 0$, and $\TP{1, \env_1}$ is always $7$ conditioned on $\out(P(1, \env_1)) = 0$. Thus, a point coupling of the conditional random variables $\TP{0, \env}|_{\out(P(0, \env_0)) = 0}$ and $\TP{1, \env_1}|_{\out(P(1, \env_1))=0}$ chosen as $(\Tilde{r}, \Tilde{r}') = (6, 7)$ satisfies $\Pr[|\Tilde{r} - \Tilde{r}'| \le 1] = 1$. 
    
    Similarly, we can take the point coupling of the random variables $\TP{0, \env_0}|_{\out(P(0, \env_0)) = 1}$ and $ \TP{1, \env_1}|_{\out(P(1, \env_1))=1}$ as $(\Tilde{r}, \Tilde{r}') = (7, 6)$ which also satisfies $\Pr[|\Tilde{r} - \Tilde{r}'| \le 1] = 1$.
\end{proof}

Constant-time programs are $(\din\mapsto 0)$-OC timing stable for all $\din$ (Lemma~\ref{lemma:constant-implies-oc}, Appendix). Additionally, any program that has deterministic output and is $(\din\mapsto\tout)$-timing-stable is also $(\din\mapsto\tout)$-OC timing stable (Lemma~\ref{lemma:rt-implies-oc}, Appendix).

\subsection{Jointly-Output/Timing Stable Programs}
We consider another natural definition of runtime stability, called \emph{joint output/timing stability}, which considers the joint distribution of a program’s output and its runtime. This notion becomes particularly useful when chaining programs together to create more complex functionality (\S\ref{sec:chaining}). In particular, joint output/timing stability ensures that close inputs simultaneously produce close outputs and runtimes. This property becomes useful when reasoning about the output-conditional timing stability of programs whose inputs come from the output of another program (\S\ref{sec:chaining}). In particular, an output-conditional timing-stable program $P_2$, bounds the change in runtime conditioned on a given output value for all $\din$-close inputs. However, when constructing a chained program $P_2 \circ P_1$, where $P_2$ receives its input from the output of $P_1$, it's necessary to jointly bound the output and runtime stability of $P_1$ to guarantee output-conditional timing stability for the chained program \(P_2\circ P_1\). 

\begin{definition}[Joint Output/Timing Stability]\label{def:joint-output-timing-stability} Let $P: \mathcal{X}\times\mathcal{E} \to \mathcal{Y}\times\mathcal{E}$ be a program and $d_{\mathcal{X}}$ a metric on $\mathcal{X}$ and $d_\mathcal{Y}$ a metric on $\mathcal{Y}$. Then we say that $P$ is $(\din \to\{\dout, \tout\})$-\emph{jointly output/timing stable} with respect to $d_{\mathcal{X}}$ and $d_\mathcal{Y}$ if $\forall x, x' \in \mathcal{X}$ satisfying $d_{\mathcal{X}}(x, x') \le \din$, and all pairs of input-compatible $\env, \env'\in\mathcal{E}$, $\exists$ a coupling $((\Tilde{u}, \Tilde{v}), (\Tilde{u}', \Tilde{v}'))$ of 
\begin{align*}
    ((\out(P(x, \env)), \TP{x, \env}), (\out(P(x', \env')), \TP{x', \env'}))
\end{align*} such that $d_\mathcal{Y}(\Tilde{u}, \Tilde{u}')\le \dout$ and $|\Tilde{v} - \Tilde{v}'| \le \tout$ with probability $1$.
\end{definition}

For programs with deterministic\footnote{A program $P: \mathcal{X}\times\mathcal{E}\to\mathcal{Y}\times\mathcal{E}$ is \emph{deterministic in its output} if for all $x \in \mathcal{X}$, $\exists y \in \mathcal{Y}$, such that for all input-compatible $\env \in \mathcal{E}$, $\Pr[\out(P(x, \env)) = y] = 1$. Therefore, deterministic programs are also output-pure programs.} outputs, if the program simultaneously satisfies ${(d_1\mapsto d_2)}$-output stability and $(d_1\mapsto t_1)$-timing stability, then the program will also satisfy $(d_1\mapsto \{d_2, t_1\})$-joint output/timing stability.
\shortpage
\shortpage

\begin{lemma}\label{lemma:det-joint-stability}
    If $P :\mathcal{X}\times\mathcal{E} \to \mathcal{Y}\times\mathcal{E}$ is a deterministic (in its output) program that is $(d_1 \mapsto d_2)$ output stable and $(d_1 \mapsto t_1)$-timing stable, then $P$ is $(d_1 \mapsto\{d_2, t_1\})$-jointly output/timing stable.
\end{lemma}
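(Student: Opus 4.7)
The plan is to exploit the determinism of the output to reduce the joint coupling problem to the two marginal stability guarantees. The key observation is that when $P$ is deterministic in its output, the joint distribution of $(\out(P(x,\env)), T_P(x,\env))$ is completely described by a Dirac mass on the (unique) output value together with the marginal distribution on runtimes. So any coupling of the runtimes automatically lifts to a coupling of the joint output/runtime random variables.

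In more detail, I would proceed as follows. First, by the definition of deterministic-in-output, for each $x \in \mathcal{X}$ there is a unique $y_x \in \mathcal{Y}$ such that $\Pr[\out(P(x,\env)) = y_x] = 1$ for every input-compatible $\env$. Fix adjacent $x, x'$ with $d_\mathcal{X}(x,x') \le d_1$ and fix input-compatible $\env, \env'$. Output stability, applied to this pair, yields a coupling $(\tilde u, \tilde u')$ of $\out(P(x,\env))$ and $\out(P(x',\env'))$ with $d_\mathcal{Y}(\tilde u, \tilde u') \le d_2$ almost surely; since both marginals are point masses, this coupling is itself the point mass $(\tilde u, \tilde u') = (y_x, y_{x'})$, and hence $d_\mathcal{Y}(y_x, y_{x'}) \le d_2$ deterministically. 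Timing stability gives a coupling $(\tilde v, \tilde v')$ of $T_P(x,\env)$ and $T_P(x',\env')$ with $|\tilde v - \tilde v'| \le t_1$ almost surely.

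I would then define the joint coupling by $((\tilde u, \tilde v), (\tilde u', \tilde v')) = ((y_x, \tilde v), (y_{x'}, \tilde v'))$. To verify this is a valid coupling of $(\out(P(x,\env)), T_P(x,\env))$ and $(\out(P(x',\env')), T_P(x',\env'))$, I would check the marginals: for any measurable $A \subseteq \mathcal{T}$, determinism gives
\[
\Pr[(\out(P(x,\env)), T_P(x,\env)) \in \{y_x\} \times A] = \Pr[T_P(x,\env) \in A],
\]
and this equals $\Pr[(y_x, \tilde v) \in \{y_x\} \times A]$ because $\tilde v$ has the correct marginal; an analogous check works for the primed side. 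The desired bounds $d_\mathcal{Y}(\tilde u, \tilde u') \le d_2$ and $|\tilde v - \tilde v'| \le t_1$ then both hold with probability one, which is exactly the condition for $(d_1 \mapsto \{d_2, t_1\})$-joint output/timing stability.

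The main subtlety — really the only one — is the careful marginal verification: one has to be sure that the product-type construction $(y_x, \tilde v)$ actually reproduces the joint law of $(\out(P(x,\env)), T_P(x,\env))$, which it does precisely because the first component is almost surely the constant $y_x$. Without determinism of the output, this step would fail in general (one would only get a coupling of the product of marginals, not of the joint), which is why the lemma is stated specifically for deterministic-in-output programs.
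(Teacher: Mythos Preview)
Your proposal is correct and follows essentially the same approach as the paper: use determinism to fix the outputs as constants $y$ and $y'$, take the timing-stability coupling $(\tilde r,\tilde r')$ for the runtimes, and form the joint coupling $((y,\tilde r),(y',\tilde r'))$. Your version is slightly more explicit about verifying the marginals, but the argument is the same.
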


\begin{proof}
    Let $y =\out(P(x, \env))$ and $y' = \out(P(x', \env'))$ and choose the coupling to be $((y, r), (y', r'))$ where $(r, r')$ is sampled from the coupling $(\Tilde{r}, \Tilde{r}')$ associated with the timing stability of $P$. Then $d_\mathcal{Y}(y, y') \le d_2$ by output stability and $|r - r'| \le t_1$ by timing stability. Since $P$ is deterministic, it follows that $(y, r)$ and $(y', r')$ are identically distributed to $(\out(P(x, \env)), \TP{x, \env})$ and $(\out(P(x', \env')), \TP{x', \env'})$ respectively and the claim is satisfied.
\end{proof}

A useful example of a jointly output/timing stable RAM program is one that sums over the dataset.

\begin{algorithm}[t]
\vspace{5px}
\begin{flushleft}
\textbf{Input:} A dataset $x \in \{0,\ldots,\Delta\}^n$ located at $M[\texttt{input\_ptr}],\ldots,M[\texttt{input\_ptr} + \texttt{input\_len} - 1]$. We require that $\Delta < 2^\omega$ and $n \le 2^\omega - 1$ in the $\omega$-bit Word RAM model. 

\vspace{5px}
\textbf{Output:} $\sum M[i]$ for $i = \texttt{input\_ptr}, \dots ,(\texttt{input\_ptr} + \texttt{input\_len} - 1)$. The program outputs $\min\{\sum M[i], 2^\omega - 1\}$ in the Word RAM model.
\end{flushleft}
\vspace{5px}
\begin{algorithmic}[1]
\STATE $\texttt{output\_len} = 1$;
\STATE $\texttt{idx} = \texttt{input\_ptr}$;
\STATE $\texttt{n} = \texttt{input\_ptr} + \texttt{input\_len}$;
\STATE $\texttt{sum} = 0$;
\WHILE{$\texttt{idx} < \texttt{n}$}
\STATE $\texttt{sum} = M[\texttt{idx}] + \texttt{sum}$;
\STATE $\texttt{idx} = \texttt{idx} + 1;$
\ENDWHILE
\STATE $\texttt{output\_ptr} = 0$;
\STATE $M[\texttt{output\_ptr}] = \texttt{sum}$;
\STATE $\texttt{HALT}$;
\end{algorithmic}

\caption{Sum Program}\label{program:sum}
\end{algorithm}

\begin{lemma}\label{lemma:joint-stability-sum}
    The Sum Word RAM program $P:\mathcal{X}\times\mathcal{E}\to\mathbb{N}\times\mathcal{E}$ (Program~\ref{program:sum}) is $(1\mapsto \{\Delta, 3\})$-jointly output/timing stable under the insert-delete input distance metric $\did$ and the output distance metric $d_\mathbb{N}$ defined as $d_\mathbb{N}(y, y') = |y - y'|$.
\end{lemma}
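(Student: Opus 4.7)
The plan is to reduce this to Lemma~\ref{lemma:det-joint-stability} by observing that the Sum program is deterministic in its output (it uses no \texttt{RAND} calls and reads only the input cells $M[\texttt{input\_ptr}], \dots, M[\texttt{input\_ptr} + \texttt{input\_len} - 1]$, so both the output and the instruction count depend only on $x$, not on the execution environment). Thus it suffices to separately establish $(1 \mapsto \Delta)$-output stability and $(1 \mapsto 3)$-timing stability under $d_{id}$, and then invoke Lemma~\ref{lemma:det-joint-stability}.

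For output stability, I would argue as follows. Two datasets $x, x'$ at insert-delete distance at most $1$ differ by a single element $z \in \{0,\dots,\Delta\}$ being inserted or deleted, so $\bigl|\sum_i x_i - \sum_i x'_i\bigr| \le \Delta$. In the Word RAM model the output is the truncated value $\min\{\sum_i x_i, 2^\omega-1\}$, and truncation by $\min(\cdot, C)$ is $1$-Lipschitz, so the truncated outputs still differ by at most $\Delta$ in $d_\mathbb{N}$.

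For timing stability, I would count the executed instructions exactly. The setup (lines 1--4) contributes $4$ instructions, and each iteration of the while loop contributes $3$ instructions (the conditional check on line 5, the update on line 6, and the increment on line 7), with one additional failed check after the last iteration. Lines 9--11 contribute another $3$ instructions. Thus for a dataset of length $n$ the runtime is exactly $T_P(x,\env) = 3n + 8$ (here I am using the paper's convention that each branching condition-match counts as a single instruction). Since $|\,|x| - |x'|\,| \le 1$ for $d_{id}(x,x') \le 1$, any such pair of runtimes differ by exactly $3 |\,|x| - |x'|\,| \le 3$, and the point coupling achieves this deterministically.

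The main (only) subtlety is the Word RAM truncation step for the output bound: I would want to be explicit that $\min(\cdot, 2^\omega-1)$ is nonexpansive so that the $\Delta$ bound survives, and to note that intermediate accumulator values during the loop need not fit in a word but the only \emph{observable} output is the final (possibly truncated) value, which is what the output metric sees. Once those points are addressed, the rest is a routine instruction count followed by an application of Lemma~\ref{lemma:det-joint-stability}.
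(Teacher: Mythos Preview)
Your proposal is correct and follows essentially the same approach as the paper: establish $(1\mapsto\Delta)$-output stability and $(1\mapsto 3)$-timing stability separately, observe the program is deterministic in its output, and invoke Lemma~\ref{lemma:det-joint-stability}. You supply more detail than the paper does (an explicit instruction count and the observation that $\min(\cdot,2^\omega-1)$ is $1$-Lipschitz), but the strategy is identical.
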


\begin{proof}
Inserting or deleting an input record changes the runtime by one loop iteration, which consists of 3 instructions. Therefore the program is $(1\mapsto 3)$-timing stable under the input distance metric $\did$. Additionally, the program is $(1\mapsto \Delta)$-output stable under $\did$ since adding or removing an input record changes the sum by at most $\Delta$ (the maximum value of an input record). Since the program is deterministic in its output, the program satisfies $(1\mapsto \{\Delta, 3\})$-joint output/timing stability (Lemma~\ref{lemma:det-joint-stability}). The proof follows similarly if $P$ is instead a RAM program.
\end{proof}

We now show that the class of timing-stable programs is restricted to programs that have a worst-case runtime that is at most linear in their input length. This implies that certain programs, e.g., those that perform superlinear-time sorting, cannot achieve timing stability on inputs of unbounded length. It is an interesting challenge for future work to see if non-trivial timing-private programs can be designed that incorporate such operations.

\begin{lemma}[Timing-stable programs have $O(n)$ runtime]\label{lemma:timing-stable-implies-linear}
    Let $\mathcal{X} = \mathcal{D}^*$ for a row-domain $\mathcal{D}$. A program $P:\mathcal{X}\times\mathcal{E}\to\mathcal{Y}\times\mathcal{E}$ on inputs of unbounded length that is $(\din\mapsto\tout)$-timing-stable with respect to distance metric $d_\mathcal{X}(x, x') = ||x| - |x'||$ has at most linear runtime. Specifically, for all $x$ and all $t_0\in \mathcal{T}$, we have
    $$\Pr[\TP{x} \le |x| \cdot \tout + t_0]\ge \Pr[\TP{\lambda} \le t_0]$$ where $\lambda$ is the empty dataset of size $0$.
\end{lemma}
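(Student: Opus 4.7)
The plan is to prove the bound by induction on $n = |x|$, chaining timing-stable couplings along a path from $\lambda$ up to $x$ through datasets of increasing length. The base case $n = 0$ is immediate: both sides of the claimed inequality read $\Pr[\TP{\lambda} \le t_0]$.

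For the inductive step, assume the claim for all datasets of length $n-1$, and fix $x$ with $|x| = n \geq 1$. Let $x^-$ be any length-$(n-1)$ dataset reachable from $x$ by deleting a single record, so $d_\mathcal{X}(x, x^-) = \bigl||x| - |x^-|\bigr| = 1 \le \din$. By the inductive hypothesis applied to $x^-$,
\[
    \Pr\bigl[\TP{x^-} \le (n-1)\tout + t_0\bigr] \;\ge\; \Pr\bigl[\TP{\lambda} \le t_0\bigr].
\]
Applying $(\din \mapsto \tout)$-timing stability to the pair $(x, x^-)$ produces a coupling $(\tilde r, \tilde r')$ of $(\TP{x}, \TP{x^-})$ with $|\tilde r - \tilde r'| \le \tout$ almost surely. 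The event $\{\tilde r' \le (n-1)\tout + t_0\}$ is thus contained in $\{\tilde r \le \tilde r' + \tout \le n\tout + t_0\}$, and since the marginals of the coupling equal $\TP{x}$ and $\TP{x^-}$, we obtain
\[
    \Pr\bigl[\TP{x} \le n\tout + t_0\bigr] \;\ge\; \Pr\bigl[\TP{x^-} \le (n-1)\tout + t_0\bigr],
\]
which chains with the inductive hypothesis to close the induction. (An equivalent view: the gluing lemma lets us splice the $n$ pairwise couplings along the chain into one joint distribution on $(\TP{\lambda}, \TP{x_1}, \dots, \TP{x_n})$ whose endpoints satisfy $|\TP{\lambda} - \TP{x}| \le n\tout$ almost surely, from which the claim follows directly.)

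The main subtlety — and really the only nontrivial bookkeeping — is the handling of execution environments: timing stability is quantified over all pairs of input-compatible $\env, \env' \in \mathcal{E}$, whereas the lemma writes $\TP{x}$ without one. I would address this by fixing at the outset a sequence $\env_0, \env_1, \dots, \env_n$ where $\env_i$ is compatible with the intermediate dataset of length $i$ used in the chain (in the RAM model this is straightforward: take memory outside the input region to be zero-initialized for every $\env_i$), and then apply stability at each inductive step to the consecutive pair $(\env_{i-1}, \env_i)$. For output/timing-pure programs this dependence disappears entirely, and the proof reads cleanly without the additional bookkeeping.
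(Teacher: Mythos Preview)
Your proposal is correct and follows essentially the same approach as the paper: chain couplings along a path $\lambda = x_0, x_1, \dots, x_n = x$ of datasets whose consecutive lengths differ by one, accumulating a total runtime gap of at most $n\cdot\tout$. The paper phrases this via the Gluing Lemma and a triangle-inequality argument rather than by explicit induction, but you already note this equivalence in your parenthetical; your treatment of execution environments is, if anything, more careful than the paper's.
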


Note that for every $p < 1$, there exists a $t_0$ such that $\Pr[\TP{\lambda} \le t_0] > p$. Therefore the above lemma says that, with high probability, the runtime is at most linear.

\begin{proof}
     Consider a sequence of datasets $(x_0, x_1, \dots, x_n)$ such that $d_\mathcal{X}(x_i, x_{i + 1}) \le \din$ and $x_0 = \lambda$. Then for all $x_i, x_{i+1}$, and all pairs of input-compatible execution environments $\env_i, \env_{i+1} \in \mathcal{E}$, there exists a coupling $(\Tilde{r}_i, \Tilde{r}_{i+1})$ of $\TP{x_i, \env_i}$ and $\TP{x_{i+1}, \env_{i+1}}$ such that $|\Tilde{r}_i - \Tilde{r}_{i+1}| \le\tout$ by the $(\din\mapsto\tout)$-timing-stability of $P$. Using the Gluing Lemma for couplings~\cite{villani2009optimal}, we can construct a big coupling $(\Tilde{r}_0, \Tilde{r}_1, \dots, \Tilde{r}_n)$ such that $\Pr[|\Tilde{r}_i - \Tilde{r}_{i+1}| \le \tout] = 1$ for all $i$. Then, by an application of the triangle inequality, for all pairs of execution environments $\env_0, \env_n \in\mathcal{E}$, there exists a coupling $(\Tilde{r}_0, \Tilde{r}_n)$ of $\TP{x_0, \env_0}$ and $\TP{x_n, \env_n}$ such that $|\Tilde{r}_0 - \Tilde{r}_n| \le n\cdot \tout$ with probability $1$. Therefore 
     \begin{align*}
         &\Pr[\TP{x_n, \env_n} \le |x_n| \cdot \tout + t_0] \\ &= \Pr[\TP{x_n, \env_n} \le n\cdot \tout + t_0] \\
         &\ge \Pr[\Tilde{r}_0 \le t_0] \\
         &= \Pr[\TP{\lambda, \env_0} \le t_0]
     \end{align*}
\end{proof}

The result also holds for jointly output/timing-stable programs since jointly output/timing-stable programs are also timing-stable programs (Lemma~\ref{lemma:joint-implies-timing-stable}). However, a similar statement for OC-timing-stable programs does not hold, since it does not guarantee much about distributions of the runtimes on adjacent inputs when the output distributions on those inputs are very different.  (In an extreme case, when the output distributions have disjoint supports, OC-timing stability says nothing).
\section{Timing-Private Programs}
\label{sec:rtprivacy}
In this section, we introduce a notion of privacy with respect to timing attacks. Intuitively, we require that \emph{timing-private} programs should not leak much more information about their input than what is already revealed by their output distributions. 

\subsection{Output-Conditional Timing Privacy}
We generalize Definition~\ref{def:rt-privacy-simple} to work with arbitrary dataset distance metrics and privacy measures. 

\begin{definition}[Output-Conditional Timing Privacy]
\label{def:rtprivacy}
Let $P: \mathcal{X}\times\mathcal{E} \rightarrow \mathcal{Y}\times\mathcal{E}$ be a program, $d_{\mathcal{X}}$ a metric on $\mathcal{X}$, and $\mathcal{M}$ a distance measure between probability distributions. Then we say that $P$ is $(\din\mapsto \dout)$-\emph{OC-timing-private} with respect to $d_{\mathcal{X}}$ and $\mathcal{M}$ if for all $x, x'$ satisfying $d_{\mathcal{X}}(x, x') \le \din$, all pairs of environments $\env, \env' \in \mathcal{E}$, and all $y \in \supp(\out(P(x, \env))) \cap \supp(\out(P(x', \env')))$
\[\mathcal{M}(\TP{x, \env}|_{\out(P(x, \env)) = y}, \TP{x', \env'}|_{\out(P(x', \env')) = y}) \le \dout
\]
\end{definition}

We also provide an alternative and equivalent simulation-based definition of timing privacy. 
\begin{definition}[Sim-Timing Privacy]
\label{def:sim-rtprivacy}
 Let $P: \mathcal{X}\times\mathcal{E} \rightarrow \mathcal{Y}\times\mathcal{E}$ be a program, $d_{\mathcal{X}}$ a metric on $\mathcal{X}$, and $\mathcal{M}$ a distance measure between probability distributions. Then we say that $P$ is $(\din\mapsto\dout)$-timing private with respect to $d_{\mathcal{X}}$ and $\mathcal{M}$ if $\forall x, x'$ satisfying $d_{\mathcal{X}}(x, x') \le \din$, all environments $\env, \env' \in \mathcal{E}$, $\exists$ a simulator $S : \{(x, \env), (x', \env')\}\times \mathcal{Y}  \to \mathcal{T}$ such that:

    \begin{enumerate}
    \item $\forall y\in \supp(\out(P(x,\env))),$ we have $S(x, \env, y)\equiv \TP{x, \env}|_{\out(P(x, \env))=y}$ 
    \item $\forall y \in \supp(\out(P(x',\env')))$, we have $S(x', \env', y)\equiv \TP{x', \env'}|_{\out(P(x', \env'))=y}$ 
    \item $\forall y \in \mathcal{Y}, \mathcal{M}(S(x, \env, y), S(x', \env', y)) \le \dout$
    \end{enumerate}
\end{definition}

where $Y \equiv Z$ denotes that the random variables $Y$ and $Z$ are identically distributed. \\

The alternative definition is equivalent to Definition~\ref{def:rtprivacy}, but it lends itself more naturally to a computational analogue of timing privacy when we add the assumption that $S$ runs in polynomial time. Requirements (1) and (2) that $S$ is identically distributed to the conditional runtime random variable gets relaxed to computational indistinguishability, and requirement (3) gets relaxed to computational differential privacy~\cite{mironov2009computational}. Importantly, such a definition is meaningful even if $\supp(\out(P(x, \env)))\cap \supp(\out(P(x', \env'))) = \emptyset$, in contrast to Definition~\ref{def:rtprivacy}. The supports on adjacent inputs may be disjoint, but a polynomial-time simulator $S$ will not be able to detect this. 

\begin{lemma}\label{lemma:timing-sim-equiv}
Timing privacy (Definition~\ref{def:rtprivacy}) and Sim-Timing Privacy (Definition~\ref{def:sim-rtprivacy}) are equivalent.
\end{lemma}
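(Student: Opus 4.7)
\medskip

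\noindent\textbf{Proof proposal.} The plan is to prove the two directions of the equivalence separately.

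For the forward direction (Sim-Timing $\Rightarrow$ OC-Timing), I would fix adjacent $x, x'$, input-compatible $\env, \env'$, and any $y \in \supp(\out(P(x, \env))) \cap \supp(\out(P(x', \env')))$. By Properties (1) and (2) of the simulator, $S(x, \env, y)$ and $S(x', \env', y)$ have exactly the distributions $\TP{x, \env}|_{\out(P(x, \env))=y}$ and $\TP{x', \env'}|_{\out(P(x', \env'))=y}$, respectively. Because any privacy measure $\mathcal{M}$ depends only on the underlying distributions of its arguments, Property (3) immediately yields the required bound of Definition~\ref{def:rtprivacy}.

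For the reverse direction (OC-Timing $\Rightarrow$ Sim-Timing), I would construct $S$ explicitly. For $y \in \supp(\out(P(x, \env)))$, let $S(x, \env, y)$ be sampled from the conditional distribution $\TP{x, \env}|_{\out(P(x,\env))=y}$, and define $S(x', \env', y)$ analogously over its support. This immediately satisfies Properties (1) and (2). For Property (3), there are three cases on a given $y \in \mathcal{Y}$: if $y$ lies in both supports, the bound $\mathcal{M}(S(x,\env,y), S(x',\env',y)) \le \dout$ is precisely OC-timing privacy; if $y$ lies in only one support, I would extend the definition of $S$ on the other side by copying, i.e., setting $S(x', \env', y) := S(x, \env, y)$ (or vice versa); and if $y$ lies in neither support, I would assign $S$ the same (arbitrary) distribution on both sides.

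The main obstacle is the third condition of Sim-Timing Privacy, which quantifies over \emph{all} $y \in \mathcal{Y}$, whereas OC-Timing Privacy (Definition~\ref{def:rtprivacy}) is silent about $y$ outside the intersection of the two supports. The construction above reduces this obstacle to verifying that $\mathcal{M}(Z, Z) \le \dout$ for any random variable $Z$ and any valid privacy level $\dout$. This follows from the post-processing axiom for privacy measures: applying a constant post-processing function $g$ to any pair $(X, X')$ gives $\mathcal{M}(g(X), g(X')) \le \mathcal{M}(X, X')$ for every $(X, X')$, so $\mathcal{M}(Z, Z)$ is the minimum element of $\mathcal{M}$ and thus below any $\dout$. (For the concrete examples in the paper---max-divergence and smoothed max-divergence---this is immediate: $D_\infty^\delta(Z \| Z) = 0$.) With this observation, Property (3) is verified in each of the three cases, completing the equivalence.
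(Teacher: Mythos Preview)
Your proposal is correct and follows essentially the same route as the paper: the easy direction uses Properties~(1)--(3) of the simulator directly, and the harder direction constructs $S$ by sampling from the conditional runtime distribution on the support, copying across when $y$ lies in only one support, and assigning an arbitrary fixed value (the paper uses $0$) when $y$ lies in neither. The one place you go beyond the paper is in explicitly justifying $\mathcal{M}(Z,Z)\le \dout$ via the post-processing axiom; the paper simply asserts that identical simulator outputs satisfy Property~(3), so your treatment is, if anything, more careful on this point (though note your ``constant $g$'' phrasing only handles point-mass $Z$---for general $Z$ you want $g$ to be a randomized map that ignores its input and samples from $Z$, which the post-processing axiom is standardly read to allow).
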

\begin{proof}
    We first show that Definition~\ref{def:rtprivacy} $\implies$ Definition~\ref{def:sim-rtprivacy}. On every pair of inputs $x, x'$ satisfying $d_\mathcal{X}(x, x')\le \din$, all pairs of input-compatible environments $\env, \env' \in\mathcal{E}$, and every output $y \in \supp(Y) \cap \supp(Y')$, where $Y = \out(P(x, \env))$ and $Y' = \out(P(x', \env'))$, we have that 
    \begin{align*}
        \mathcal{M}(\TP{x, \env}|_{Y=y}, \TP{x', \env'}|_{Y'=y}) \le \dout
    \end{align*}
    
    by timing privacy. Thus, let 
    \begin{align*}
        S(x, \env, y) =
        \begin{cases}
            t\sim \TP{x, \env}|_{Y = y} & \text{if } y \in \supp(Y) \\
            t\sim \TP{x', \env'}|_{Y' = y} & \text{if } y \in \supp(Y') \setminus \supp(Y) \\
            0 & \text{otherwise}
        \end{cases} 
    \end{align*} 
     defined similarly for $S(x', \env', y)$. We now make the following claims.
    \begin{enumerate}[label=(\arabic*)]
        \item $\forall y \in \supp(Y)$
        \begin{align*}
            S(x, \env, y) \equiv \TP{x, \env}|_{Y = y}
        \end{align*}
        \item $\forall y \in \supp(Y')$
        \begin{align*}
            S(x', \env', y) \equiv \TP{x', \env'}|_{Y' = y}
        \end{align*}
        \item $\forall y\in\mathcal{Y}, \mathcal{M}(S(x, \env, y), S(x', \env', y)) \le \dout$
    \end{enumerate} 

    Claims (1) and (2) are trivially true by the definition of $S$. The proof of claim (3) follows from the fact that for all $y \not\in \supp(Y)\cup \supp(Y')$, the simulator outputs $0$ and therefore $S(x, \env, y)\equiv S(x', \env', y)$. Similarly, for all  $y \in \supp(Y)\cap \supp(Y')$, the simulator outputs either $t\sim\TP{x, \env}$ or $t\sim\TP{x', \env'}$ according to its input. For all such $y$, $M(\TP{x, \env}, \TP{x', \env'})\le \dout$ by timing privacy. The only remaining case is when $S(x, \env, y)$ is given a value $y \in \supp(Y') \setminus \supp(Y)$. In this scenario, $S$ outputs $t\sim \TP{x', \env'}$ which will be distributed identically to $S(x', \env', y)$ by definition (and $S(x', \env', y)$ behaves similarly for $y \in \supp(Y) \setminus \supp(Y')$). Therefore $S(x, \env, y) \equiv S(x', \env', y)$ for all such $y$ and the claim is proven. \\

    Now we show that Definition~\ref{def:sim-rtprivacy} $\implies$ Definition~\ref{def:rtprivacy}. For every pair of datasets $x, x'$ satisfying $d_\mathcal{X}(x, x')\le \din$, and for all input-compatible environments $\env, \env' \in \mathcal{E}$, there exists a simulator $S: \{(x, \env), (x', \env')\}\times\mathcal{Y}\to \mathcal{T}$ such that 
    \begin{enumerate}
    \item $\forall y\in \supp(Y), S(x, \env, y)\equiv \TP{x, \env}|_{Y=y}$ 
    \item $\forall y \in \supp(Y')$, $S(x', \env', y)\equiv \TP{x', \env'}|_{Y'=y}$ 
    \item $\forall y \in \mathcal{Y}, \mathcal{M}(S(x, \env, y), S(x', \env', y)) \le \dout$
    \end{enumerate}

    by sim-timing privacy. It follows that $P$ is $(\din\mapsto\tout)$-timing private since for all $y\in\supp(Y)$,  $S(x, \env, y)$ is identically distributed to the conditional runtime random variable $\TP{x, \env}|_{Y = y}$ (and similarly for $S(x', \env', y)\equiv \TP{x', \env'}|_{Y'=y}$ for all $y\in\supp(Y')$). Additionally, since
    \begin{align*}
        \forall y\in\mathcal{Y}, \mathcal{M}(S(x, \env, y), S(x', \env', y)) \le \dout
    \end{align*}

    it follows that
    \begin{align*}
        \mathcal{M}(\TP{x, \env}|_{Y = y}, \TP{x', \env'}|_{Y' = y}) \le \dout
    \end{align*}

    which is what we wanted to show.
\end{proof}

\subsection{Joint Output/Timing Privacy}\label{sec:joint-output-privacy}
We now give a more general notion of \emph{joint output/timing privacy} for arbitrary dataset metrics and privacy measures, generalizing the approximate DP version in Definition~\ref{def:approx-joint-privacy}, which is from~\cite{ben2023resistance}.
\begin{definition}[Joint Output/Timing Privacy]\label{def:joint-privacy}
    Let $P: \mathcal{X}\times\mathcal{E} \rightarrow \mathcal{Y}\times\mathcal{E}$ be a (possibly) randomized program. Then we say that $P$ is $(\din\mapsto\dout)$-\emph{jointly output/timing-private} with respect to distance metric $d_\mathcal{X}$ and privacy measure $M$, if for all adjacent $x, x' \in \mathcal{X}$ satisfying $d_{\mathcal{X}}(x, x') \le \din$, and all pairs of input-compatible execution environments $\env, \env' \in \mathcal{E}$
        \begin{align*}
            M(\out(P(x, \env)), \TP{x, \env}), (\out(P(x', \env')), \TP{x', \env'})) \le \dout
        \end{align*}   
\end{definition}
In contrast to timing privacy (Definition~\ref{def:rtprivacy}), \emph{joint output/timing privacy} applies the standard DP definition to the joint random variable containing the program's output and runtime. In the case of pure-DP, this is equivalent to the program being both $\Theta(\varepsilon)$-differentially private and $\Theta(\varepsilon)$-timing private (Lemma~\ref{lemma:pure-joint-implies-timing}). However, we remark that the result does not extend to $(\varepsilon, \delta)$-timing privacy. For example, a $(\varepsilon, \delta)$-jointly output/timing-private program may, with probability $\delta$, output a special constant $y^*$ and completely encode the input in its running time. On the other hand, such a program would not satisfy timing privacy, which would require that the runtime be DP even conditioned on this rare, special output $y^*$.

\begin{lemma}\label{lemma:pure-joint-implies-timing}
    If a program $P: \mathcal{X}\times\mathcal{E}\to \mathcal{Y}\times\mathcal{E}$ is $\varepsilon$-jointly output/timing private then it is also $\varepsilon$-DP and $2\varepsilon$-timing-private.
\end{lemma}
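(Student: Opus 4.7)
The plan is to handle the two conclusions separately, both from the joint bound $M((\out(P(x,\env)),\TP{x,\env}),(\out(P(x',\env')),\TP{x',\env'}))\le\varepsilon$ where $M$ is max-divergence.

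First, for the $\varepsilon$-DP part, I would simply invoke post-processing (Lemma~\ref{lem:dp postprocess}) with the deterministic projection $(y,t)\mapsto y$. This immediately yields that $\out(P(x,\env))$ and $\out(P(x',\env'))$ are at max-divergence at most $\varepsilon$ for every adjacent $x,x'$ and input-compatible $\env,\env'$, which is exactly $\varepsilon$-DP by the characterization after Definition~\ref{def:max-divergence}.

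For the $2\varepsilon$-timing-privacy part, fix adjacent $x,x'$, input-compatible $\env,\env'$, and $y\in\supp(\out(P(x,\env)))\cap\supp(\out(P(x',\env')))$. Write $Y,T$ and $Y',T'$ for the output and runtime random variables on the two inputs. In the discrete case, for any $S\subseteq\mathcal{T}$ I would expand
\[
\Pr[T\in S\mid Y=y]\;=\;\frac{\Pr[T\in S,\,Y=y]}{\Pr[Y=y]},
\]
bound the numerator by $e^{\varepsilon}\Pr[T'\in S,Y'=y]$ using joint privacy applied to the set $S\times\{y\}$, and bound the reciprocal of the denominator by $e^{\varepsilon}/\Pr[Y'=y]$ using the already-established output DP applied to the set $\{y\}$. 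Multiplying the two $e^{\varepsilon}$ factors gives $\Pr[T\in S\mid Y=y]\le e^{2\varepsilon}\Pr[T'\in S\mid Y'=y]$, i.e.\ max-divergence at most $2\varepsilon$, which is the required $2\varepsilon$-OC-timing privacy.

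The main (and essentially only) obstacle is the continuous case, where $\Pr[Y=y]=0$ and the naive ratio is ill-defined. My plan is to replace the singleton $\{y\}$ with a measurable neighborhood $A\ni y$, apply the same two inequalities to the events $(S\times A)$ and $A$ respectively, obtain the $e^{2\varepsilon}$ bound for the conditional probabilities on $\{Y\in A\}$, and then pass to the limit as $A$ shrinks to $\{y\}$ using a standard disintegration / regular conditional probability argument. Since we are in the pure-DP ($\delta=0$) setting, the pointwise density-ratio characterization of max-divergence lets this limit be taken cleanly without the slack that causes the corresponding $(\varepsilon,\delta)$ statement to fail (as observed in the paragraph preceding the lemma).
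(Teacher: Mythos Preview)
Your proposal is correct and essentially identical to the paper's proof: both obtain $\varepsilon$-DP by post-processing the joint random variable via the projection $(y,t)\mapsto y$, and both derive the $2\varepsilon$ conditional bound by writing $\Pr[T\in S\mid Y=y]$ as a ratio, upper-bounding the numerator via joint privacy on $\{y\}\times S$ and lower-bounding the denominator via the just-established output DP on $\{y\}$. The only addition is your discussion of the continuous case, which the paper does not treat (it implicitly works with discrete outputs, consistent with the RAM/Word RAM setting).
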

\begin{proof}
   Let $Y = \out(P(x, \env))$ and $Y' = \out(P(x', \env'))$. We start with the fact that 
    \begin{align*}
        & \Pr[(Y, \TP{x, \env}) \in \{y\}\times S_2] \\
        &\quad\quad= \Pr[\TP{x, \env} \in S_2 | Y = y ]\cdot\Pr[Y = y ] \\
        &\quad\quad\le e^\varepsilon\cdot\Pr[\TP{x', \env'} \in S_2 | Y' = y]\cdot\Pr[Y' = y] 
    \end{align*}

    by joint output/timing privacy. Note that if 
    \begin{align*}
        \Pr[(Y, \TP{x, \env}) \in \{y\}\times S_2] \le  e^\varepsilon \cdot\Pr[(Y', \TP{x', \env'}) \in \{y\}\times S_2] 
    \end{align*}
    then
    \begin{align*}
        \Pr[Y = y] \le e^\varepsilon \cdot\Pr[Y' = y]
    \end{align*}
    by post-processing (Lemma~\ref{lem:dp postprocess}). Thus, 
    \begin{align*}
        \Pr[\TP{x, \env} \in S_2 | Y = y] &= \frac{\Pr[(Y, \TP{x, \env}) \in \{y\}\times S_2]}{\Pr[Y = y] } \\ \\
        &\le \frac{e^\varepsilon\cdot \Pr[(Y', \TP{x', \env'}) \in \{y\}\times S_2]}{e^{-\varepsilon}\cdot \Pr[Y' = y] } \\
        &= e^{2\varepsilon} \cdot\Pr[\TP{x', \env'} \in S_2 | Y' \in S_1] \\
    \end{align*}
\end{proof}

\begin{lemma}\label{lemma:timing-privacy-output}
    If $P: \mathcal{X}\times\mathcal{E}\to \mathcal{Y}\times\mathcal{E}$ is $(\varepsilon_1, \delta_1)$-differentially private and $(\varepsilon_2, \delta_2)$-timing-private, then $P$ achieves $(\varepsilon_1 + \varepsilon_2, \delta_1 + \delta_2)$-joint output/timing privacy.
\end{lemma}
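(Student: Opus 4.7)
The plan is to interpret the joint random variable $(Y, T) = (\out(P(x, \env)), \TP{x, \env})$ as arising from a two-step sampling process: first draw the output $Y$, then draw the runtime $T$ conditional on $Y$. Output DP governs the first step and OC-timing privacy governs the second (when the conditioning value $y$ lies in both supports), so the bound should follow from a composition-style calculation applied to this decomposition. By symmetry under swapping $x$ with $x'$, it suffices to prove only one direction of the joint DP inequality.

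Concretely, I would fix adjacent $x, x'$, input-compatible $\env, \env' \in \mathcal{E}$, and a target event $S \subseteq \mathcal{Y} \times \mathcal{T}$, abbreviating $Y = \out(P(x, \env))$, $T = \TP{x, \env}$, $Y' = \out(P(x', \env'))$, $T' = \TP{x', \env'}$, and $S_y = \{t : (y, t) \in S\}$. Define $h(y) = \Pr[T \in S_y \mid Y = y]$ and $h'(y) = \Pr[T' \in S_y \mid Y' = y]$ (extended by $0$ off the respective supports), so that
\[
\Pr[(Y, T) \in S] = \sum_y \Pr[Y = y] \cdot h(y) = \mathbb{E}[h(Y)].
\]
I would then split $\mathbb{E}[h(Y)]$ into the contributions from $Y \in \supp(Y')$ and $Y \notin \supp(Y')$. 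Since $\Pr[Y' \notin \supp(Y')] = 0$ and $Y, Y'$ are $(\varepsilon_1, \delta_1)$-indistinguishable, the second piece is at most $\Pr[Y \notin \supp(Y')] \le \delta_1$. On $\supp(Y) \cap \supp(Y')$, OC-timing privacy gives $h(y) \le e^{\varepsilon_2} h'(y) + \delta_2$ pointwise, reducing matters to bounding $\mathbb{E}[h'(Y)]$. Finally I would invoke the \emph{layer-cake} form of approximate DP --- that $(\varepsilon_1, \delta_1)$-DP between $Y$ and $Y'$ lifts from events to every $[0,1]$-valued function $g$ as $\mathbb{E}[g(Y)] \le e^{\varepsilon_1}\mathbb{E}[g(Y')] + \delta_1$, obtained by integrating the tail bound $\Pr[g(Y) \ge t] \le e^{\varepsilon_1}\Pr[g(Y') \ge t] + \delta_1$ over $t \in [0,1]$ --- with $g = h'$, so that $\mathbb{E}[h'(Y')] = \Pr[(Y', T') \in S]$ appears on the right-hand side.

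The main obstacle is pinning down the additive slack at exactly $\delta_1 + \delta_2$, rather than the weaker $(1 + e^{\varepsilon_2}) \delta_1 + \delta_2$ that naive chaining yields (and symmetrically $\delta_1 + e^{\varepsilon_1} \delta_2$ if one reduces the output and timing sides in the opposite order). To close this gap I would invoke the equivalent characterization of $(\varepsilon, \delta)$-indistinguishability as pure $\varepsilon$-DP up to total variation distance $\delta/(e^\varepsilon + 1)$ (in the style of Kasiviswanathan--Smith), couple the joint distributions through their pure-DP surrogates so that the TV perturbations combine additively under pure-DP composition, and convert back to approximate DP. Alternatively, one can handle the output-DP bad set $\supp(Y) \setminus \supp(Y')$ and the per-$y$ timing-DP bad event simultaneously within a single decomposition, so that the $\delta_1$ slack is not inflated by an $e^{\varepsilon_2}$ factor during the switch from $\mathbb{E}[\cdot]_Y$ to $\mathbb{E}[\cdot]_{Y'}$.
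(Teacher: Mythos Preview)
Your decomposition is correct and your diagnosis of the obstacle is exactly right: working directly from the OC-timing-privacy definition, the conditional bound $h(y)\le e^{\varepsilon_2}h'(y)+\delta_2$ only holds on $\supp(Y)\cap\supp(Y')$, so a naive chaining double-counts $\delta_1$. Your proposed repair via the one-sided characterization of $(\varepsilon_1,\delta_1)$-indistinguishability (replace $P_Y$ by some $P'$ with $d_{\mathrm{TV}}(P_Y,P')\le\delta_1$ and $P'\le e^{\varepsilon_1}P_{Y'}$ pointwise) does close the gap, because $\supp(P')\subseteq\supp(Y')$, and on that set either $y\in\supp(Y)$ (so OC-timing privacy applies) or $h(y)=0$ trivially; the resulting bound is exactly $e^{\varepsilon_1+\varepsilon_2}\Pr[(Y',T')\in S]+\delta_1+\delta_2$. (The specific constant $\delta/(e^\varepsilon+1)$ you quote is not the one you need here; the one-sided TV-$\delta$ version suffices.)

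The paper takes a shorter route that sidesteps the support issue entirely. It invokes the equivalence with \emph{sim}-timing privacy (Lemma~\ref{lemma:timing-sim-equiv}): the simulator $S(\cdot,\cdot,y)$ is $(\varepsilon_2,\delta_2)$-close between $(x,\env)$ and $(x',\env')$ for \emph{every} $y\in\mathcal{Y}$, not just $y$ in the intersection of supports, and agrees with the true conditional runtime on the relevant support. Hence $(Y,T)\equiv(Y,S(x,\env,Y))$ is literally an adaptive composition of an $(\varepsilon_1,\delta_1)$-DP mechanism with an $(\varepsilon_2,\delta_2)$-DP mechanism, and the cited composition lemma gives $(\varepsilon_1+\varepsilon_2,\delta_1+\delta_2)$ immediately. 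Your argument is more self-contained (it never leaves Definition~\ref{def:rtprivacy}) but has to re-derive the tight composition step by hand; the paper's argument is terser because the simulator was designed precisely to absorb the ``$y\notin\supp(Y')$'' headache you identified.
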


\begin{proof}
    The proof follows as an application of the composition theorem (Lemma~\ref{lem:dp composition}).  Since the program is $(\varepsilon_1, \delta_1)$-DP, we have that for any pair of $\din$-close inputs $x, x'$, all input-compatible $\env, \env'\in\mathcal{E}$, and for all $S \subseteq \mathcal{Y}$ 
    \begin{align*}
        \Pr[Y \in S] \le e^{\varepsilon_1} \cdot \Pr[Y' \in S] + \delta_1
    \end{align*}
    where $Y = \out(P(x, \env))$ and $Y' = \out(P(x', \env'))$.
    
    By timing privacy, for any pair of $\din$-close inputs $x, x'$, and all input-compatible $\env, \env'\in\mathcal{E}$, there exists a simulator $S$ satisfying for all $y \in \supp(Y)$
    \begin{align*}
        S(x, \env, y) \equiv \TP{x, \env}|_{Y=y}
    \end{align*}
and similarly for and all $y \in \supp(Y')$,
    \begin{align*}
        S(x', \env', y) \equiv \TP{x', \env'}|_{Y'=y}
    \end{align*} 
    Finally, for all $y\in\mathcal{Y}$, and all $Q\subseteq \mathcal{T}$ we have that
    \begin{align*}
        \Pr[S(x, \env, y)\in Q] \le e^\varepsilon_2 \cdot \Pr[S(x', \env', y)\in Q] + \delta_2
    \end{align*}
    
    Thus, the random variable $(Y, S(x, \env, Y)) \equiv (Y, \TP{x, \env}|_{Y})$ is equivalent to the composition of an $(\varepsilon_1, \delta_1)$-DP and $(\varepsilon_2, \delta_2)$-DP mechanism.
\end{proof}

In contrast to timing privacy (Definition~\ref{def:rtprivacy}), \emph{joint output/timing privacy} applies the standard DP definition to the joint random variable containing the program's output and runtime. In the case of pure-DP, this is equivalent to the program being both $\Theta(\varepsilon)$-differentially private and $\Theta(\varepsilon)$-timing private (see Appendix, Lemma~\ref{lemma:pure-joint-implies-timing}). However, we remark that the result does not extend to $(\varepsilon, \delta)$-timing privacy. For example, a $(\varepsilon, \delta)$-jointly output/timing-private program may, with probability $\delta$, output a special constant $y^*$ and completely encode the input in its running time. On the other hand, such a program would not satisfy timing privacy, which would require that the runtime be DP even conditioned on this rare, special output $y^*$.

\section{Timing-Private Delay Programs}
We now discuss a core component for building timing private programs in our framework. We introduce \emph{timing-private delay programs} which operate similarly to additive noise mechanisms for output privacy. Such programs are used to delay the execution of a program before releasing its output. When applied to timing-stable programs, the addition of such delay can transform these programs into timing-private ones.

\subsection{Timing-Private Delays}\label{sub-sec:delays}
\begin{definition}[Timing-Private Delays]
A distribution $\Phi$ on a time domain $\mathcal{T} \subseteq \mathbb{R}_{\ge 0}$ satisfying $\Pr[\Phi < 0] = 0$ is a $(t_{in}\to d_{out})$-\emph{timing-private delay distribution} under privacy measure $\mathcal{M}$ if $\forall t \in \mathcal{T}$, $t\le t_{in}$, if $T\sim \Phi$, then
\begin{align*}
    \mathcal{M}(T, T + t) \le \dout 
\end{align*}
\end{definition}

\begin{remark}
The requirement that $\Phi$ has zero probability mass on values less than $0$ enforces the physical reality that time cannot be subtracted. 
\end{remark}

\noindent We now give an example of a distribution $\Phi$ that is a $(t_{in} \to (\varepsilon, \delta))$-\emph{timing-private delay distribution} under the smoothed max-divergence privacy measure $D_{\infty}^\delta$ (Definition~\ref{def:smooth-max-divergence}).

Recall the Discrete Laplace distribution with shift $\mu$ and scale $s$ has a probability density function:
\[
f(x | \mu, s) = \frac{e^{1/s} - 1}{e^{1/s} + 1}\cdot e^{-|x - \mu|/s}
\]

and CDF:
\[
F(x | \mu, s) = 
\begin{cases} 
\frac{e^{1/s}}{e^{1/s} + 1}\cdot e^{-(\mu - x)/s}, & \text{if } x \leq \mu \\
1 - \frac{1}{e^{1/s} + 1}\cdot e^{-(x - \mu)/s}, & \text{otherwise }
\end{cases}
\]

\begin{lemma}\label{lemma:disclap-delay-dist}
    For $\mu \ge t_{in}$, $B \ge 2\mu$, $s = t_{in}/\varepsilon$ with $t_{in}, \varepsilon > 0$, the \emph{censored} Discrete Laplace distribution $\Phi = \min \{\max \{T, 0\}, B\} + c$, for a constant $c$ where $T$ is sampled from a Discrete Laplace distribution with parameters $\mu$ and $s$, is a $(t_{in} \to (\varepsilon, \delta))$-\emph{timing private delay distribution} under the smoothed max-divergence privacy measure $D_{\infty}^\delta$ with $\delta = 2\cdot e^{-\varepsilon(\mu - t_{in})/t_{in}}$.
\end{lemma}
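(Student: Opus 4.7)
The plan is to establish the smoothed max-divergence bound by partitioning the output space into a good region where the pointwise mass ratio is at most $e^\varepsilon$, and a bad region whose total mass is absorbed by $\delta$. Since $D_\infty^\delta$ is asymmetric but $(\varepsilon,\delta)$-DP requires both directions, I would verify $D_\infty^\delta(\Phi\,||\,\Phi+t) \le \varepsilon$ and $D_\infty^\delta(\Phi+t\,||\,\Phi) \le \varepsilon$ for every $t \in \{0,1,\ldots,t_{in}\}$.

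First, I would write the mass function of $\Phi$ explicitly. With $T \sim \text{DLap}(\mu,s)$ and $C(x) = \min\{\max\{x,0\},B\}$, the distribution $\Phi = C(T)+c$ is supported on $\{c,c+1,\ldots,c+B\}$, with censored boundary atoms $\Pr[\Phi=c] = \Pr[T\le 0]$ and $\Pr[\Phi=c+B] = \Pr[T\ge B]$, and interior masses $\Pr[\Phi=c+k] = \Pr[T=k]$. The distribution of $\Phi+t$ is the same picture shifted right by $t$. For any $y$ strictly inside the intersection of the two supports (neither endpoint), the mass ratio is the ordinary Discrete Laplace ratio, giving $e^{t/s} \le e^{t_{in}/s} = e^\varepsilon$; this is the content of the good region.

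Next I would identify the bad set in each direction and handle the boundary atoms individually. For the forward direction, the bad set is $B_1 = \{c,\ldots,c+t-1\} \cup \{c+B\}$: the first piece lies outside $\Phi+t$'s support (ratio $\infty$), while the upper atom $\{c+B\}$ compares the censored tail $\Pr[T\ge B]$ against a single Laplace point mass and yields a ratio $e^{(1-t)/s}/(e^{1/s}-1)$ that can exceed $e^\varepsilon$ when $s$ is large. Symmetrically, the bad set in the reverse direction is $B_2 = \{c+t\} \cup \{c+B+1,\ldots,c+B+t\}$. The remaining boundary atoms --- the lower one for the forward direction and the upper one for the reverse --- each give a ratio of the form $(1-e^{-1/s})\,e^{t/s}$, which simplifies by cancellation to at most $e^{t/s} \le e^\varepsilon$, so they belong in the good region.

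Finally, I would bound the probability of each bad set using the Discrete Laplace CDF formulas already stated in the paper. The hypothesis $\mu \ge t_{in}$ yields $\Pr[T \le t-1] = \tfrac{e^{1/s}}{e^{1/s}+1}\, e^{-(\mu-t+1)/s} \le e^{-\varepsilon(\mu-t_{in})/t_{in}}$, while $B \ge 2\mu$ yields $\Pr[T\ge B] \le e^{-\varepsilon(\mu-t_{in})/t_{in}}$; the two bad events in the reverse direction, $\Pr[T\le 0]$ and $\Pr[T \ge B-t+1]$, are bounded identically. Summing gives $\Pr[\Phi \in B_1]$ and $\Pr[\Phi+t\in B_2]$ each at most $2\,e^{-\varepsilon(\mu-t_{in})/t_{in}} = \delta$, which combined with the pointwise bound on the good region yields $D_\infty^\delta \le \varepsilon$ in both directions. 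The main obstacle will be the boundary-atom case analysis: the clamping concentrates a full tail into a single point, so one must distinguish the benign ratio $(1-e^{-1/s})e^{t/s}$ (which simplifies by cancellation) from the ill-behaved ratio $e^{(1-t)/s}/(e^{1/s}-1)$ (which has to be absorbed into $\delta$).
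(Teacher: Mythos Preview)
Your proposal is correct and follows the same overall strategy as the paper: split into a ``good'' region where the pointwise Discrete Laplace ratio gives $e^{t/s}\le e^\varepsilon$, and a ``bad'' region whose mass is bounded via the Discrete Laplace CDF using the hypotheses $\mu\ge t_{in}$ and $B\ge 2\mu$.

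Where you differ from the paper is in your treatment of the censored boundary atoms, and here your argument is actually the more careful one. The paper's proof asserts the pointwise ratio formula on the whole interval $(t_1,\,B+c]$ and only places the disjoint-support pieces into $\delta$; it does not separately analyze the points $c+t$ and $c+B$ where a clamped \emph{tail} mass of one distribution is compared against a \emph{single} Laplace point mass of the other. You correctly observe that one of these comparisons in each direction yields the benign ratio $(1-e^{-1/s})e^{t/s}\le e^\varepsilon$, while the other yields $e^{(1-t)/s}/(e^{1/s}-1)$, which can exceed $e^\varepsilon$ for large $s$ and therefore must be absorbed into the bad set. Your resulting $\delta$ still matches the lemma's bound of $2e^{-\varepsilon(\mu-t_{in})/t_{in}}$ because the extra boundary atom in each direction contributes at most $e^{-\varepsilon(\mu-t_{in})/t_{in}}$ (using $B\ge 2\mu$ and, implicitly, $t_{in}\ge 1$ in the discrete setting), so you recover exactly two summands per direction rather than one.
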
 
\begin{proof}
We take $t_1 \le t_{in}$ and let $\phi \sim \Phi$. For all $t_1 < t \le B + c$, we have that: 
\begin{align*}
\frac{\Pr[\phi + t_1 = t]}{\Pr[\phi = t]} = e^{\frac{\varepsilon|t - t_1 - \mu| - |t - \mu|}{t_{in}}} \le e^{\frac{\varepsilon|t_1|}{t_{in}}} \le e^{\varepsilon}
\end{align*}

Additionally, for all $t < c$ it follows that $\Pr[\phi + t_1 = t] = \Pr[\phi = t] = 0$. Similarly, for all $t > B + c + t_1$ it follows that $\Pr[\phi + t_1 = t] = \Pr[\phi = t] = 0$. However, for all $c \le t < t_1$ we have that $\Pr[\phi + t_1 = t] = 0$ and $\Pr[\phi = t] > 0$ and therefore the multiplicative distance between the distributions is unbounded on this interval. This event only happens when $\phi \le t_1$ yielding

\begin{align*}
    \delta = F(t_1|\mu, s) \le F(t_{in}|\mu, s) &= \frac{e^{\varepsilon/t_{in}}}{e^{\varepsilon/t_{in}} + 1}\cdot e^{-\varepsilon(\mu - t_{in})/t_{in}} \\ &\le e^{-\varepsilon(\mu - t_{in})/t_{in}}
\end{align*}

by the CDF of the Discrete Laplace distribution. Similarly, for $B + c < t$, we have that $\Pr[\phi + t_1 = t] > 0$ and $\Pr[\phi = t] = 0$. This event only happens when $\phi > B - t_1$:
\begin{align*}
    \Pr[\phi > B - t_1] &\le 1 - F(B - t_{in} |\mu, s) \\
    &= \frac{1}{e^{\varepsilon/t_{in}} + 1}\cdot \big(e^{-\varepsilon(B - t_{in} - \mu)/t_{in}} \big) \\
    &\le e^{-\varepsilon(\mu - t_{in})/t_{in}}
\end{align*}

It follows that  $D^{\delta}_{\infty}(\phi + t_1||\phi) \le \varepsilon$ and $D^{\delta}_{\infty}(\phi||\phi + t_1) \le \varepsilon$ for $\delta = 2e^{-\varepsilon(\mu - t_{in})/t_{in}}$. Finally, we note that $\Pr[\phi < 0] = 0$. 
\end{proof}

\begin{definition}[Timing-Private Delay Program]
A $(t_{in} \mapsto \dout)$-\emph{timing-private delay program} $P : \mathcal{X}\times\mathcal{E} \to \mathcal{X}\times\mathcal{E}$ with respect to time domain $\mathcal{T}\subseteq \mathbb{R}_{\geq 0}$ and probability distance measure $\mathcal{M}$ implements the following functionality:

\begin{enumerate}
    \item \textbf{Identity function}. For all $x \in \mathcal{X}$, and all $\env \in\mathcal{E}$
    \begin{align*}
        \Pr[\out(P(x, \env)) = x] = 1
    \end{align*}
    \item \textbf{Timing-private delay distributed runtime}. For all $x \in \mathcal{X}$, all $\env\in\mathcal{E}$, $\TP{x, \env}$ is distributed according to a $(t_{in}\mapsto\dout)$-timing-private delay distribution
    on $\mathcal{T}$.
\end{enumerate}
\end{definition}

\begin{lemma}\label{lemma:timing-private-delay-programs-are-output-pure}
    Let $P:\mathcal{X}\times\mathcal{E}\to\mathcal{X}\times\mathcal{E}$ be a timing-private delay program. Then $P$ is output-pure.
\end{lemma}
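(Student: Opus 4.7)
The plan is to exhibit the witness function $f:\mathcal{X}\to\mathcal{X}$ required by Definition~\ref{def:output-pure-programs} and verify it directly. The natural candidate is the identity function $f(x)=x$, which I would take as a (degenerate) randomized function that returns $x$ with probability $1$.

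To verify, I would invoke the first clause of the definition of a timing-private delay program, which guarantees that for every $x\in\mathcal{X}$ and every $\env\in\mathcal{E}$, $\Pr[\out(P(x,\env))=x]=1$. This means $\out(P(x,\env))$ is a point mass at $x$, and likewise $f(x)$ is a point mass at $x$, so the two random variables are identically distributed. In particular this holds for all input-compatible $\env\in\mathcal{E}$, which is exactly what Definition~\ref{def:output-pure-programs} requires.

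There is essentially no obstacle here, since the identity-function clause in the definition of a timing-private delay program already makes the output distribution independent of the environment $\env$ and equal to a fixed point mass. The only subtlety worth noting in the write-up is that Definition~\ref{def:output-pure-programs} allows $f$ to be randomized, so the deterministic choice $f(x)=x$ is a legitimate special case. I would therefore keep the proof to a two- or three-line argument that names $f$, cites the identity-function clause, and concludes that $f(x)\equiv \out(P(x,\env))$ for all compatible $(x,\env)$.
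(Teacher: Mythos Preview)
Your proposal is correct and follows essentially the same approach as the paper: both invoke the identity-function clause of the timing-private delay program definition to conclude that the output distribution is a point mass at $x$ independent of $\env$. Your version is slightly more explicit in naming the witness $f(x)=x$ required by Definition~\ref{def:output-pure-programs}, whereas the paper simply observes that $\out(P(x,\env))$ and $\out(P(x,\env'))$ are identically distributed; the content is the same.
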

\begin{proof}
    Since $P$ implements the identity function by definition, for all $x\in \mathcal{X}$, the random variables $\out(P(x, \env))$ and $\out(P(x, \env'))$ are identically distributed for all pairs of input-compatible execution environments $\env, \env' \in \mathcal{E}$.
\end{proof}

Program~\ref{program:disclap-timing-private} is an example of a timing-private delay program in the RAM and Word RAM models. 
The program has the property that it runs in time identically distributed to a \emph{censored}  Discrete Laplace distribution. Thus, we have:

\begin{lemma}\label{lemma:disclap-timing-private-delay}
    The timing-private delay Word RAM program $P: \mathcal{X}\times\mathcal{E}\to\mathcal{X}\times\mathcal{E}$ (Program~\ref{program:disclap-timing-private}) with $\texttt{shift} = \mu$ and scale parameter $s = 1/\ln(\texttt{b}/\texttt{a})$ satisfying $\ln(\texttt{b}/\texttt{a}) = \varepsilon/t_{in}$ is
    a $(t_{in} \mapsto (\varepsilon,\delta))$-\emph{timing-private delay program} on time domain $\mathbb{N}$
    for $\delta = 2\cdot e^{-\varepsilon(\mu - t_{in})/t_{in}}$.
\end{lemma}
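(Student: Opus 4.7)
The plan is to reduce this lemma directly to Lemma~\ref{lemma:disclap-delay-dist}, which already establishes that the censored Discrete Laplace distribution with shift $\mu$ and scale $s = t_{in}/\varepsilon$ is a $(t_{in}\mapsto(\varepsilon,\delta))$-timing-private delay distribution under the smoothed max-divergence, for $\delta = 2e^{-\varepsilon(\mu - t_{in})/t_{in}}$. So the work consists of verifying that Program~\ref{program:disclap-timing-private} satisfies the two requirements of being a timing-private delay program: implementing the identity function, and having runtime distributed according to the appropriate censored Discrete Laplace.

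First, I would check the identity-function property by tracing through the program. Since the program's sole output-relevant action is to copy (or leave in place) the input $x$ before halting, we have $\Pr[\out(P(x,\env)) = x] = 1$ for every $x \in \mathcal{X}$ and every input-compatible $\env \in \mathcal{E}$. This step is essentially by inspection of the program.

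Second, and the main content of the proof, is to verify that $T_P(x,\env)$ is distributed as the censored Discrete Laplace distribution $\min\{\max\{T, 0\}, B\} + c$ with the correct parameters. The program uses $\texttt{RAND}$ in a rejection-sampling or geometric-sampling style with parameters $\texttt{a}, \texttt{b}$ chosen so that $\ln(\texttt{b}/\texttt{a}) = \varepsilon/t_{in}$, which makes the sampled magnitude geometric with parameter giving scale $s = 1/\ln(\texttt{b}/\texttt{a}) = t_{in}/\varepsilon$. I would argue that (i) the number of instructions used to produce the sample is an affine function of the sampled value; (ii) this is then shifted by $\mu$ (via a fixed pre-amble of $\mu$ many delay instructions or an analogous mechanism) and censored at $0$ below and at some bound $B \ge 2\mu$ above (so that the program halts in bounded time in the Word RAM model); and (iii) a fixed additive constant $c$ accounts for the instructions outside the sampling loop (initialization, final write, HALT). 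The upshot is that $T_P(x,\env)$ is identically distributed to $\min\{\max\{T,0\},B\} + c$ where $T$ is a Discrete Laplace with parameters $(\mu, t_{in}/\varepsilon)$, independent of $(x,\env)$.

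Having established these two ingredients, the conclusion is immediate: by Lemma~\ref{lemma:disclap-delay-dist}, the distribution of $T_P(x,\env)$ is a $(t_{in}\mapsto(\varepsilon,\delta))$-timing-private delay distribution for $\delta = 2e^{-\varepsilon(\mu - t_{in})/t_{in}}$, so the program satisfies both clauses of the definition. The main obstacle I expect is not conceptual but bookkeeping: pinning down the precise additive constant $c$ and the upper bound $B$ from the line-by-line instruction count of Program~\ref{program:disclap-timing-private}, and verifying that $B \ge 2\mu$ as required by Lemma~\ref{lemma:disclap-delay-dist}. Any word-size assumptions needed so that the Word RAM program can actually represent $\mu$, $B$, and the sampled Discrete Laplace values should also be noted in passing, but they do not affect the distributional argument.
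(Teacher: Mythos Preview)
Your proposal is correct and follows essentially the same approach as the paper: verify the identity-function property by inspection, show that the runtime is distributed as $\min\{\max\{T,0\},B\}+c$ for a Discrete Laplace $T$ with the stated parameters, and then invoke Lemma~\ref{lemma:disclap-delay-dist}. The paper carries out exactly the bookkeeping you anticipate, obtaining $c = 16 + 7\cdot\texttt{bound}$ and $B = \mu + \texttt{bound} \ge 2\mu$; one small point to be aware of when you trace the code is that the sampling loop itself is padded to run in constant time (always \texttt{bound} iterations), and the variable delay comes entirely from the final $\texttt{NOP}(\texttt{sleep})$ instruction rather than from the sampling loop being affine in the sampled value.
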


\begin{proof}
    We will show that Program~\ref{program:disclap-timing-private} runs in time distributed exactly to $\min\{\max \{0, T\}, B\} + c$ where $c = 16 + 7\cdot\texttt{bound}$ and $T$ is the Discrete Laplace distribution with parameter $\mu = \texttt{shift}$, $s = t_{in} / \varepsilon = 1/\ln(\texttt{b}/\texttt{a})$, and $B \ge 2\cdot \mu$. The claim then follows from Lemma~\ref{lemma:disclap-delay-dist} and the fact that the program implements the identity function.
    
    The program implements the following functionality. First, the program samples a uniform random bit that is assigned to the variable $\texttt{sign}$ (line 6). Next, it flips a biased coin with probability $p = (\texttt{b} - \texttt{a})/(\texttt{b} + \texttt{a})$ (lines 7-10) and sets $\texttt{sample} = 0$ if the trial is successful (line 11). 
    Lines 1-14 take exactly $12$ instructions to execute independent of the branching condition due to the $\texttt{NOP}$ instructions. The program then loops $\texttt{bound}$ times, where each loop takes exactly $7$ instructions to execute. During this loop, if $\texttt{set} = 0$ (which happens conditioned on $\texttt{sample} \ne 0$), the program sets $\texttt{sample}$ to be a value drawn from a \emph{censored} Geometric distribution with $p = \frac{\texttt{b} - \texttt{a}}{\texttt{b}}$ and a bound of $t = \texttt{bound}$, where the censored Geometric distribution has probability mass:
    
    \begin{align*}
        \text{CensGeo}(x| p, t) =
        \begin{cases}
        p\cdot (1 - p)^{x - 1} & \text{if } x < t \\
        (1-p)^{t-1} & x = t \\
        0 & \text{otherwise}
        \end{cases}
    \end{align*}

    for $x \ge 1$. The distribution is censored since the program exits the loop when $\texttt{count} = \texttt{bound}$ (line 15) regardless of whether or not a success has been observed. In this case, the program sets $\texttt{sample} = \texttt{bound}$ (lines 26-29), which always executes in exactly $2$ instructions.
    
     The program then delays execution by the value $\texttt{sleep} = \texttt{shift} \pm \texttt{sample}$ (lines 30 - 33) where $\texttt{sample}$ is subtracted or added to $\texttt{shift}$ depending on the value of $\texttt{sign}$. Lines 30-33 always execute exactly $2$ instructions. Therefore, up to the final $\texttt{NOP}(\texttt{sleep})$ instruction (line 34), the program executes in exactly $c = 12 + 7\cdot\texttt{bound} + 2 + 2 = 16 + 7\cdot\texttt{bound}$ instructions. What remains to be shown is that $\texttt{sleep}$ is distributed exactly to $\Phi = \min\{ \max \{0, T\}, B\}$.

    We start with the case that $\texttt{sleep} = \mu$. Let $Z$ be the random variable associated with the coin flip at lines 7-10. Then $\Pr[Z = 1] = \frac{\texttt{b} - \texttt{a}}{\texttt{b} + \texttt{a}}$. Observe that for the PMF $f$ of the Discrete Laplace distribution:
    \begin{align*}
        f(x=\texttt{shift}|\mu = \texttt{shift}, s= 1/\ln(\texttt{b}/\texttt{a})) &= \frac{e^{\ln(\texttt{b}/\texttt{a})} - 1}{e^{\ln(\texttt{b}/\texttt{a})} + 1} \\
        &= \frac{\texttt{b} - \texttt{a}}{\texttt{b} + \texttt{a}} \\
        &= \Pr[Z = 1]
    \end{align*}
    
    and therefore the program delays for $\texttt{shift} = \mu$ instructions with probability mass according to DiscreteLaplace($\mu, s$). 
    
    We now consider the case where $\texttt{sleep} \ne \mu$, i.e., $Z = 0$. In this case, the program samples another value from a \emph{censored} Geometric distribution with $p = \frac{\texttt{b} - \texttt{a}}{\texttt{b}}$ (lines 15-29).  Thus, for all values $0 < y <  B$, $y \ne \mu$
    \begin{align*}
        \Pr[\texttt{sleep} = y] &= \frac{1}{2}\cdot (1 - \Pr[Z = 1]) \cdot \Pr[\textrm{CensGeo}(p, t) = |y - \mu|] \\
        &= \frac{1}{2}\cdot \big(1 - \frac{\texttt{b} - \texttt{a}}{\texttt{b} + \texttt{a}}\big)\cdot \big(\frac{\texttt{a}}{\texttt{b}}\big)^{|y - \mu| - 1}\cdot (1 - \frac{\texttt{a}}{\texttt{b}}) \\
        &= \frac{1}{2}\cdot(1 - \frac{e^{1/s} - 1}{e^{1/s} + 1})\cdot (e^{-1/s})^{(|y - \mu| - 1)} \cdot (1 - e^{-1/s}) \\
        &= \frac{1}{2}\cdot (\frac{e^{1/s} + 1 - e^{1/s} + 1}{e^{1/s} + 1})\cdot (e^{-1/s})^{(|y - \mu| - 1)} \cdot (1 - e^{-1/s}) \\
        &= \frac{1}{e^{1/s} + 1}\cdot (e^{-1/s})^{(|y - \mu| - 1)} \cdot (1 - e^{-1/s}) \\
        &= \frac{e^{(-(|y - \mu|) + 1)/s}- e^{-(|y - \mu|)/s}}{e^{1/s} + 1} \\
        &= \frac{e^{1/s} - 1}{e^{1/s} + 1}\cdot e^{-|y - \mu|/s}
    \end{align*}
    which equals the PMF of the Discrete Laplace distribution for all $0 < y < B, y \ne \mu$. 
    
    We now consider the edge cases. We show that $\Pr[\texttt{sleep} = 0] = F(0|\mu, s)$ where $F$ is the CDF of the Discrete Laplace distribution with shift $\mu$ and with scale parameter $s=1/\ln(\texttt{b}/\texttt{a})$ satisfying $\ln(\texttt{b}/\texttt{a}) = \varepsilon/t_{in}$:

    \begin{align*}
        \Pr[\texttt{sleep} = 0] &= \frac{1}{2}\cdot (1-\Pr[Z = 1])\cdot \Pr[\textrm{CensGeo}(p, t) \ge \mu - 1] \\
        &= \frac{1}{2}\cdot \big(1 - \frac{\texttt{b} - \texttt{a}}{\texttt{b} + \texttt{a}}\big)\cdot \big(\frac{\texttt{a}}{\texttt{b}}\big)^{(\mu - 1)} \\
        &= \frac{1}{2}\cdot(1 - \frac{e^{\varepsilon/t_{in}} - 1}{e^{\varepsilon/t_{in}} + 1})\cdot (e^{-\varepsilon/t_{in}})^{(\mu - 1)} \\
         &= \frac{1}{e^{\varepsilon/t_{in}} + 1}\cdot (e^{-\varepsilon\cdot (\mu - 1)/t_{in}}) \\
        &= \frac{e^{\varepsilon/t_{in}}}{e^{\varepsilon/t_{in}} + 1}\cdot (e^{-\varepsilon\cdot \mu/t_{in}}) \\
         &= F(0 | \mu = \mu, s=\varepsilon/t_{in})
    \end{align*}
    where the leading $\frac{1}{2}$ comes from the probability that the sign is negative.
    
    Finally, we show that $\Pr[\texttt{sleep} = B] = 1 - F(B - 1|\mu, s)$ where $F$ is the CDF of the Discrete Laplace distribution with shift $\mu$ and scale parameter $s=1/\ln(\texttt{b}/\texttt{a})$ satisfying $\ln(\texttt{b}/\texttt{a}) = \varepsilon/t_{in}$ and $B = \mu + \texttt{bound} \ge 2\mu$:
    \begin{align*}
        \Pr[\texttt{sleep} = B] &= \frac{1}{2}\cdot (1-\Pr[Z = 1])\cdot \Pr[\textrm{CensGeo}(p, t) = B] \\
        &= \frac{1}{2}\cdot \big(1 - \frac{\texttt{b} - \texttt{a}}{\texttt{b} + \texttt{a}}\big)\cdot \big(\frac{\texttt{a}}{\texttt{b}}\big)^{B - 1} \\
        &= \frac{1}{2}\cdot(1 - \frac{e^{\varepsilon/t_{in}} - 1}{e^{\varepsilon/t_{in}} + 1})\cdot (e^{-\varepsilon/t_{in}})^{B - 1} \\
         &= \frac{1}{e^{\varepsilon/t_{in}} + 1}\cdot (e^{-\varepsilon\cdot (B - 1)/t_{in}}) \\
        &= \frac{1}{e^{\varepsilon/t_{in}} + 1}\cdot (e^{-\varepsilon\cdot (B - 1)/t_{in}}) \\
         &= 1 - F(B - 1 | \mu = \mu, s=\varepsilon/t_{in}) 
    \end{align*}

    Thus, the program has runtime that is distributed exactly to $\Phi = \min \{\max\{ 0, T\}, B\} + c$ for $c = 16 + 7\cdot\texttt{bound}$. The lemma follows from Lemma~\ref{lemma:disclap-delay-dist}.
\end{proof}

\begin{lemma}\label{lemma:disclap-timing-private-delay-pure}
    The timing-private delay Word RAM program $P: \mathcal{X}\times\mathcal{E}\to\mathcal{X}\times\mathcal{E}$ (Program~\ref{program:disclap-timing-private}) is
    both output-pure and timing-pure.
\end{lemma}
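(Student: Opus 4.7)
The plan is to dispatch the two claims separately, leveraging the detailed analysis already carried out in the proof of Lemma~\ref{lemma:disclap-timing-private-delay}.

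For output-purity I would simply invoke Lemma~\ref{lemma:timing-private-delay-programs-are-output-pure}: since Lemma~\ref{lemma:disclap-timing-private-delay} has already established that Program~\ref{program:disclap-timing-private} is a $(t_{in}\mapsto(\varepsilon,\delta))$-timing-private delay program, and every such program is output-pure, the result is immediate. The witnessing function is the deterministic identity $f(x)=x$, because the identity-function requirement in the definition of a timing-private delay program forces $\Pr[\out(P(x,\env))=x]=1$ for every input-compatible environment $\env$.

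For timing-purity I would appeal to the finer analysis inside the proof of Lemma~\ref{lemma:disclap-timing-private-delay}, which shows that $\TP{x,\env}$ is distributed exactly as $\min\{\max\{0,T\},B\}+c$, where $T$ is drawn from $\mathrm{DiscreteLaplace}(\texttt{shift},1/\ln(\texttt{b}/\texttt{a}))$ and $c=16+7\cdot\texttt{bound}$. The key observation is that this distribution depends only on the hard-coded parameters $\texttt{a},\texttt{b},\texttt{shift},\texttt{bound}$ together with the outputs of the program's internal $\texttt{RAND}$ calls. Inspection of the program shows that its control flow never branches on the input $x$ or on any uninitialized memory cell: all loops are bounded by \texttt{bound}, conditionals test only locally-initialized variables, and the \texttt{NOP} padding makes the per-iteration instruction count constant independent of the branching direction. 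Hence $\TP{x,\env}$ has the same distribution for every input-compatible $(x,\env)$, and the randomized constant function $f$ that draws from $\min\{\max\{0,T\},B\}+c$ independently of its argument witnesses timing-purity.

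The main (and essentially only) nontrivial step is the code inspection needed to confirm $\env$-independence of the runtime; every other ingredient has already been proved in the preceding lemma, so no substantial obstacle remains beyond reassembling the pieces.
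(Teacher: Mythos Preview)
Your proposal is correct and arrives at the same conclusion by essentially the same mechanism as the paper: the program never reads uninitialized memory or branches on its input, so both output and runtime are environment-independent. The paper compresses this into a single sentence of code inspection, whereas you route the output-purity half through Lemma~\ref{lemma:timing-private-delay-programs-are-output-pure} (using that Lemma~\ref{lemma:disclap-timing-private-delay} already certified the program as a timing-private delay program); this is a slightly more modular packaging but not a substantively different argument, and there is no circularity since Lemma~\ref{lemma:disclap-timing-private-delay} does not rely on the present lemma.
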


\begin{proof}
    The proof follows from the fact that the program never interacts with uninitialized memory. Therefore for all $y \in \mathbb{N}$,  the program's output and runtime is identically distributed on all pairs of input-compatible execution environments $\env, \env' \in\mathcal{E}$.
\end{proof}


\begin{algorithm}[H]
\vspace{5px}
\begin{flushleft}
    
\textbf{Input:} An input $x$ located in memory $M[\texttt{input\_ptr}],\ldots, M[\texttt{input\_ptr} + \texttt{input\_len} - 1]$. 

\vspace{5px}
\textbf{Output:} The input $x$,
with runtime $\min\{ \max\{T,0\}, \texttt{shift} + \texttt{bound}\} + c$, 
where
$T$ is drawn from a Discrete Laplace distribution with parameters $\mu=\texttt{shift}$ and $s=1/\ln(\texttt{b}/\texttt{a})$ where
$\texttt{shift}$, $\texttt{a}$, $\texttt{b}>\texttt{a}$, and $\texttt{bound} \ge \texttt{shift}$ are hardcoded constants, and $c = 16 + 7\cdot\texttt{bound}$. For the Word RAM version of the program, we require $s, (\mu + \texttt{bound})$ and $(\texttt{a} + \texttt{b})$ to be less than $2^\omega$.
\end{flushleft}

\begin{algorithmic}[1]
\STATE $\texttt{output\_ptr} = \texttt{input\_ptr}$;
\STATE $\texttt{output\_len} = \texttt{input\_len}$; \COMMENT{compute the identity function}
\STATE $\texttt{count} = 0$;
\STATE $\texttt{set} = 0$;
\STATE $\texttt{sample} = 0$;
\STATE $\texttt{sign} = \texttt{RAND}(1)$; \COMMENT{add or subtract from $\mu$}
\STATE \texttt{zeroprobA} = $\texttt{b} - \texttt{a}$; \COMMENT{num. of prob. to sample a zero}
\STATE \texttt{zeroprobB} = $\texttt{b} + \texttt{a}$; \COMMENT{denom. of prob. to sample a zero}
\STATE $\texttt{idx} = \texttt{RAND}(\texttt{zeroprobB} - 1)$;
\item[]
\IF{$\texttt{idx} < \texttt{zeroprobA}$}
    \STATE $\texttt{sample} = 0$; \COMMENT{With probability $\frac{\texttt{b} - \texttt{a}}{\texttt{b} + \texttt{a}}$ add $\mu$ delay}
    \STATE $\texttt{set} = 1$;
\ELSE
    \STATE $\texttt{NOP}(2)$;
\ENDIF
\item[]

\WHILE{$\texttt{count} < \texttt{bound}$}
    \STATE $\texttt{count} = \texttt{count} + 1$;
    \IF{$\texttt{set} == 0$}
        \STATE $\texttt{idx} = \texttt{RAND}(\texttt{b} - 1)$;
        \IF{$\texttt{idx} < \texttt{b} - \texttt{a}$}
            \STATE $\texttt{sample} = \texttt{count}$; \COMMENT{with probability $\frac{\texttt{b} - \texttt{a}}{\texttt{b}}$ add $\mu \pm \texttt{count}$ delay}
            \STATE $\texttt{set} = 1$;
        \ELSE
            \STATE $\texttt{NOP}(2)$;
        \ENDIF
    \ELSE
        \STATE $\texttt{NOP}(4)$;
    \ENDIF
\ENDWHILE
\item[]

\IF{$\texttt{set} == 0$}
\STATE $\texttt{sample} = \texttt{bound}$;
\ELSE
\STATE $\texttt{NOP}(1)$;
\ENDIF
\item[] 
\IF{$\texttt{sign} == 0$}
    \STATE $\texttt{sleep} = \texttt{shift} - \texttt{sample}$;
\ELSE
    \STATE $\texttt{sleep} = \texttt{shift} + \texttt{sample}$;
\ENDIF
\item[]
\STATE $\texttt{NOP}(\texttt{sleep})$; \COMMENT{Delay for sampled amount of time}
\STATE $\texttt{HALT}$;
\end{algorithmic}

\caption{Timing-Private Delay Program}
\label{program:disclap-timing-private}
\end{algorithm}

\section{Chaining and Composition}\label{sec:chaining}
Software libraries such as OpenDP and Tumult Core~\cite{gaboardi2020programming,berghel2022tumult,Shoemate_OpenDP_Library} support chaining together multiple algorithms in a modular fashion to create more complex mechanisms with provable stability and privacy guarantees. Motivated by these libraries, we extend our framework to support similar functionality. Specifically, we discuss properties of timing-stable programs that are combined to create more complex functionality such as chaining and composition. 

\subsection{Chaining Timing-Stable Programs}
While timing stability bounds the differences in program execution time for a single program $P$, we are often interested in chaining together multiple programs, e.g., executing program $P_2$ on the output of program $P_1$. For such chaining operations, we use the notation $(P_2\circ P_1)(x, \env_1)$ interchangeably with $P_2(P_1(x, \env_1), \env_2)$ for some execution environment $\env_2$ (see below).



\begin{definition}[Chaining-compatible Programs]\label{def:chaining-compat}
    We say that programs $P_1 : \mathcal{X}\times\mathcal{E}\to\mathcal{Y}\times\mathcal{E}$ and $P_2 : \mathcal{Y}\times\mathcal{E}\to\mathcal{Z}\times\mathcal{E}$ are \emph{chaining-compatible} if there exists a program
    $(P_2\circ P_1)$ and a constant $c$ such that for inputs $x\in \mathcal{X}$ and
    environments $\env_1\in \mathcal{E}$, there exists a (possibly random) $\env_2\in \mathcal{E}$ such that we have:
    $$\out((P_2\circ P_1)(x,\env_1)) = \out(P_2(\out(P_1(x, \env_1)),\env_2))$$
    and
        $$\TPP{(P_2\circ P_1)}{x, \env_1} = \TPP{P_1}{x, \env_1} + \TPP{P_2}{\out(P_1(x, \env_1)), \env_2} + c.$$
    Here we allow the random variable $\env_2$ to be arbitrarily correlated with $x$, as well as the output and runtime of $P_1(x,\env_1)$, but should be independent of the coin tosses of $P_2$.
\end{definition}

\begin{remark}
    If the construction of $P_2\circ P_1$ in the computational model does not modify the environment when chaining programs together, then $\env_2$ equals $\out\env(P_1(x, \env_1))$. However, real-world systems often modify the environment before executing sequential programs, e.g., setting up CPU registers.
\end{remark}

RAM and Word RAM programs can easily be made chaining-compatible. 

\begin{lemma}\label{lemma:chaining-word-ram}
    Word RAM programs $P_1:\mathcal{X} \times\mathcal{E}\to\mathcal{Y}\times\mathcal{E}$ and $P_2: \mathcal{Y}\times\mathcal{E}\to\mathcal{Z}\times\mathcal{E}$ are chaining compatible.
\end{lemma}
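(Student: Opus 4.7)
The plan is to give an explicit construction of $P_2 \circ P_1$ in the Word RAM model and then verify the two required identities from Definition~\ref{def:chaining-compat}. The construction concatenates $P_1$'s instruction listing, a short fixed ``glue'' block, and $P_2$'s instruction listing, relabeling line numbers and register names as needed so that the two programs' internal variables do not collide.

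First I would describe the glue block. After $P_1$ halts, the relevant state for $P_2$ is entirely encoded in memory together with the built-in variables \texttt{output\_ptr} and \texttt{output\_len} that $P_1$ leaves behind. The glue block does four things, each taking a fixed number of Word RAM instructions: (i) copy \texttt{output\_ptr} into \texttt{input\_ptr}; (ii) copy \texttt{output\_len} into \texttt{input\_len}; (iii) reset \texttt{output\_ptr} and \texttt{output\_len} to zero (or any canonical value) so that $P_2$ may write its own output descriptors; and (iv) replace the \texttt{HALT} that would have ended $P_1$ with an unconditional jump to the first instruction of the relabeled $P_2$. Let $c$ be the total number of instructions in this glue block; it is a constant depending only on $P_1$ and $P_2$'s source, not on $x$ or $\env_1$.

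Next I would define the intermediate environment $\env_2$ that witnesses the definition. Set $\env_2 := \mathtt{outenv}(P_1(x,\env_1))$ modified by the glue block's four assignments, i.e.\ with \texttt{input\_ptr} and \texttt{input\_len} set to the values of $P_1$'s \texttt{output\_ptr} and \texttt{output\_len}, and \texttt{output\_ptr}, \texttt{output\_len} reset. Because the glue block is deterministic given $\mathtt{outenv}(P_1(x,\env_1))$, the random variable $\env_2$ is a deterministic function of $x$, $\env_1$, and $P_1$'s coin tosses; in particular it is independent of the coin tosses of $P_2$, as required. Moreover, by construction $(x,\env_1)$ and $(\out(P_1(x,\env_1)),\env_2)$ are input-compatible pairs for $P_1$ and $P_2$ respectively.

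Finally I would verify the two identities. The output identity $\out((P_2\circ P_1)(x,\env_1)) = \out(P_2(\out(P_1(x,\env_1)),\env_2))$ follows because executing $P_2$'s instructions on the memory state $\env_2$ produces, by the Word RAM semantics, exactly the random variable $\out(P_2(\out(P_1(x,\env_1)),\env_2))$. The runtime identity $\TPP{P_2\circ P_1}{x,\env_1} = \TPP{P_1}{x,\env_1} + \TPP{P_2}{\out(P_1(x,\env_1)),\env_2} + c$ follows by summing the instruction counts over the three pieces of the concatenated program, using that the glue block always executes exactly $c$ instructions. I do not expect any real obstacle here: the proof is essentially a straightforward bookkeeping exercise on the Word RAM instruction set, and the only point requiring a little care is to make the glue block truly constant-time (independent of input length), which is why we copy rather than move the input by content and why we reset descriptors with a fixed number of assignments rather than, say, zeroing a variable-length memory region.
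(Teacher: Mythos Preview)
Your proposal is correct and follows essentially the same approach as the paper: concatenate $P_1$'s code with $P_2$'s code, replacing $P_1$'s \texttt{HALT} with a constant-size glue block that copies \texttt{output\_ptr}/\texttt{output\_len} into \texttt{input\_ptr}/\texttt{input\_len}, relabel line numbers, and take $\env_2$ to be $\mathtt{outenv}(P_1(x,\env_1))$ with these descriptor modifications. Your step~(iii) resetting the output descriptors is an extra nicety the paper omits, but it changes nothing substantive.
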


\begin{proof}
    Observe that, after $P_1$'s execution, the program's output $y = \out(P_1(x, \env_1))$ is located in memory locations 
    \begin{align*}
        M[\texttt{output\_ptr}],\dots, M[\texttt{output\_len} - 1]
    \end{align*}
    To create the chained program $P_2 \circ P_1$, we replace the $\texttt{HALT}$ instruction at the end of $P_1$ with the instructions $\texttt{input\_ptr = output\_ptr}$ and $\texttt{input\_len = output\_len}$.  We then add all of the code from $P_2$ and adjust the necessary line numbers in any $\texttt{GOTO}$ statements in the relevant $P_2$ code. The resulting program is an execution of $P_2$ on input $y$ in environment $\env_2$ that is the same as $\out\env(P_1(x, \env_1))$ except for these modifications to $\texttt{input\_ptr}$ and $\texttt{input\_len}$. The total runtime is the time to execute $P_1(x, \env_1)$, plus the execution time corresponding to the extra lines for adjusting the input pointer and input length variables, plus the time to execute $P_2$ on the output from $P_1$ in environment $\env_2$. The proof follows similarly for RAM programs.
\end{proof}

\begin{lemma}\label{lemma:chaining-joint}
Suppose $P_1 : \mathcal{X}\times\mathcal{E} \to \mathcal{Y}\times\mathcal{E}$ is $(d_1\mapsto \{d_2, t_1\})$-jointly output/timing stable under input distance metric $d_{\mathcal{X}}$ and output distance metric $d_\mathcal{Y}$. Similarly, suppose $P_2 : \mathcal{Y}\times\mathcal{E}\to \mathcal{Z}\times\mathcal{E}$ is $(d_2 \mapsto \{d_3, t_2\})$-jointly output/timing stable under input distance metric $d_{\mathcal{Y}}$ and output distance metric $d_\mathcal{Z}$. Then if $P_1$ and $P_2$ are chaining compatible, the chained program $P_2\circ P_1: \mathcal{X} \times\mathcal{E} \to \mathcal{Z}\times\mathcal{E}$ 
is $(d_1 \mapsto \{d_3, t_1 + t_2\})$-\emph{jointly output/timing stable}.
\end{lemma}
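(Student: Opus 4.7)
\textbf{Proof Proposal for Lemma~\ref{lemma:chaining-joint}.}
The plan is to build the desired coupling in two stages, mirroring the two-stage structure of $P_2\circ P_1$ given by chaining compatibility (Definition~\ref{def:chaining-compat}), and then glue the two stages together. Fix $x, x'\in\mathcal{X}$ with $d_\mathcal{X}(x, x')\le d_1$ and any pair of input-compatible environments $\env_1, \env_1'\in\mathcal{E}$. By the $(d_1\mapsto\{d_2, t_1\})$-joint output/timing stability of $P_1$, there is a coupling $((\tilde{Y}_1, \tilde{R}_1), (\tilde{Y}_1', \tilde{R}_1'))$ of $(\out(P_1(x,\env_1)), T_{P_1}(x,\env_1))$ and $(\out(P_1(x',\env_1')), T_{P_1}(x',\env_1'))$ such that $d_\mathcal{Y}(\tilde{Y}_1, \tilde{Y}_1')\le d_2$ and $|\tilde{R}_1 - \tilde{R}_1'|\le t_1$ with probability $1$.

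Next, by chaining compatibility there exist (random) environments $\env_2$ and $\env_2'$, correlated in an arbitrary way with the $P_1$-stage but independent of $P_2$'s internal coin tosses, such that $\out((P_2\circ P_1)(x,\env_1)) = \out(P_2(\tilde{Y}_1, \env_2))$, $T_{P_2\circ P_1}(x,\env_1) = \tilde{R}_1 + T_{P_2}(\tilde{Y}_1, \env_2) + c$, and analogously for $x', \env_1'$. Condition on any realization $(\tilde{Y}_1, \tilde{R}_1, \env_2) = (y_1, r_1, e_2)$ and $(\tilde{Y}_1', \tilde{R}_1', \env_2') = (y_1', r_1', e_2')$. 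Since $d_\mathcal{Y}(y_1, y_1')\le d_2$ with probability $1$ and $e_2, e_2'$ are input-compatible with $y_1, y_1'$ respectively (they are the environments under which $P_2$ is actually invoked), the $(d_2\mapsto\{d_3, t_2\})$-joint output/timing stability of $P_2$ yields a coupling $((\tilde{Z}, \tilde{S}), (\tilde{Z}', \tilde{S}'))$ of $(\out(P_2(y_1, e_2)), T_{P_2}(y_1, e_2))$ and $(\out(P_2(y_1', e_2')), T_{P_2}(y_1', e_2'))$ satisfying $d_\mathcal{Z}(\tilde{Z}, \tilde{Z}')\le d_3$ and $|\tilde{S} - \tilde{S}'|\le t_2$ almost surely.

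Gluing these conditional $P_2$-couplings onto the fixed joint distribution of $(\tilde{Y}_1, \tilde{R}_1, \env_2, \tilde{Y}_1', \tilde{R}_1', \env_2')$ gives a joint distribution over the tuple $(\tilde{Y}_1, \tilde{R}_1, \tilde{Z}, \tilde{S}, \tilde{Y}_1', \tilde{R}_1', \tilde{Z}', \tilde{S}')$. The marginal of $(\tilde{Z}, \tilde{R}_1 + \tilde{S} + c)$ is identically distributed to $(\out((P_2\circ P_1)(x, \env_1)), T_{P_2\circ P_1}(x, \env_1))$ by chaining compatibility, and analogously on the primed side, so this is a valid coupling for the chained program. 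The output distance is $d_\mathcal{Z}(\tilde{Z}, \tilde{Z}')\le d_3$ directly, and the runtime distance satisfies
\[
\bigl|(\tilde{R}_1 + \tilde{S} + c) - (\tilde{R}_1' + \tilde{S}' + c)\bigr| \;\le\; |\tilde{R}_1 - \tilde{R}_1'| + |\tilde{S} - \tilde{S}'| \;\le\; t_1 + t_2
\]
with probability $1$, which is exactly $(d_1\mapsto\{d_3, t_1+t_2\})$-joint output/timing stability.

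The main obstacle is bookkeeping around the random environment $\env_2$ introduced by chaining compatibility: one has to ensure that the $P_2$-stability guarantee can be invoked conditionally on $\env_2, \env_2'$, which is valid because joint output/timing stability is quantified over \emph{all} input-compatible environment pairs, and because chaining compatibility guarantees $\env_2$ is independent of $P_2$'s coin tosses so that the conditional distribution of $P_2$'s output and runtime is the correct one. The gluing step itself is standard (e.g., the Gluing Lemma of \cite{villani2009optimal}, as already used in the proof of Lemma~\ref{lemma:timing-stable-implies-linear}), and the arithmetic combining the two distance bounds is immediate from the triangle inequality.
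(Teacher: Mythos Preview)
Your proposal is correct and follows essentially the same approach as the paper: both construct the coupling in two stages by first invoking $P_1$'s joint stability, then conditioning on the realized first-stage values (and the induced environments $\env_2,\env_2'$) to invoke $P_2$'s joint stability, and finally combining via chaining compatibility and the triangle inequality. Your explicit discussion of why the environment bookkeeping is valid and your appeal to the Gluing Lemma make the argument slightly more careful than the paper's presentation, but the structure is the same.
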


\begin{proof}
    We show that for every $x, x'$ satisfying $d_{\mathcal{X}}(x, x') \le d_1$ under metric $d_\mathcal{X}$, every pair of input-compatible execution environments $\env_1, \env_1' \in \mathcal{E}$, there exists a coupling $((\Tilde{u}, \Tilde{v}), (\Tilde{u}', \Tilde{v}'))$ of $(\out((P_2 \circ P_1)(x, \env_1)), \TPP{P_2\circ P_1}{x, \env_1})$ and $(\out((P_2 \circ P_1)(x', \env_1')), \TPP{(P_2\circ P_1)}{x', \env'_1})$ such that $d_\mathcal{Z}(\Tilde{u}, \Tilde{u}') \le d_3$ and $|\Tilde{v} - \Tilde{v}'| \le t_1 + t_2$ with probability $1$. \\

    \noindent We construct the coupling as follows. Let $((\Tilde{u}_1, \Tilde{v}_1),(\Tilde{u}'_1, \Tilde{v}'_1))$ be the coupling associated with the joint random variables $(\out(P_1(x, \env_1)), \TPP{P_1}{x, \env_1})$ and $(\out(P_1(x', \env_1')), \TPP{P_1}{x', \env_1'})$ satisfying $d_\mathcal{Y}(\Tilde{u}_1, \Tilde{u}'_1) \le d_2$ and $|\Tilde{v}_1 - \Tilde{v}'_1| \le t_1$ with probability $1$ (by joint output/timing stability of $P_1$).
    
    Then condition on any fixed $((u_1, v_1), (u'_1, v'_1))\in \supp((\Tilde{u}_1, \Tilde{v}_1), (\Tilde{u}'_1, \Tilde{v}'_1))$ and execution environments $\env_2$ and $\env_2'$ (conditioned on $(u_1, v_1)$ and $(u'_1, v'_1)$ respectively). Since $d_\mathcal{Y}(u_1, u'_1) \le d_2$, then for every pair of execution environments $\env_2, \env_2' \in \mathcal{E}$, there exists a coupling $((\Tilde{u}_2, \Tilde{v}_2),(\Tilde{u}'_2, \Tilde{v}'_2))$ of the joint random variables $(\out(P_2(u_1, \env_2)), \TPP{P_2}{u_1, \env_2})$ and $(\out(P_2(u_1', \env_2')), \TPP{P_2}{u_1', \env_2'})$ satisfying $d_\mathcal{Z}(\Tilde{u}_2, \Tilde{u}_2') \le d_3$ and $|\Tilde{v}_2 - \Tilde{v}'_2| \le t_2$ with probability $1$ (by joint output/timing stability of $P_2$). \\

    We sample $((u_2, v_2), (u'_2, v'_2))\sim ((\Tilde{u}_2, \Tilde{v}_2), (\Tilde{u}'_2, \Tilde{v}'_2))|_{\env_2, \env'_2}$ and set:

    \begin{align*}
        (\Tilde{u}, \Tilde{v}) &= (u_2, v_1 + v_2 + c) \\
        &\equiv (\out(P_2(u_1, \env_2)), \TPP{P_1}{x, \env_1} + \TPP{P_2}{u_1, \env_2} + c) \\
        &\equiv (\out(P_2(\out(P_1(x, \env_1)), \env_2)), \TPP{P_1}{x, \env_1} + \TPP{P_2}{\out(P_1(x, \env_1)), \env_2} + c) \\
        &\equiv ((\out((P_2 \circ P_1)(x, \env_1)), \TPP{P_2\circ P_1}{x, \env_1})
    \end{align*}

    and similarly for $(\Tilde{u}', \Tilde{v}')$, where $c$ is the chaining constant in Definition~\ref{def:chaining-compat}. It follows that:
    \begin{align*}
        d_\mathcal{Z}(\Tilde{u}, \Tilde{u}') \le d_3
    \end{align*}
    and
    \begin{align*}
        |\Tilde{v} - \Tilde{v}'| &= |v_1 + v_2 + c - v_1' - v_2' - c| \\
        &= |v_1 - v_1' + v_2 - v_2'| \\
        &\le |v_1 - v_1'| + |v_2 - v_2'| \\
        & \le t_1 + t_2
    \end{align*}
\end{proof}

\begin{lemma}\label{lemma:oc-chaining}
Suppose $P_1 : \mathcal{X}\times\mathcal{E} \to \mathcal{Y}\times\mathcal{E}$ is $(d_1\mapsto \{d_2, t_1\})$-jointly output/timing stable under input distance metric $d_{\mathcal{X}}$ and output distance metric $d_\mathcal{Y}$. Similarly, suppose $P_2 : \mathcal{Y}\times\mathcal{E}\to \mathcal{Z}\times\mathcal{E}$ is $(d_2 \mapsto t_2)$-OC timing stable under $d_{\mathcal{Y}}$. Then if $P_1$ and $P_2$ are chaining compatible, the chained program $(P_2\circ P_1): \mathcal{X} \times\mathcal{E} \to \mathcal{Z}\times\mathcal{E}$ 
is $(d_1 \mapsto t_1 + t_2)$-\emph{OC timing stable}.
\end{lemma}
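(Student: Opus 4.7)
The plan is to mirror the two-stage coupling strategy used in the proof of Lemma~\ref{lemma:chaining-joint}, adapted to the output-conditional setting. Fix adjacent inputs $x, x' \in \mathcal{X}$ with $d_\mathcal{X}(x, x') \le d_1$, input-compatible environments $\env_1, \env_1' \in \mathcal{E}$, and an output $z \in \supp(\out((P_2 \circ P_1)(x, \env_1))) \cap \supp(\out((P_2 \circ P_1)(x', \env_1')))$. The goal is to construct a coupling of the conditional runtime random variables $\TPP{P_2 \circ P_1}{x, \env_1}|_{\out((P_2 \circ P_1)(x, \env_1)) = z}$ and $\TPP{P_2 \circ P_1}{x', \env_1'}|_{\out((P_2 \circ P_1)(x', \env_1')) = z}$ whose pointwise difference is at most $t_1 + t_2$ almost surely.

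In the first stage, I would invoke joint output/timing stability of $P_1$ to obtain a coupling $((\tilde u_1, \tilde v_1), (\tilde u_1', \tilde v_1'))$ of the joint random variables $((\out(P_1(x, \env_1)), \TPP{P_1}{x, \env_1}), (\out(P_1(x', \env_1')), \TPP{P_1}{x', \env_1'}))$ satisfying $d_\mathcal{Y}(\tilde u_1, \tilde u_1') \le d_2$ and $|\tilde v_1 - \tilde v_1'| \le t_1$ with probability $1$. I would then extend this coupling to include the $P_2$ environments $E_2, E_2'$ by drawing each from its conditional distribution given the associated $(\tilde u_1, \tilde v_1)$ or $(\tilde u_1', \tilde v_1')$ value, as allowed by Definition~\ref{def:chaining-compat}.

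In the second stage, for each outcome $((y_1, t_1, \env_2), (y_1', t_1', \env_2'))$ of this extended coupling with $z \in \supp(\out(P_2(y_1, \env_2))) \cap \supp(\out(P_2(y_1', \env_2')))$, the constraint $d_\mathcal{Y}(y_1, y_1') \le d_2$ from the first stage lets me invoke OC timing stability of $P_2$ to obtain a coupling $(\tilde u_2, \tilde u_2')$ of $\TPP{P_2}{y_1, \env_2}|_{\out(P_2(y_1, \env_2)) = z}$ and $\TPP{P_2}{y_1', \env_2'}|_{\out(P_2(y_1', \env_2')) = z}$ with $|\tilde u_2 - \tilde u_2'| \le t_2$ almost surely. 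By the chaining-compatible definition, the chained runtimes then decompose as $\tilde v_1 + \tilde u_2 + c$ and $\tilde v_1' + \tilde u_2' + c$ for the chaining constant $c$, so by the triangle inequality their pointwise difference is at most $|\tilde v_1 - \tilde v_1'| + |\tilde u_2 - \tilde u_2'| \le t_1 + t_2$.

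The main obstacle will be verifying that, after the above construction is restricted to the event that both chained outputs equal $z$, the two marginals exactly match the desired conditional runtime distributions of the chained program. This boils down to exploiting the independence of $P_2$'s coin tosses from the $P_1$-side variables (guaranteed by Definition~\ref{def:chaining-compat}) to factor, on each side, the conditional distribution of $(Y_1, T_1, E_2, T_2)$ given $Z = z$ as the unconditional $(Y_1, T_1, E_2)$ re-weighted by $\Pr[\out(P_2(Y_1, E_2)) = z]$, paired with $T_2$ drawn from $\TPP{P_2}{Y_1, E_2}|_{\out(P_2(Y_1, E_2)) = z}$; once this factorization and the corresponding one on the primed side are in hand, a direct calculation shows the constructed coupling has the right marginals, and the almost-sure bound $t_1 + t_2$ from the preceding steps completes the proof.
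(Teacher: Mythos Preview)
Your proposal follows essentially the same two-stage coupling strategy as the paper's proof: first invoke the joint output/timing stability coupling for $P_1$, then for each realized first-stage outcome invoke the OC-timing-stability coupling for $P_2$ conditioned on output $z$, and conclude via the chaining decomposition and triangle inequality. If anything, you are more explicit than the paper in flagging the marginal-verification step (the re-weighting of the $P_1$-side distribution by $\Pr[\out(P_2(Y_1,E_2))=z]$) as the nontrivial point, which the paper's proof handles only by the asserted chain of $\equiv$'s.
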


\begin{proof}
    We show that for every $x, x'$ satisfying $d_{\mathcal{X}}(x, x') \le d_1$ under metric $d_\mathcal{X}$, every pair of input-compatible execution environments $\env_1, \env_1' \in \mathcal{E}$, and for every $z \in \supp((P_2\circ P_1)(x, \env_1)) \cap \supp((P_2\circ P_1)(x', \env_1'))$, there exists a coupling $(\Tilde{w}, \Tilde{w}')$ of $\TPP{P_2\circ P_1}{x, \env_1}_{|\out((P_2\circ P_1)(x, \env_1)) = z}$ and $\TPP{(P_2\circ P_1)}{x', \env'_1}_{|\out((P_2\circ P_1)(x', \env_1')) = z}$ such that $|\Tilde{w} - \Tilde{w}'| \le t_1 + t_2$ with probability $1$. \\

    \noindent We construct the coupling as follows. Let $((\Tilde{u}, \Tilde{v}),(\Tilde{u}', \Tilde{v}'))$ be the coupling associated with the joint random variables $(\out(P_1(x, \env_1)), \TPP{P_1}{x, \env_1})$ and $(\out(P_1(x', \env_1')), \TPP{P_1}{x', \env_1'})$ satisfying $d_\mathcal{Y}(\Tilde{u}, \Tilde{u}') \le d_2$ and $|\Tilde{v} - \Tilde{v}'| \le t_1$ with probability $1$ (by joint output/timing stability). \\
    
    \noindent Then condition on any fixed $((u, v), (u', v'))\in \supp((\Tilde{u}, \Tilde{v}),(\Tilde{u}', \Tilde{v}'))$ and execution environments $\env_2$ and $\env_2'$ (conditioned on $(u, v)$ and $(u', v')$ respectively). Since $d_\mathcal{Y}(u, u') \le d_2$, then for every $z \in \supp(\out(P_2(u, \env_2)))\cap \supp(\out(P_2(u', \env_2')))$, there exists a coupling $(\Tilde{r}, \Tilde{r}')$ of the conditional runtime random variables $\TPP{P_2}{u, \env_2}|_{\out(P_2(u, \env_2)) = z}$ and $\TPP{P_2}{u', \env_2'}|_{\out(P_2(u', \env_2')) = z}$ satisfying $|\Tilde{r} - \Tilde{r}'| \le t_2$ (by output-conditional timing stability). \\
    
    \noindent We sample $(r, r') \sim (\Tilde{r}, \Tilde{r}')|_{\env_2, \env'_2}$ and set:
    \begin{align*}
        \Tilde{w} &= v + r + c\\
        &\equiv \TPP{P_1}{x, \env_1} + \TPP{P_2}{u, \env_2}|_{\out(P_2(u, \env_2)) = z} + c\\
        &\equiv \TPP{P_1}{x, \env_1} + \TPP{P_2}{\out(P_1(x, \env_1)), \env_2}|_{\out(P_2\circ P_1(x, \env_1)) = z} + c\\
        &\equiv \TPP{P_2\circ P_1}{x}|_{P_2(P_1(x)) = z}
    \end{align*}

    and similarly for $w'$, where $c$ is the chaining constant in Definition~\ref{def:chaining-compat}. It follows that 
    \begin{align*}
        |\Tilde{w} - \Tilde{w}'| &= |u + r + c - u' - r' - c| \\
        &= |u - u' + r - r'| \\
        &\le |u - u'| + |r - r'| \\
        &\le t_1 + t_2
    \end{align*} 
\end{proof}

Finally, we can chain together OC-timing-stable programs with timing-private delay programs to achieve timing privacy:

\begin{theorem}\label{thm:main-theorem}
    Let $P_1 : \mathcal{X}\times\mathcal{E}\to\mathcal{Y}\times\mathcal{E}$ be $(d_1\mapsto t_1)$-OC timing stable under input distance metric $d_\mathcal{X}$. If $P_2 : \mathcal{Y}\times\mathcal{E} \to \mathcal{Y}\times\mathcal{E}$ is a $(t_1 \mapsto d_2)$-timing-private delay program with respect to privacy measure $\mathcal{M}$, then $P_2\circ P_1$ is $(d_1 \mapsto d_2)$-timing private with respect to $d_\mathcal{X}$ and $\mathcal{M}$.
\end{theorem}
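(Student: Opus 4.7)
My plan is to exploit the identity-on-output property of the timing-private delay program $P_2$ to rewrite the conditioning event in terms of $P_1$'s output, decompose the chained runtime as a sum of $P_1$'s conditional runtime and an independent draw from the delay distribution $\Phi$, and then close the proof using the $(d_1\mapsto t_1)$-OC timing stability of $P_1$ together with the post-processing and joint convexity axioms of the privacy measure $\mathcal{M}$.

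\textbf{Step 1 (reducing the conditioning event).} Because a timing-private delay program implements the identity function, $\out((P_2\circ P_1)(x,\env_1)) = \out(P_1(x,\env_1))$ almost surely, and likewise for $x'$. Hence any $y$ lying in the intersection of the chained-output supports on $(x,\env_1)$ and $(x',\env_1')$ also lies in $\supp(\out(P_1(x,\env_1)))\cap\supp(\out(P_1(x',\env_1')))$, so the OC timing stability of $P_1$ is applicable at this $y$.

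\textbf{Step 2 (decomposing the runtime).} By chaining-compatibility (Definition~\ref{def:chaining-compat}),
\[
\TPP{P_2\circ P_1}{x,\env_1} \;=\; \TPP{P_1}{x,\env_1} \;+\; \TPP{P_2}{\out(P_1(x,\env_1)),\env_2} \;+\; c,
\]
where $P_2$'s coin tosses are independent of $(x,\env_1,\out(P_1(x,\env_1)),\TPP{P_1}{x,\env_1},\env_2)$. Since the runtime of the delay program $P_2$ is $\Phi$-distributed for every input and environment, conditioning on $\out(P_1(x,\env_1))=y$ gives that $\TPP{P_2\circ P_1}{x,\env_1}|_{\out=y}$ has the same distribution as $R_1 + T + c$, with $R_1 \sim \TPP{P_1}{x,\env_1}|_{\out(P_1)=y}$ and $T\sim\Phi$ drawn independently of $R_1$. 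The analogous statement holds for $(x',\env_1')$.

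\textbf{Step 3 (coupling, post-processing, joint convexity).} By $(d_1\mapsto t_1)$-OC timing stability of $P_1$, there is a coupling $(\tilde R_1,\tilde R_1')$ of the two conditional runtime random variables with $|\tilde R_1-\tilde R_1'|\le t_1$ almost surely. For each realization $(r_1,r_1')$ from this coupling, translation by a constant is a post-processing operation, so
\[
\mathcal{M}(r_1+T+c,\; r_1'+T+c) \;=\; \mathcal{M}(T,\; T+(r_1'-r_1)) \;\le\; d_2,
\]
where the last step invokes the $(t_1\mapsto d_2)$-timing-private delay property of $\Phi$ (valid because $|r_1'-r_1|\le t_1$). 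Joint convexity of $\mathcal{M}$ over the coupling distribution then yields the desired bound $\mathcal{M}(\TPP{P_2\circ P_1}{x,\env_1}|_{\out=y},\TPP{P_2\circ P_1}{x',\env_1'}|_{\out=y})\le d_2$, which is $(d_1\mapsto d_2)$-OC timing privacy.

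\textbf{Main obstacle.} The most delicate point is a sign issue: the timing-private delay guarantee is stated for nonnegative shifts $0\le t\le t_1$, whereas the coupled difference $r_1'-r_1$ may be of either sign. I plan to resolve this either by reading the delay property as two-sided (which is what is in fact established for the censored Discrete Laplace in Lemma~\ref{lemma:disclap-delay-dist}) or by partitioning the coupling according to the sign of $r_1'-r_1$ and appealing to the symmetry of adjacency in Definition~\ref{def:rtprivacy} to recover the reversed direction. The remaining bookkeeping -- in particular, that the intermediate environment $\env_2$ can depend on $P_1$'s execution but is independent of $P_2$'s coin tosses, so that the marginal $\Phi$-distribution of the delay is preserved after conditioning -- follows directly from the chaining-compatibility definition and requires no further machinery.
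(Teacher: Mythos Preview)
Your proposal is correct and follows essentially the same route as the paper: decompose the chained runtime as $P_1$'s conditional runtime plus an independent $\Phi$-distributed delay plus the chaining constant, invoke the OC-timing-stability coupling to get $|\tilde r-\tilde r'|\le t_1$ pointwise, and then apply the delay guarantee. Your argument is in fact more complete than the paper's in two respects: you explicitly use joint convexity of $\mathcal{M}$ to pass from the per-realization bound to the marginal bound (the paper leaves this step implicit), and the sign issue you flag is genuine and is equally glossed over in the paper's own proof---your resolution via the two-sided bound established in Lemma~\ref{lemma:disclap-delay-dist} is the right fix.
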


\begin{proof}
     We want to show that for all pairs of $x, x'$ such that $d_{\mathcal{X}}(x, x') \le d_1$, all pairs of input-compatible environments $\env_1, \env_1'$, and all $y \in \supp(\out(P_1(x, \env_1))) \cap \supp(\out(P_1(x', \env_1')))$, 
    \begin{align*}
        \mathcal{M}(\TPP{(P_2 \circ P_1)}{x, \env_1}_{|\out((P_2\circ P_1)(x,\env_1)) = y}, \TPP{(P_2\circ P_1)}{x', \env_1'}|_{\out((P_2\circ P_1)(x',\env_1')) = y}) \le d_2
    \end{align*}
     
     Let $\Phi$ be the $(t_1 \mapsto d_2)$-timing private delay distribution implemented by $P_2$, $c$ is the chaining constant used in Definition~\ref{def:chaining-compat}, and $(\Tilde{r}, \Tilde{r}')$ is the coupling of the conditional random variables $\TPP{P_1}{x, \env_1}_{|\out(P_1(x, \env_1)) = y}$ and $\TPP{P_1}{x', \env_1'}_{|\out(P_1(x', \env_1')) = y}$ satisfying $|\Tilde{r} - \Tilde{r}'|\le t_1$ (by output-conditional timing stability). Let $(r, r') \sim (\Tilde{r}, \Tilde{r}')$ and $\env_2$ and $\env_2$ be conditioned on $(r, y)$ and $(r', y)$ respectively. Then,
    \begin{align*}
        r + c + \Phi &\equiv \TPP{P_1}{x, \env_1}_{|\out(P_1(x, \env_1)) = y} + c + \phi \\
        &\equiv \TPP{P_1}{x, \env_1}_{|\out(P_1(x, \env_1)) = y} + c + \TPP{P_2}{y, \env_2}_{|\out(P_1(x, \env_1)) = y} \\
        &\equiv \TPP{(P_2 \circ P_1)}{x, \env_1}_{|\out((P_2\circ P_1)(x,\env_1)) = y}
    \end{align*}
    and similarly $r' + c + \Phi \equiv \TPP{(P_2\circ P_1)}{x', \env_1'}_{|\out((P_2\circ P_1)(x',\env_1')) = y}$. Since $\Pr[|r - r'| \le t_1] = 1$, and $\Phi$ is a timing-private delay distribution, it follows that for $T\sim \Phi$:

    \begin{align*}
        \mathcal{M}(r + c + T, r' + c + T) \le d_2 \\
    \end{align*}
    and therefore
    \begin{align*}
        \mathcal{M}(\TPP{(P_2 \circ P_1)}{x, \env_1}_{|\out((P_2\circ P_1)(x,\env_1)) = y}, \TPP{(P_2\circ P_1)}{x', \env_1'}_{|\out((P_2\circ P_1)(x',\env_1')) = y}) \le d_2
    \end{align*}
\end{proof}

We now give prove a post-processing property for programs.

\begin{lemma}[Post-processing Program Outputs]\label{lemma:postprocess-program-outputs}
Let $P_1: \mathcal{X}\times\mathcal{E}\to \mathcal{Y}\times\mathcal{E}$ be $(\varepsilon, \delta)$-differentially private program under input metric $d_\mathcal{X}$ and the smoothed max-divergence privacy measure. Let $P_2: \mathcal{Y} \times\mathcal{E}\to \mathcal{Z}\times\mathcal{E}$ be a (possibly) randomized program that is output-pure (Definition~\ref{def:output-pure-programs}).
Then if $P_2$ and $P_1$ are chaining compatible, the chained program $P_2\circ P_1: \mathcal{X} \times\mathcal{E} \to \mathcal{Z}\times\mathcal{E}$ is $(\varepsilon, \delta)$-differentially private under input metric $d_\mathcal{X}$ and the smoothed max-divergence privacy measure $D_\infty^\delta$.
\end{lemma}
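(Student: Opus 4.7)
The plan is to reduce this to the standard post-processing property of $(\varepsilon,\delta)$-DP (Lemma~\ref{lem:dp postprocess}), using the output-purity of $P_2$ to absorb the (possibly input-dependent) post-chaining environment $\env_2$.

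First I would fix any pair $x,x'\in\mathcal{X}$ with $d_\mathcal{X}(x,x')\le 1$ and any pair of input-compatible environments $\env_1,\env_1'\in\mathcal{E}$. Letting $Y=\out(P_1(x,\env_1))$ and $Y'=\out(P_1(x',\env_1'))$, the $(\varepsilon,\delta)$-DP hypothesis on $P_1$ combined with Lemma~\ref{lemma:smoothed-divergence-dp} gives $D_\infty^\delta(Y\,\|\,Y')\le\varepsilon$ together with the symmetric bound.

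Next, by chaining-compatibility (Definition~\ref{def:chaining-compat}) there is a (possibly random) $\env_2$ that may be arbitrarily correlated with $x$ and with the output and runtime of $P_1(x,\env_1)$, but is independent of the coin tosses of $P_2$, such that $\out((P_2\circ P_1)(x,\env_1))$ is distributed identically to $\out(P_2(Y,\env_2))$. The key step will be: because $P_2$ is output-pure, there exists a (possibly randomized) function $f:\mathcal{Y}\to\mathcal{Z}$ with $f(y)$ identically distributed to $\out(P_2(y,e))$ for every $y$ and every input-compatible $e$. Conditioning on $(Y,\env_2)=(y,e)$ and invoking independence of $P_2$'s coin tosses from $\env_2$, the conditional law of $\out(P_2(Y,\env_2))$ is exactly the law of $f(y)$; marginalizing over $(Y,\env_2)$ then yields $\out((P_2\circ P_1)(x,\env_1))\equiv f(Y)$. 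The same $f$ works on the primed side, giving $\out((P_2\circ P_1)(x',\env_1'))\equiv f(Y')$.

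Finally, post-processing of the smoothed max-divergence (the standard $(\varepsilon,\delta)$-DP post-processing lemma, or equivalently the post-processing axiom for a generalized privacy measure) yields $D_\infty^\delta(f(Y)\,\|\,f(Y'))\le D_\infty^\delta(Y\,\|\,Y')\le\varepsilon$ together with its symmetric version, from which Lemma~\ref{lemma:smoothed-divergence-dp} delivers the claimed $(\varepsilon,\delta)$-DP of $P_2\circ P_1$. The main obstacle is the careful treatment of $\env_2$: without output-purity it could encode information about $x$ that shifts the distribution of $\out(P_2(Y,\env_2))$ beyond what $Y$ already reveals, and post-processing would break. Making rigorous the collapse of the chained program into $f\circ P_1$ is exactly the step that consumes the output-purity hypothesis, consistent with the earlier remark in the excerpt that the standard DP theorems fail when this assumption is dropped.
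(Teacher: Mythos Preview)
Your proposal is correct and follows essentially the same route as the paper: use the $(\varepsilon,\delta)$-DP of $P_1$ to bound $D_\infty^\delta(Y\|Y')$, invoke output-purity of $P_2$ to replace the chained output by $f(Y)$ for a single randomized function $f$ independent of the environment, and then apply the standard post-processing lemma. If anything, your handling of the random $\env_2$ via conditioning and marginalization is slightly more explicit than the paper's, which simply quantifies over all $\env_2$ after asserting $\out(P_2(\cdot,\env_2))=f(\cdot)$.
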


\begin{proof}
    For all adjacent inputs $x,x'\in\mathcal{X}$ under input metric $d_\mathcal{X}$, and all $\env_1, \env_1' \in \mathcal{E}$, we have that $D_\infty^\delta(\out(P_1(x, \env_1))|| \out(P_1(x', \env'_1)) \le \varepsilon$ and $D_\infty^\delta(\out(P_1(x', \env_1'))||\out(P_1(x, \env_1)) \le \varepsilon$ by the fact that $P_1$ is achieves $(\varepsilon, \delta)$-differential privacy (Lemma~\ref{lemma:smoothed-divergence-dp}). Since $P_2$ is output-pure, for all execution environments $\env_2 \in \mathcal{E}$, $\out(P_2(\out(P_1(x, \env_1)), \env_2) = f(\out(P_1(x, \env_1)))$ where $f:\mathcal{Y}\to\mathcal{Z}$ is a (possibly randomized) function. By the post-processing property for approximate DP (Lemma~\ref{lem:dp postprocess}), it follows that for all $\env_2, \env_2' \in \mathcal{E}$:
    \begin{align*}
        D_\infty^\delta(\out(P_2(\out(P_1(x, \env_1)), \env_2))||\out(P_2(\out(P_1(x', \env'_1)), \env_2'))) \\
        &\hspace{-10cm}= D_\infty^\delta(f(\out(P_1(x, \env_1)))|| f(\out(P_1(x', \env'_1)))) \\
        &\hspace{-10cm}\le D_\infty^\delta(\out(P_1(x, \env_1))|| \out(P_1(x', \env'_1))) \\
        &\hspace{-10cm}\le \varepsilon
    \end{align*}

    and similarly for $D_\infty^\delta(\out(P_2(\out(P_1(x', \env_1')), \env_2'))|| \out(P_2(\out(P_1(x, \env_1)), \env_2)))$.
\end{proof}

\begin{lemma}\label{lemma:timing-private-delay-post-process}
    Let $P_1: \mathcal{X}\times\mathcal{E}\to\mathcal{Y}\times\mathcal{E}$ be $(\din \mapsto \dout)$-differentially private with respect to input distance metric $d_\mathcal{X}$ and a privacy measure $M$. Let $P_2:\mathcal{Y}\times\mathcal{E}\to\mathcal{Y}\times\mathcal{E}$ be a timing-private delay program such that $P_1$ and $P_2$ are chaining compatible. Then the chained program $P_2 \circ P_1$ is $(\din\mapsto \dout)$-differentially private with respect to input metric $d_\mathcal{X}$ and privacy measure $M$.
\end{lemma}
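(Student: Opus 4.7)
The plan is to reduce the claim to the fact that a timing-private delay program computes the identity function on its output, which makes $P_2$ a trivial post-processing step. First I would invoke Lemma~\ref{lemma:timing-private-delay-programs-are-output-pure} to conclude that $P_2$ is output-pure: for every $y \in \mathcal{Y}$ and every $\env_2 \in \mathcal{E}$, $\Pr[\out(P_2(y, \env_2)) = y] = 1$. So the random variable $\out(P_2(y, \env_2))$ is a point mass at $y$ regardless of $\env_2$.

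Next I would use chaining compatibility (Definition~\ref{def:chaining-compat}) to unpack the output of the chained program. Fix any $x \in \mathcal{X}$ and any $\env_1 \in \mathcal{E}$. By chaining compatibility there exists a (possibly random) $\env_2 \in \mathcal{E}$, correlated with $(x, \out(P_1(x, \env_1)), \TPP{P_1}{x, \env_1})$ but independent of $P_2$'s coins, such that
\[
\out((P_2 \circ P_1)(x, \env_1)) = \out(P_2(\out(P_1(x, \env_1)), \env_2)).
\]
Because $P_2$ is the identity on its input value, conditioning on any realization of $\out(P_1(x, \env_1)) = y$ and $\env_2$ gives $\out(P_2(y, \env_2)) = y$ almost surely. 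Marginalizing, the random variable $\out((P_2 \circ P_1)(x, \env_1))$ is therefore identically distributed to $\out(P_1(x, \env_1))$.

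Finally, I would combine this with the DP hypothesis on $P_1$. For any adjacent $x, x' \in \mathcal{X}$ with $d_\mathcal{X}(x, x') \le \din$ and any input-compatible $\env_1, \env_1' \in \mathcal{E}$, we have
\[
M(\out((P_2 \circ P_1)(x, \env_1)), \out((P_2 \circ P_1)(x', \env_1'))) = M(\out(P_1(x, \env_1)), \out(P_1(x', \env_1'))) \le \dout,
\]
where the equality uses that $M$ depends only on the marginal distributions of its arguments (which we just showed are unchanged by composing with $P_2$) and the inequality is the hypothesized DP of $P_1$. One could alternatively phrase this as invoking the post-processing axiom in the definition of a generalized privacy measure applied to the identity function, which is the main benefit of working at the level of arbitrary privacy measures rather than re-proving a smoothed-max-divergence-specific statement as in Lemma~\ref{lemma:postprocess-program-outputs}.

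The only subtle point — and the one I would be careful to handle cleanly — is that the delay program's environment $\env_2$ in the chained execution may be correlated with $P_1$'s output and runtime. This correlation is harmless here precisely because $P_2$'s output does not depend on $\env_2$ at all (it is output-pure and implements the identity), so no genuine conditioning argument is required. This would be a real obstacle only if we tried to lift the statement to, say, joint output/timing privacy, where $P_2$'s runtime (which does depend on $\env_2$ via its randomness, though not on the input) would need to be decoupled from $P_1$'s behavior.
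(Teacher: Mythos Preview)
Your proof is correct and follows the same essential idea as the paper: observe that $P_2$ is output-pure (via Lemma~\ref{lemma:timing-private-delay-programs-are-output-pure}) and conclude that chaining with it does not change the output distribution, so the DP of $P_1$ carries over unchanged.

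The only noteworthy difference is in how the final step is packaged. The paper's proof simply cites Lemma~\ref{lemma:postprocess-program-outputs}, which as stated is specific to the smoothed max-divergence privacy measure, whereas the lemma here is for an arbitrary privacy measure $M$. You sidestep this mismatch by arguing directly that $\out((P_2\circ P_1)(x,\env_1))$ and $\out(P_1(x,\env_1))$ are \emph{identically distributed} (not merely related by post-processing), so the inequality $M(\cdot,\cdot)\le\dout$ transfers without invoking any post-processing property of $M$ at all. Your remark about the correlation of $\env_2$ with $P_1$'s output being harmless because $P_2$'s output is environment-independent is exactly the right observation to make the chaining step airtight. In short, your write-up is slightly cleaner and more self-contained for general $M$ than the paper's one-line citation.
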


\begin{proof}
    The proof follows from the fact that timing-private delay programs are output pure (Lemma~\ref{lemma:timing-private-delay-programs-are-output-pure}) and Lemma~\ref{lemma:postprocess-program-outputs}.
\end{proof}

The post-processing property for DP (Lemma ~\ref{lem:dp postprocess}) states that arbitrary transformations applied to differentially private outputs produces new outputs that remain differentially private. Unfortunately, our definition of timing privacy is incompatible with post-processing. Timing privacy bounds the \emph{additional} information leakage caused by observing the program's running time after observing the program's output. However, post-processing can potentially destroy information in the program's output and this can result in unbounded additional information leakage in the program's runtime.

To see this effect in action, consider a jointly output/timing-pure program $P$ implementing the identity function $f(x) = x$ with the additional property that the runtime reveals something about the input. That is, $\TP{x} = g(x)$ for a non-constant function $g$. Then $P$ is $(1\mapsto 0)$-timing-private with respect to the change-one distance metric $d_{CO}$ and any privacy measure $\mathcal{M}$ (e.g., max divergence). The identity function program satisfies perfect timing privacy despite its running time being directly correlated with its input. This is due to the fact that $P$'s output already reveals the input and so the program's running time reveals no additional information. However, if we apply the simple post-processing step by chaining $P$ with a deterministic, constant-time program $P'$ that sets $\texttt{output\_len} = 0$, then the result is no longer perfectly timing-private since $\out((P' \circ P)(x, \env)) = \lambda\footnote{We use $\lambda$ to indicate an empty output}$ while $\TPP{P'\circ P}{x, \env}$ leaks information about the input (everything about it if $g$ is injective). 

%

\subsection{A Timing-Private Unbounded Noisy Sum}

We illustrate the chaining operations above by chaining the Sum program (Program~\ref{program:sum}) with the Discrete Laplace Mechanism (Program~\ref{program:discrete-laplace}, given below) and our Timing-Private Delay Program (Program~\ref{program:disclap-timing-private}).  

We start by giving a Word RAM program for the Discrete Laplace Mechanism and analyze its timing properties. 

\begin{algorithm}[h!]
\vspace{5px}
\textbf{Input:} A number $y \in \mathbb{N}$ occupying memory location $M[\texttt{input\_ptr}]$

\vspace{5px}
\begin{flushleft}
    \textbf{Output:} $\max \{y + \textrm{DiscreteLaplace}(\mu = 0, s), 0\}$ where DiscreteLaplace is defined as in Section~\ref{sub-sec:delays} and
    $s=1/\ln(\texttt{b}/\texttt{a})$. For the Word RAM version of the program, we require $s < 2^\omega$ and $\texttt{a} + \texttt{b} < 2^\omega$ and $\min\{\max \{y + \textrm{DiscreteLaplace}(\mu = 0, s), 0\}, 2^\omega - 1\}$ is output instead since the Word RAM program outputs values in $[0, 2^\omega)$.
\end{flushleft}

\vspace{5px}

\begin{algorithmic}[1]
\STATE $\texttt{output\_len} = 1$;
\STATE $\texttt{output\_ptr} = 0$;
\STATE $\texttt{y} = M[\texttt{input\_ptr}]$;
\STATE $\texttt{noise} = 0$;
\STATE $\texttt{set} = 0$;
\STATE $\texttt{sign} = \texttt{RAND}(1)$; \COMMENT{Pick a uniformly random noise direction (positive or negative)}
\STATE \texttt{zprobA} = $\texttt{b} - \texttt{a}$;
\STATE \texttt{zprobB} = $\texttt{b} + \texttt{a}$;
\STATE $\texttt{idx} = \texttt{RAND}(\texttt{zprobB} - 1)$;
\IF{$\texttt{idx} < \texttt{zprobA}$}
    \STATE $\texttt{set} = 1$; \COMMENT{sample $0$ noise with probability $\frac{\texttt{b} - \texttt{a}}{\texttt{b} + \texttt{a}}$}
\ELSE
    \STATE $\texttt{set} = 0$;
\ENDIF
\item[]
\WHILE{$\texttt{set} == 0$}
\STATE $\texttt{noise} = \texttt{noise} + 1$; \COMMENT{sample from a Geometric random variable with $p = \frac{\texttt{b} - \texttt{a}}{\texttt{b}}$}
\STATE $\texttt{idx} = \texttt{RAND}(\texttt{b} - 1)$;
\IF{$\texttt{idx} < \texttt{b} - \texttt{a}$}
\STATE $\texttt{set} = 1$;
\ENDIF
\ENDWHILE
\item[]
\IF{$\texttt{sign} == 0$}
\STATE $\texttt{noisy\_y} = \texttt{y} - \texttt{noise}$;
\ELSE
\STATE $\texttt{noisy\_y} = \texttt{y} + \texttt{noise}$;
\ENDIF
\item[]
\STATE $M[\texttt{output\_ptr}] = \texttt{noisy\_y}$;
\STATE $\texttt{HALT}$;
\end{algorithmic}

\caption{Discrete Laplace Mechanism}\label{program:discrete-laplace}
\end{algorithm}

\begin{lemma}\label{lemma:disc-laplace-measurement-output-pure}
    The Discrete Laplace Word RAM program $P: \mathbb{N}\times\mathcal{E}\to\mathbb{N}\times\mathcal{E}$  (Program~\ref{program:discrete-laplace}) is output-pure and timing-pure.
\end{lemma}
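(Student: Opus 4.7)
The plan is to follow the template of Lemma~\ref{lemma:disclap-timing-private-delay-pure}, which established the analogous statement for the timing-private delay program by observing that it never reads uninitialized memory. The same structural observation applies here, so the bulk of the argument is a careful inspection of Program~\ref{program:discrete-laplace} to confirm that neither its output distribution nor its runtime distribution can depend on the execution environment $\env$ once the input $y$ is fixed.

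First I would recall the definitions: to establish output-purity (Definition~\ref{def:output-pure-programs}) it suffices to exhibit a randomized function $f_{\text{out}}:\mathbb{N}\to\mathbb{N}$ such that $f_{\text{out}}(y)$ is identically distributed to $\out(P(y,\env))$ for every input-compatible $\env$, and analogously for timing-purity (Definition~\ref{def:timing-pure-programs}) one needs $f_{\text{time}}:\mathbb{N}\to\mathbb{N}$ matching $\TP{y,\env}$. In the RAM/Word~RAM model, the only way an environment can affect a program's execution is through reads of memory cells that were not previously written by the program, together with the initial values of the built-in pointer variables (recall the discussion of RAM environments in Section~\ref{sub-sec:computational-models}).

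Next I would walk through Program~\ref{program:discrete-laplace} line by line to verify that the only environment-dependent cell it ever reads is $M[\texttt{input\_ptr}]$ at line~3, which by input/environment compatibility contains exactly the input $y$. Every other memory write (line~29 to $M[\texttt{output\_ptr}]$, with $\texttt{output\_ptr}=0$ set on line~2) is a write, not a read, and every variable (\texttt{noise}, \texttt{set}, \texttt{sign}, \texttt{zprobA}, \texttt{zprobB}, \texttt{idx}, \texttt{noisy\_y}, \texttt{y}) is explicitly initialized by the program before being used. Control flow at each branch (lines~10, 18, 22) and the loop termination on line~15 depend only on values computed from $y$ and from the outputs of $\texttt{RAND}$ calls, and the $\texttt{RAND}$ instruction is part of the program's internal randomness, not of $\env$.

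Given this, I would define $f_{\text{out}}$ as the randomized function that executes the program's code on input $y$ starting from any fixed reference memory configuration and returns the value written at $M[\texttt{output\_ptr}]$, and $f_{\text{time}}$ similarly as the number of instructions executed; the argument above shows that for any other input-compatible $\env$, the joint distribution of $(\out(P(y,\env)),\TP{y,\env})$ is identical to that of $(f_{\text{out}}(y),f_{\text{time}}(y))$, since the two executions make identical reads and take identical branches with matching probabilities. The main (and only) obstacle is simply being thorough in the line-by-line check; I do not anticipate any nontrivial probabilistic reasoning, only a careful inventory of memory accesses — essentially the same kind of inspection used for Lemma~\ref{lemma:disclap-timing-private-delay-pure}.
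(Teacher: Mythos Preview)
Your proposal is correct and follows essentially the same approach as the paper: observe that the only memory read is $M[\texttt{input\_ptr}]$ at line~3 (which holds the input by compatibility), so the program never touches uninitialized memory and hence its output and runtime distributions are identical across all input-compatible environments. Your line-by-line inventory is more detailed than the paper's one-sentence version, but the argument is the same.
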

\begin{proof}
    The program only ever reads memory at location $M[\texttt{input\_ptr}]$ (line 3) to obtain the input and later writes the output to $M[0]$. Since the program's execution does not depend on any memory values beyond its input, then for all $y \in \mathbb{N}$,  the program's output and runtime is identically distributed on all pairs of input-compatible execution environments $\env, \env' \in\mathcal{E}$.
\end{proof}

\begin{lemma}\label{lemma:laplace-output-privacy}
    Let $P:\mathbb{N}\times\mathcal{E}\to\mathbb{N}\times\mathcal{E}$ be the Discrete Laplace program (Program~\ref{program:discrete-laplace}) with scale parameter $s = 1/\ln(\texttt{b}/\texttt{a}) = \din / \varepsilon$. Then $P$ is $(\din\mapsto \varepsilon)$-DP with respect to input metric $d_\mathbb{N}(y, y') = |y - y'|$ and the max-divergence privacy measure.
\end{lemma}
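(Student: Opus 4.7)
The plan is to first establish the distribution of the program's output and then apply the standard Discrete Laplace divergence bound. Since the Discrete Laplace Word RAM program is output-pure (Lemma~\ref{lemma:disc-laplace-measurement-output-pure}), we may drop the environment from the analysis and argue as if $P$ were a randomized function $f: \mathbb{N} \to \mathbb{N}$.

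First, I would show that $\out(P(y))$ is distributed as $y + T$, where $T \sim \mathrm{DiscreteLaplace}(\mu=0, s)$ with $s = 1/\ln(\mathtt{b}/\mathtt{a})$. The noise sampling procedure closely mirrors the one analyzed inside the proof of Lemma~\ref{lemma:disclap-timing-private-delay} with $\mu = 0$, except the geometric sampling loop here is uncensored (it exits only upon success), so the resulting noise magnitude is a true geometric random variable with success probability $p = (\mathtt{b}-\mathtt{a})/\mathtt{b}$. Concretely, with probability $(\mathtt{b}-\mathtt{a})/(\mathtt{b}+\mathtt{a})$ the variable $\texttt{noise}$ equals $0$, and otherwise its magnitude is drawn from this geometric distribution, with sign taken uniformly at random. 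The PMF computation is essentially identical to that in Lemma~\ref{lemma:disclap-timing-private-delay} (only the boundary cases at $\pm\texttt{bound}$ disappear), and yields $\Pr[T = x] = \frac{e^{1/s}-1}{e^{1/s}+1}\cdot e^{-|x|/s}$, matching the Discrete Laplace PMF.

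Second, given the distributional characterization, I would bound the max-divergence directly. For any $y, y' \in \mathbb{N}$ with $|y-y'|\le \din$ and any $z$ in the support,
\[
\frac{\Pr[y+T=z]}{\Pr[y'+T=z]} = e^{(|z-y'|-|z-y|)/s} \le e^{|y-y'|/s} \le e^{\din/s} = e^{\varepsilon},
\]
using $s = \din/\varepsilon$. This gives $D_\infty(y+T \,\|\, y'+T) \le \varepsilon$, which is exactly $(\din \mapsto \varepsilon)$-DP with respect to the max-divergence privacy measure.

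The main subtlety is the Word RAM version, where the output is effectively clipped to $[0, 2^\omega-1]$ (and negative values of $y + T$ are avoided since $\texttt{noisy\_y}$ is stored as a natural number). I would handle this by observing that such clipping is a deterministic post-processing of $y+T$, so by the post-processing property of the max-divergence (built into the definition of a privacy measure), the $(\din \mapsto \varepsilon)$-DP bound carries over. The only potential obstacle is verifying that the program never reads outside the allocated word size during intermediate arithmetic; but the lemma's hypotheses ($s < 2^\omega$, $\mathtt{a}+\mathtt{b} < 2^\omega$) ensure all intermediate values fit in a single word, so no additional boundary reasoning is needed.
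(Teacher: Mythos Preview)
Your proposal is correct and follows the same overall skeleton as the paper: drop the environment via output-purity, identify the output law as (a censoring of) a Discrete Laplace centered at the input, and then bound the pointwise likelihood ratio using $s=\din/\varepsilon$.

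The one place you diverge is in handling the Word RAM boundaries. The paper does not invoke post-processing; instead it writes down the censored PMF explicitly, including the masses at $z=0$ and $z=2^\omega-1$, and verifies the ratio bound separately at those endpoints. Your route---first establish $\varepsilon$-max-divergence for the uncensored variable $y+T$ over $\mathbb{Z}$, then observe that clamping to $[0,2^\omega-1]$ is a fixed deterministic map and appeal to the post-processing property of the privacy measure---is arguably cleaner and avoids the endpoint case analysis entirely. The paper's approach buys a fully explicit description of the actual output distribution (useful elsewhere, e.g.\ for the timing analysis), whereas yours buys brevity. Either is fine; just make sure, if you go the post-processing route, that you state clearly that the Word RAM arithmetic realizes exactly the map $w\mapsto\min\{\max\{w,0\},2^\omega-1\}$ applied to $y+T$ (the paper spells this out when it notes that $\texttt{y}-\texttt{noise}$ rounds to $0$ and $\texttt{y}+\texttt{noise}$ rounds to $2^\omega-1$).
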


\begin{proof}
    We will show that for all for $y, y'$ satisfying $|y - y'| \le \dout$, and all $S \subseteq\mathbb{N}$, $\Pr[\out(P_2(y)) \in S]/\Pr[\out(P_2(y')) \in S] \le e^\varepsilon$. We ignore the execution environments in our analysis since $P$ is output-pure (Lemma~\ref{lemma:disc-laplace-measurement-output-pure}).

    The Discrete Laplace program first samples a value from a censored Discrete Laplace distribution parameterized by a shift $\mu = y$ and scale $s = 1/\ln(\texttt{b}/\texttt{a}) = \dout/\varepsilon$. The program works similarly to Program~\ref{program:disclap-timing-private} in how it samples from a Discrete Laplace distribution except that it does not bound the number of trials when sampling from a Geometric distribution (lines 14-18). Since the Word RAM model does not support values $< 0$ or $> 2^\omega - 1$, the result of $\texttt{noisy\_y} = \texttt{y} - \texttt{noise}$ (line 20) is always rounded to $0$ when $\texttt{noise} > \texttt{y}$ and the result of $\texttt{noisy\_y} = \texttt{y} + \texttt{noise}$ (line 22) is always rounded to $2^\omega - 1$. As a result, the program's output is distributed as a \emph{censored} Discrete Laplace random variable with PMF $f$ defined below:
    \begin{align*}
        \text{$f(z |\mu = y, s = 1/\ln(\texttt{b}/\texttt{a}))$} =
        \begin{cases}
        \frac{\texttt{b} -\texttt{a}}{\texttt{b} + \texttt{a}} & z = \mu \\
        \frac{1}{2}\cdot \big(1 -\frac{\texttt{b} -\texttt{a}}{\texttt{b} + \texttt{a}}\big)\cdot \big(\frac{\texttt{a}}{\texttt{b}}\big)^{|z - \mu| - 1}\cdot \big(1 - \frac{\texttt{a}}{\texttt{b}}\big) & 0 < z < 2^\omega, z\ne \mu \\
        \frac{1}{2}\cdot \big(1 -\frac{\texttt{b} -\texttt{a}}{\texttt{b} + \texttt{a}}\big)\cdot \big(\frac{\texttt{a}}{\texttt{b}}\big)^{|z - \mu| - 1} & z = 0 \textrm{ or } z = 2^\omega - 1  \\
        0 & \textrm{otherwise}
        \end{cases}
    \end{align*}

    It follows that for all $y \in \mathcal{Y}$:
    \begin{align*}
        \Pr[\out(P_2(y)) = y] &= \frac{\texttt{b} - \texttt{a}}{\texttt{b} + \texttt{a}} \\
        &= \frac{e^{\ln(\texttt{b}/\texttt{a})} - 1}{e^{\ln(\texttt{b}/\texttt{a})} + 1}\\
        &= \frac{e^{1/s} - 1}{e^{1/s} + 1} \\
        &= \frac{e^{1/s} - 1}{e^{1/s} + 1}\cdot e^{-|y - y|/ s}
    \end{align*}
   
   which matches the probability mass function for a Discrete Laplace distribution. Similarly, for all $y, z \in \mathbb{N}$, $0 < z < 2^\omega - 1, z\ne y$:
    \begin{align*}
        \Pr[\out(P_2(y)) = z] &= \frac{1}{2}\cdot \big(1 - \frac{\texttt{b} - \texttt{a}}{\texttt{b} + \texttt{a}}\big)\cdot \big(\frac{\texttt{a}}{\texttt{b}}\big)^{|z - y| - 1}\cdot (1 - \frac{\texttt{a}}{\texttt{b}}) \\
        &= \frac{1}{2}\cdot(1 - \frac{e^{\ln(\texttt{b}/\texttt{a})} - 1}{e^{\ln(\texttt{b}/\texttt{a})} + 1})\cdot (e^{-\ln(\texttt{b}/\texttt{a})})^{(|z - y| - 1)} \cdot (1 - e^{-\ln(\texttt{b}/\texttt{a})}) \\
        &= \frac{1}{2}\cdot(1 - \frac{e^{1/s} - 1}{e^{1/s} + 1})\cdot (e^{-1/s})^{(|z - y| - 1)} \cdot (1 - e^{-1/s}) \\
        &= \frac{1}{2}\cdot (\frac{e^{1/s} + 1 - e^{1/s} + 1}{e^{1/s} + 1})\cdot (e^{-1/s})^{(|z - y| - 1)} \cdot (1 - e^{-1/s}) \\
        &= \frac{1}{e^{1/s} + 1}\cdot (e^{-1/s})^{(|z - y| - 1)} \cdot (1 - e^{-1/s}) \\
        &= \frac{e^{(-|z - y| + 1)/s}- e^{-(|z - y|)/s}}{e^{1/s} + 1} \\
        &= \frac{e^{1/s} - 1}{e^{1/s} + 1}\cdot e^{-|z - y|/s}
    \end{align*}

    Therefore, for all $y, y' \in \mathbb{N}$ such that $|y - y'| \le \dout$, and all $0 < z < 2^\omega - 1$, we have:

    \begin{align*}
        \frac{\Pr[\out(P_2(y)) = z]}{\Pr[\out(P_2(y')) = z]} &= \frac{e^{-|z - y|/s}}{e^{-|z - y'|/s}} \\
        &= e^{(|z - y'| - |z - y|)/s} \\
        &\le e^{|y - y'|/s} \\
        &\le e^{\varepsilon \cdot |y - y'|/\dout} \\
        &\le e^\varepsilon 
    \end{align*}

    Finally for $z = 2^\omega - 1$ or $z = 0$, 
    \begin{align*}
        \frac{\Pr[\out(P_2(y)) = z]}{\Pr[\out(P_2(y')) = z]} &= \frac{(\texttt{a}/\texttt{b})^{|z - y| - 1}}{(\texttt{a}/\texttt{b})^{|z - y'| - 1}} \\
        &= \frac{e^{-\ln(\texttt{b}/\texttt{a})\cdot (|z - y| - 1)}}{e^{-\ln(\texttt{b}/\texttt{a})\cdot (|z - y'| - 1)}} \\
        &= \frac{e^{-(|z - y| - 1)/s}}{e^{-(|z - y'| - 1)/s}} \\
        &\le e^{(|z - y'| - |z - y|)/s} \\
        &\le e^{\varepsilon\cdot |y - y'|/\dout} \\
        &\le e^\varepsilon
    \end{align*}
    
     Therefore, for all $y, y'$ satisfying $|y - y'| \le \dout$, and all $S \subseteq \mathbb{N}$, we have 
     \begin{align*}
         \frac{\Pr[\out(P_2(y)) \in S]}{\Pr[\out(P_2(y')) \in S]} \le e^\varepsilon
     \end{align*}
     
     and the claim is satisfied.
\end{proof}

\begin{lemma}\label{lemma:noisy-sum-output-private}
    Let $P_1 : \mathcal{X}\times\mathcal{E}\to\mathbb{N}\times\mathcal{E}$ be the Sum  Word RAM program (Program~\ref{program:sum}) and $P_2:\mathbb{N}\times\mathcal{E}\to\mathbb{N}\times\mathcal{E}$ be the Discrete Laplace (Program~\ref{program:discrete-laplace}) with scale parameter $s = 1/\ln(\texttt{b}/\texttt{a})$ satisfying $\ln(\texttt{b}/\texttt{a}) = \varepsilon/\Delta$. Then the chained program $P_2 \circ P_1$ is $\varepsilon$-DP under the insert-delete distance metric $\did$.
\end{lemma}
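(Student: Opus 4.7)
The strategy is to reduce the claim to the composition of two facts we already have: output stability of the Sum program $P_1$ on one side, and the DP guarantee of the Discrete Laplace program $P_2$ on the other. First I would invoke Lemma~\ref{lemma:joint-stability-sum} to get that $P_1$ is $(1 \mapsto \Delta)$-output stable under $\did$. Because $P_1$ is deterministic in its output (the sum of the dataset is a fixed value given the input), output stability sharpens to a pointwise statement: for any two adjacent inputs $x, x'$ with $\did(x, x') \le 1$, and any input-compatible $\env_1, \env_1'$, the concrete outputs $y = \out(P_1(x, \env_1))$ and $y' = \out(P_1(x', \env_1'))$ are fixed numbers satisfying $|y - y'| \le \Delta$.

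Next, I would apply Lemma~\ref{lemma:laplace-output-privacy} with its scale parameter $s = 1/\ln(\texttt{b}/\texttt{a}) = \Delta/\varepsilon$, which gives that $P_2$ is $(\Delta \mapsto \varepsilon)$-DP under the metric $d_\mathbb{N}(y, y') = |y - y'|$. Since $P_2$ is output-pure (Lemma~\ref{lemma:disc-laplace-measurement-output-pure}), its output distribution depends only on its integer input, not on the $\env_2$ handed to it by the chaining construction, so this DP guarantee applies directly to the pair $(y, y')$ produced by $P_1$. By chaining-compatibility in the Word RAM model (Lemma~\ref{lemma:chaining-word-ram}), $\out((P_2 \circ P_1)(x, \env_1))$ is distributed identically to $\out(P_2(y, \env_2))$ for some $\env_2$, and analogously for $x'$. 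Stringing these together yields, for every $S \subseteq \mathbb{N}$,
\[ \Pr[\out((P_2 \circ P_1)(x, \env_1)) \in S] \le e^\varepsilon \cdot \Pr[\out((P_2 \circ P_1)(x', \env_1')) \in S], \]
which is exactly the $\varepsilon$-DP guarantee.

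I do not anticipate any real obstacle: the two substantive pieces of work, namely the DP analysis of the censored Discrete Laplace mechanism and the $(\dham,\did)$-driven output/timing analysis of the Sum program (including the $\omega$-bit capping at $2^\omega - 1$), are already absorbed into Lemmas~\ref{lemma:laplace-output-privacy} and \ref{lemma:joint-stability-sum}. The only thing one must be a little careful with is that Lemma~\ref{lemma:postprocess-program-outputs} handles the case where the DP program comes \emph{first} and is followed by an output-pure post-processing, whereas here the deterministic, output-stable program comes first and the DP program second; the argument above is essentially the mirror-image version of that lemma, and could equally well be packaged as a short general lemma ``a deterministic output-stable program chained into an output-pure DP program yields a DP program'' with Lemma~\ref{lemma:noisy-sum-output-private} following as a one-line corollary.
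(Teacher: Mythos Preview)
Your proposal is correct and follows essentially the same approach as the paper: use the $(1\mapsto\Delta)$-output stability of $P_1$ from Lemma~\ref{lemma:joint-stability-sum} (the paper routes this through Lemma~\ref{lemma:joint-implies-output-stable}, but the content is identical) and then apply Lemma~\ref{lemma:laplace-output-privacy} to the resulting $\Delta$-close integer inputs to $P_2$. Your write-up is actually more explicit than the paper's two-line proof in spelling out the roles of determinism of $P_1$, output-purity of $P_2$, and chaining-compatibility, all of which are implicitly used.
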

\begin{proof}
    The proof follows from the fact that $P_1$ is $(1\mapsto \Delta)$-output stable (Lemma~\ref{lemma:joint-stability-sum} and Lemma~\ref{lemma:joint-implies-output-stable}) and therefore $P_2 \circ P_1$ is the Discrete Laplace mechanism executed on $\Delta$-close inputs. The proof follows from Lemma~\ref{lemma:laplace-output-privacy}.
\end{proof}
\begin{lemma}\label{lemma:oc-stability-disclap}
    For any $\din \in \mathbb{N}$, the Discrete Laplace program $P: \mathbb{N}\times\mathcal{E}\to\mathbb{N}\times\mathcal{E}$  (Program~\ref{program:discrete-laplace}) is $(\din \mapsto 5\cdot \din)$-OC timing stable under the input distance metric $d_\mathbb{N}$ defined as $d_\mathbb{N}(x, x') = |x - x'|$.
\end{lemma}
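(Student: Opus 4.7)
The plan is to use the timing-purity of the Discrete Laplace program (Lemma~\ref{lemma:disc-laplace-measurement-output-pure}) to drop execution environments, and then observe that the runtime is a deterministic function $R(N)$ of the sampled noise magnitude $N$ alone: the program's work depends on $y$ only through the constant-time addition or subtraction at lines 20--22. A direct instruction count of the while loop (lines 15--19) gives $R(0) = c$ and $R(k) = c + 4k + 1$ for $k \ge 1$, where $c$ absorbs the input-independent setup and teardown code. In particular, $|R(a) - R(b)| \le 5 \cdot |a - b|$ for all $a,b \in \mathbb{N}$, with the worst case arising from the jump between $N=0$ (no loop body) and $N=1$ (one successful iteration contributing $5$ instructions).

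The proof then reduces to exhibiting, for each pair of adjacent inputs $y, y'$ with $|y - y'| \le \din$ and each shared output $z$, a coupling of the conditional noise magnitudes $N_y = N \mid \out(P(y)) = z$ and $N_{y'} = N \mid \out(P(y')) = z$ satisfying $|N_y - N_{y'}| \le \din$ almost surely. Applying $R$ coordinate-wise then yields a coupling of the conditional runtimes differing by at most $5\din$.

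For non-boundary outputs $0 < z < 2^\omega - 1$, conditioning on $\out(P(y)) = z$ forces $N = |z - y|$ uniquely (the sign bit is also forced whenever $z \ne y$), so the conditional is a point mass. The point coupling trivially satisfies $\big||z - y| - |z - y'|\big| \le |y - y'| \le \din$. For boundary outputs $z \in \{0, 2^\omega - 1\}$, multiple noise magnitudes map to the same clamped output, and the conditional distribution of $N$ is a tail of the Discrete Laplace. Here I would invoke the memoryless property of the underlying geometric: $N \mid N \ge y$ is distributed as $y - 1 + G$, where $G$ is a $\mathrm{Geometric}((b-a)/b)$ variable independent of $y$. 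Sharing a single draw of $G$ between $N_y$ and $N_{y'}$ gives $N_y - N_{y'} = y - y'$ pointwise, hence $|N_y - N_{y'}| \le \din$.

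The main subtlety is the boundary case, where a point coupling no longer works and one needs the memoryless structure above. The remaining effort is bookkeeping to verify the degenerate edge cases (e.g., $y = 0$, in which the sign $= 0$ branch always produces output $0$, or the symmetric case $z = 2^\omega - 1$) so that the $5\din$ bound holds uniformly.
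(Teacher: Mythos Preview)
Your argument for interior outputs $0 < z < 2^\omega - 1$ is exactly the paper's: there the conditional noise magnitude is a point mass at $|y - z|$, so the conditional runtime is deterministic and the reverse triangle inequality gives $5\din$. The paper states this directly as ``conditioned on $\out(P(x,\env)) = y$, the program has deterministic runtime $15 + 5|x-y|$'' and stops there. Your per-iteration count of $4$ (with one extra instruction on the successful trial) is finer than the paper's flat $5$ per iteration, but both yield the same bound.

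Where you genuinely go further is the clamped outputs $z\in\{0,2^\omega-1\}$. The paper's proof simply does not treat this case: its assertion that the conditional runtime is deterministic is false at the boundaries, since many noise magnitudes collapse to the same clamped output. Your memoryless-coupling fix is correct and is the natural patch. For $y,y'\ge 1$ and $z=0$, conditioning on output $0$ forces $\texttt{sign}=0$ and $N\ge y$, and the geometric tail gives $N_y \sim (y-1)+G$; sharing $G$ yields $|N_y-N_{y'}| = |y-y'|\le\din$ pointwise. The ``degenerate'' case you flag, $y=0$ with $z=0$, also works: there the event is $\{\texttt{sign}=0\}\cup\{\texttt{sign}=1,N=0\}$, and one checks that the conditional law of $N_0$ equals the law of $N_1 - 1$ (using $(1-p_0)/(1+p_0) = a/b = 1-p$), so the shift coupling $N_1 = N_0+1$ is exact. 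The upper boundary $z = 2^\omega-1$ is symmetric. So the bookkeeping you defer really is routine, and your proof is strictly more complete than the paper's on this point.
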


\begin{proof}
    Conditioned on $\out(P(x, \env)) = y$, the program has deterministic runtime. The program's conditional runtime $\TP{x, \env}|_{\out(P(x,\env)) = y}$ is equal to $15 + (5\cdot |x - y|)$ where $5\cdot |x - y|$ comes from the number of loops that are executed in lines (14-18). The other $15$ instructions are always executed on every input (lines 1-10, 2 instructions depending on the branching condition in lines 10-13, 2 instructions depending on the branching condition in lines 19-22, and lines 23-24). Therefore, for all inputs $x, x' \in \mathbb{N}$, conditioned on output $y \in \mathbb{N}$,
    \begin{align*}
        |\TP{x, \env}|_{\out(P(x,\env)) = y} - \TP{x, \env}|_{\out(P(x,\env)) = y}| &= |15 + 5\cdot|x - y| - 15 - 5\cdot |x' - y|| \\
        &= |5\cdot (|x - y| - |x' - y|)|\\
        &\le 5\cdot (|x  - x'|)\\
        &= 5\cdot \din
    \end{align*}
\end{proof}

\begin{lemma}\label{lemma:output-stability-sum}
    Let $P_1 :\mathcal{X}\times\mathcal{E}\to\mathbb{N}\times\mathcal{E}$ be the Sum program (Program~\ref{program:sum}) and $P_2:\mathbb{N}\times\mathcal{E}\to\mathbb{N}\times\mathcal{E}$ be the Discrete Laplace program (Program~\ref{program:discrete-laplace}). Then $P_2 \circ P_1$ is $(1\mapsto 3 + 5\Delta)$-OC timing stable under the insert-delete distance metric $\did$ on the input and the output distance metric $d_\mathbb{N}$ defined as $d_\mathbb{N}(y, y') = |y - y'|$.
\end{lemma}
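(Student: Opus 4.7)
The plan is to apply Lemma~\ref{lemma:oc-chaining} directly, which says that chaining a jointly output/timing stable program with an OC timing stable program produces an OC timing stable program whose timing stability parameter is the sum of the component timing parameters. The pieces needed are already established earlier in the excerpt, so the proof reduces to verifying the hypotheses and plugging in the parameters.

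First I would invoke Lemma~\ref{lemma:chaining-word-ram} to confirm that the Word RAM programs $P_1$ (Sum) and $P_2$ (Discrete Laplace) are chaining compatible, which is a prerequisite for applying Lemma~\ref{lemma:oc-chaining}. Next I would cite Lemma~\ref{lemma:joint-stability-sum} to get that $P_1$ is $(1 \mapsto \{\Delta, 3\})$-jointly output/timing stable under input metric $\did$ and output metric $d_\mathbb{N}$. Then I would cite Lemma~\ref{lemma:oc-stability-disclap} (instantiated with $\din = \Delta$) to get that $P_2$ is $(\Delta \mapsto 5\Delta)$-OC timing stable under $d_\mathbb{N}$.

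With those three facts in hand, Lemma~\ref{lemma:oc-chaining} applies with parameters $d_1 = 1$, $d_2 = \Delta$, $t_1 = 3$, $t_2 = 5\Delta$, input metric $d_\mathcal{X} = \did$, and intermediate metric $d_\mathcal{Y} = d_\mathbb{N}$. The conclusion is that $P_2 \circ P_1$ is $(1 \mapsto t_1 + t_2) = (1 \mapsto 3 + 5\Delta)$-OC timing stable under $\did$, which is exactly the claim.

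There is no real obstacle here: the lemma is a direct bookkeeping consequence of the chaining theorem and the previously computed stability constants for the two component programs. The only thing to be slightly careful about is matching the output metric of $P_1$ with the input metric used for the OC timing stability bound of $P_2$ (both being $d_\mathbb{N}$), and confirming that the sensitivity parameter $\Delta$ of the Sum program matches the $\din$ used when invoking the OC timing stability of the Discrete Laplace program.
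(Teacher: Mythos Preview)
Your proposal is correct and matches the paper's own proof essentially line for line: both invoke Lemma~\ref{lemma:joint-stability-sum} for the joint output/timing stability of $P_1$, Lemma~\ref{lemma:oc-stability-disclap} (instantiated at $\din=\Delta$) for the OC timing stability of $P_2$, Lemma~\ref{lemma:chaining-word-ram} for chaining compatibility, and then apply Lemma~\ref{lemma:oc-chaining} to conclude. There is nothing to add.
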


\begin{proof}
    Since $P_1$ is $(1\mapsto \{\Delta, 3\})$-jointly output/timing stable (Lemma~\ref{lemma:joint-stability-sum}), $P_2$ is $(\din\mapsto 5\cdot\din)$-OC timing stable (Lemma~\ref{lemma:oc-stability-disclap}), and $P_1$ is chaining compatible with $P_2$ (by Lemma~\ref{lemma:chaining-word-ram}), then by Lemma~\ref{lemma:oc-chaining}, the chained program $P_2 \circ P_1$ is $(1\mapsto 3 + 5\cdot\Delta)$-OC-timing stable. 
\end{proof}


\begin{lemma}\label{lemma:timing-private-sum}
    Let $P_1:\mathcal{X}\times\mathcal{E}\to\mathbb{N}\times\mathcal{E}$ be Program~\ref{program:sum} (Sum), $P_2:\mathbb{N}\times\mathcal{E}\to\mathbb{N}\times\mathcal{E}$ be Program~\ref{program:discrete-laplace} (Discrete Laplace) with scale parameter $s_1 = \Delta/\varepsilon_1$, and $P_3:\mathcal{Z}\times\mathcal{E}\to\mathcal{Z}\times\mathcal{E}$ be Program~\ref{program:disclap-timing-private} (Timing-Private Delay) with scale parameter $s_2 = (3 + 5\Delta)/\varepsilon_2$ and shift $\mu$. Then $P_3 \circ (P_2 \circ P_1)$ is $\varepsilon_1$-differentially private in its output and $(\varepsilon_2, \delta)$-timing private for $\delta = 2\cdot e^{-\varepsilon_2(\mu - (3+5\Delta))/(3+5\Delta)}$.
\end{lemma}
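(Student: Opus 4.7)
The plan is to decompose the lemma into its two assertions—output DP of the composition and timing privacy of the composition—and assemble each from the previously established building blocks.

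For the output privacy claim, I would first invoke Lemma~\ref{lemma:noisy-sum-output-private} to conclude that $P_2\circ P_1$ is $\varepsilon_1$-differentially private under $\did$, using the fact that the Discrete Laplace mechanism with scale $s_1=\Delta/\varepsilon_1$ instantiated against the $(1\mapsto\Delta)$-output-stable Sum program yields an $\varepsilon_1$-DP program. Next I would note that $P_1$ and $P_2$ are chaining-compatible as Word RAM programs (Lemma~\ref{lemma:chaining-word-ram}), and so is $(P_2\circ P_1)$ with $P_3$. Finally, since $P_3$ is a timing-private delay program, it is output-pure (Lemma~\ref{lemma:timing-private-delay-programs-are-output-pure}), and Lemma~\ref{lemma:timing-private-delay-post-process} gives that post-composing with a timing-private delay preserves the output $\varepsilon_1$-DP guarantee of $P_2\circ P_1$.

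For the timing privacy claim, I would first apply Lemma~\ref{lemma:output-stability-sum} to establish that $P_2\circ P_1$ is $(1\mapsto 3+5\Delta)$-OC timing stable under $\did$. Then I would invoke Lemma~\ref{lemma:disclap-timing-private-delay} to conclude that $P_3$, with scale parameter $s_2=(3+5\Delta)/\varepsilon_2$ and shift $\mu$, is a $((3+5\Delta)\mapsto(\varepsilon_2,\delta))$-timing-private delay program under the smoothed max-divergence, where $\delta=2\cdot e^{-\varepsilon_2(\mu-(3+5\Delta))/(3+5\Delta)}$. Chaining compatibility of $(P_2\circ P_1)$ with $P_3$ follows from Lemma~\ref{lemma:chaining-word-ram}. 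The main theorem (Theorem~\ref{thm:main-theorem}) then combines the OC-timing-stable program $P_2\circ P_1$ with the timing-private delay program $P_3$ to conclude that $P_3\circ(P_2\circ P_1)$ is $(1\mapsto(\varepsilon_2,\delta))$-timing private under $\did$ and the smoothed max-divergence.

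I do not expect a serious obstacle: the two statements are essentially bookkeeping applications of previously proved lemmas, with the numerical parameters already matched to the hypotheses. The only subtlety worth flagging explicitly is verifying that the chaining of three programs behaves as expected, namely that $(P_2\circ P_1)$ is itself an admissible Word RAM program to which $P_3$ can then be chained; this is immediate from the construction in the proof of Lemma~\ref{lemma:chaining-word-ram}, which shows Word RAM programs can be freely composed. With that in place, the output-DP claim and the timing-privacy claim follow independently, and together give the statement.
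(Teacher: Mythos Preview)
Your proposal is correct and follows essentially the same approach as the paper's proof: both establish output privacy via Lemma~\ref{lemma:noisy-sum-output-private} followed by Lemma~\ref{lemma:timing-private-delay-post-process}, and timing privacy via Lemma~\ref{lemma:output-stability-sum}, Lemma~\ref{lemma:disclap-timing-private-delay}, and Theorem~\ref{thm:main-theorem}. Your version is slightly more explicit about chaining compatibility (Lemma~\ref{lemma:chaining-word-ram}) and output-purity of $P_3$, but the structure is identical.
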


\begin{proof}
    The chained program $P_2\circ P_1 : \mathcal{X}\times\mathcal{E}\to\mathbb{N}\times\mathcal{E}$ is $\varepsilon_1$-differentially private by Lemma~\ref{lemma:noisy-sum-output-private}. Since $P_3$ is a timing-private delay program (Lemma~\ref{lemma:disclap-timing-private-delay}) it will compute a post-processing (identity function) on the output of $P_2 \circ P_1$ and by Lemma~\ref{lemma:timing-private-delay-post-process} the  chained program $P_3 \circ (P_2 \circ P_1):\mathcal{X}\times\mathcal{E}\to\mathbb{N}\times\mathcal{E}$ is $\varepsilon_1$-differentially private.
    
    Since $P_2 \circ P_1$ is $(1\mapsto 3+5\Delta)$-OC timing stable (Lemma~\ref{lemma:output-stability-sum}), and $P_3$ is a $(3+5\Delta\mapsto (\varepsilon_2, \delta))$-timing-private delay program for $\delta = 2\cdot e^{-\varepsilon_2(\mu - (3+5\Delta))/(3+5\Delta)}$ (Lemma~\ref{lemma:disclap-timing-private-delay}) under input distance metric $d_\mathbb{N}$ and the smoothed max-divergence privacy measure $D_{\infty}^\delta$, then by Theorem~\ref{thm:main-theorem}, the chained program $P_3 \circ(P_2 \circ P_1)$ is $(1\mapsto (\varepsilon_2, \delta))$-timing-private with respect to privacy measure $D_\infty^\delta$.
\end{proof}

Note that Lemma~\ref{lemma:timing-private-sum} provides an $\varepsilon_1$-DP mechanism that achieves $\varepsilon_2$-timing privacy in the upper-bounded DP model when the programs are Word RAM programs. Moreover, this construction is much more efficient than the naive approach of padding execution time to the worst-case dataset size $n_{\textrm{max}} = 2^\omega - 1$, which would result in slow runtimes for datasets with size $n < n_{\textrm{max}}$. In the RAM model of computation, the construction yields a timing-private DP sum in the unbounded DP model.

\subsection{Composition of Timing-Private Programs}
We can treat the composition of timing-private mechanisms similarly to how we treat chaining. In particular, our framework supports the composition of timing-private programs that perform intermediate DP computations.

\begin{definition}[Composition-compatible Programs]\label{def:composition-compat}
    We say that programs $P_1 : \mathcal{X}\times\mathcal{E}\to(\mathcal{X}\times\mathcal{Y})\times\mathcal{E}$ and $P_2 : (\mathcal{X}\times\mathcal{Y})\times\mathcal{E}\to\mathcal{Z}\times\mathcal{E}$ are \emph{composition-compatible} if 
    there is a program $P_2\otimes P_1:\mathcal{X}\times\mathcal{E}\to(\mathcal{Y}\times\mathcal{Z})\times\mathcal{E}$ and a constant $c$ such that for all inputs $x\in \mathcal{X}$ and all execution environments
    $\env_1 \in \mathcal{E}$, there exists a (possibly random) input-compatible execution environment $\env_2\in \mathcal{E}$ 
    such that we have:
    \begin{align*}
        \out((P_2\otimes P_1)(x,\env_1))= (\out(P_1(x, \env_1)),\out(P_2((x, \out(P_1(x, \env_1))),\env_2)))
    \end{align*}
    and 
    \begin{align*}
        \TPP{(P_2\otimes P_1)}{x, \env_1}=\TPP{P_1}{x, \env_1} + \TPP{P_2}{(x, \out(P_1(x, \env_1))), \env_2} + c.
    \end{align*}

Similar to the case of chaining-compatible programs, we allow the random variable $\env_2$ to be arbitrarily correlated with the output and runtime of $P_1(x, \env_1)$, but it should be independent of the coin tosses of $P_2$.
\end{definition}

\begin{lemma}[Composition-compatible Word RAM Programs]\label{lemma:comp-compat-word-ram}
    Let program $P_1:\mathcal{X}\times\mathcal{E}\to(\mathcal{X}\times\mathcal{Y})\times\mathcal{E}$ be a RAM program that writes its unmodified input to 
    \begin{align*}
        M[\texttt{output\_ptr}],\dots,M[\texttt{output\_ptr} + \texttt{input\_len} - 1]
    \end{align*}
    and appends the rest of its output to  $M[\texttt{output\_ptr} + \texttt{input\_len}]$, $\dots$, $M[\texttt{output\_ptr} +\texttt{output\_len} - 1]$.
    Then $P_1$ and $P_2$ are \emph{composition-compatible} for all programs $P_2: (\mathcal{X}\times\mathcal{Y})\times\mathcal{E}\to\mathcal{Z}\times\mathcal{E}$.
\end{lemma}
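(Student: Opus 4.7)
The plan is to mimic the chaining construction in Lemma~\ref{lemma:chaining-word-ram}, but with two extra twists: we must deliver both $P_1$'s output and $P_2$'s output to the final output region, and we must verify that the environment $\env_2$ passed to $P_2$ can be constructed without any dependence on $P_2$'s internal coin tosses. By the hypothesis on $P_1$, after $P_1(x,\env_1)$ halts the contiguous cells $M[\texttt{output\_ptr}], \ldots, M[\texttt{output\_ptr} + \texttt{output\_len} - 1]$ hold exactly $x$ followed by the ``appended'' portion of $P_1$'s output---that is, the pair $(x, \out(P_1(x, \env_1))) \in \mathcal{X} \times \mathcal{Y}$ is already laid out in memory in exactly the format that $P_2$ expects as input.

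To build $P_2 \otimes P_1$, I would first delete $P_1$'s terminating \texttt{HALT} and insert a constant-length glue block that (i) saves $P_1$'s \texttt{output\_ptr} and \texttt{output\_len} into two reserved registers for later use, and (ii) assigns \texttt{input\_ptr} $\gets$ \texttt{output\_ptr} and \texttt{input\_len} $\gets$ \texttt{output\_len}. Then I would append the body of $P_2$ verbatim, shifting every line reference in $P_2$'s branches and \texttt{goto} instructions by the total length of the $P_1$-plus-glue preamble. Finally, I would replace $P_2$'s terminating \texttt{HALT} by a second constant-length glue block that writes a small header (four words holding the two saved $(\texttt{ptr}, \texttt{len})$ pairs for $P_1$'s and $P_2$'s output regions) into a reserved memory location, updates \texttt{output\_ptr} and \texttt{output\_len} to describe that header, and then halts. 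This header is the representation of the composite output $(\out(P_1(x,\env_1)), \out(P_2(\cdot,\env_2)))$.

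With this construction, both required properties from Definition~\ref{def:composition-compat} are immediate. Define $\env_2$ to be the environment $\out\env(P_1(x,\env_1))$ modified by the first glue block's deterministic assignments to \texttt{input\_ptr} and \texttt{input\_len}; this $\env_2$ is input-compatible with $(x, \out(P_1(x, \env_1)))$ and is determined entirely by $x$, $\env_1$, and $P_1$'s coins, so it is independent of $P_2$'s coin tosses as Definition~\ref{def:composition-compat} demands. The total runtime decomposes as $\TPP{P_1}{x, \env_1}$ plus the cost of the first glue block plus $\TPP{P_2}{(x, \out(P_1(x, \env_1))), \env_2}$ plus the cost of the second glue block, and the two glue contributions add to a fixed constant $c$ that depends on neither input nor randomness. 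The argument transfers unchanged to the plain RAM model, since no Word RAM-specific feature is used beyond the guarantee that pointers and saved lengths fit in a single word in the Word RAM case.

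The only delicate point is packaging both $\out(P_1)$ and $\out(P_2)$ into a single output region in \emph{constant} time, since naively concatenating or copying either output would make the overhead depend on $|x|+|y|+|z|$ and thus not a constant. Using a fixed-size header of pointers and lengths sidesteps this: by the time $P_2$ halts, both outputs already live in their own contiguous memory regions, so a constant amount of bookkeeping suffices to describe where they are. Everything else---routing $P_1$'s output into $P_2$'s input, shifting line numbers, and bounding the glue cost---is the same bookkeeping exercise as in Lemma~\ref{lemma:chaining-word-ram}.
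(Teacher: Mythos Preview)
Your proof is correct and follows the same core construction as the paper: replace $P_1$'s \texttt{HALT} with constant-time glue that sets \texttt{input\_ptr} and \texttt{input\_len} to $P_1$'s output region, then append $P_2$'s code with shifted line numbers. The paper's proof stops there and does not explain how the composed program's output ends up representing the pair $(\out(P_1),\out(P_2))$; you go further by saving $P_1$'s output pointer/length before running $P_2$ and emitting a constant-size header at the end, which is a legitimate way to realize the pair in constant extra time. So your argument is the same approach, but strictly more complete on the output-packaging side than what the paper actually writes down.
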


\begin{proof}
    The proof for composition-compatible programs works similarly to that of chaining-compatible programs. We replace the $\texttt{HALT}$ at the end of $P_1$ to set $\texttt{input\_len}=\texttt{output\_len}$ and $\texttt{input\_ptr} = \texttt{output\_ptr}$.  The resulting program is an execution of $P_2$ on input $(x, y)$ in environment $\env_2$ that is the same as $\out\env(P_1(x, \env_1))$ except for these modifications to $\texttt{input\_ptr}$ and $\texttt{input\_len}$.
\end{proof}

\begin{remark}
    The proof follows for RAM programs also.
\end{remark}

\begin{lemma}[Composition of OC-Timing-Stable Programs]\label{lemma:composition-oc-stable}
    Let $P_1 :\mathcal{X}\times\mathcal{E}\to(\mathcal{X}\times\mathcal{Y})\times\mathcal{E}$ be $(d_1 \mapsto t_1)$-OC timing stable with respect to the $\mathcal{Y}$ output coordinate\footnote{This means that we condition only on the $y\in \mathcal{Y}$ part of the output when considering OC-timing stability.} and input distance metric $d_\mathcal{X}$. Similarly, let $P_2:(\mathcal{X}\times\mathcal{Y})\times\mathcal{E} \to(\mathcal{Y}\times\mathcal{Z})\times\mathcal{E}$ be $(d_1 \mapsto t_2)$-OC timing stable with respect to input distance metric $d_\mathcal{X}$ on its first input coordinate. If $P_1$ and $P_2$ are composition compatible, then the composed program $P_2 \otimes P_1$ is $(d_1 \mapsto t_1 + t_2)$-OC timing stable.
\end{lemma}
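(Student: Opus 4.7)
The plan is to mirror the proof of Lemma~\ref{lemma:oc-chaining} almost verbatim, with the only substantive change being how we verify that the hypothesis of $P_2$'s OC-timing stability applies. Fix adjacent $x, x' \in \mathcal{X}$ with $d_\mathcal{X}(x, x') \le d_1$, input-compatible $\env_1, \env_1' \in \mathcal{E}$, and a target output pair $(y, z)$ in the common support of $\out((P_2 \otimes P_1)(x, \env_1))$ and $\out((P_2 \otimes P_1)(x', \env_1'))$. I want to exhibit a coupling $(\Tilde{w}, \Tilde{w}')$ of the conditional runtime random variables $\TPP{P_2 \otimes P_1}{x, \env_1}\mid_{\out = (y, z)}$ and $\TPP{P_2 \otimes P_1}{x', \env_1'}\mid_{\out = (y, z)}$ with $|\Tilde{w} - \Tilde{w}'| \le t_1 + t_2$ almost surely. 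By composition-compatibility, these runtimes decompose as $\TPP{P_1}{x, \env_1} + \TPP{P_2}{(x, \out(P_1(x, \env_1))), \env_2} + c$ (conditioned on the joint event on $(y, z)$), and analogously for $x'$.

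I would then build the coupling in two stages. First, invoke OC-timing stability of $P_1$ (with respect to the $\mathcal{Y}$-output coordinate) to obtain a coupling $(\Tilde{v}, \Tilde{v}')$ of the conditional $P_1$-runtimes given that the $\mathcal{Y}$-projection of the $P_1$-output equals $y$, satisfying $|\Tilde{v} - \Tilde{v}'| \le t_1$ with probability~$1$. Second, condition on any realization of the $P_1$-step together with environments $\env_2, \env_2'$ as permitted by the composition-compatibility definition. The crucial observation is that the first coordinate of $P_2$'s input is the original dataset: on the left it is $x$ and on the right it is $x'$, and since $d_\mathcal{X}(x, x') \le d_1$, the two $P_2$-inputs are $d_1$-close under the input metric $P_2$ uses (which, by hypothesis, only examines the first input coordinate). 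Hence I can apply OC-timing stability of $P_2$ conditioned on its $\mathcal{Z}$-output being $z$ to obtain a coupling $(\Tilde{r}, \Tilde{r}')$ of the conditional $P_2$-runtimes with $|\Tilde{r} - \Tilde{r}'| \le t_2$ almost surely.

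Gluing the two stages together by setting $\Tilde{w} = \Tilde{v} + \Tilde{r} + c$ and $\Tilde{w}' = \Tilde{v}' + \Tilde{r}' + c$, the triangle inequality yields $|\Tilde{w} - \Tilde{w}'| \le t_1 + t_2$ with probability $1$. Composition-compatibility (together with the requirement that $\env_2$ be independent of $P_2$'s coin tosses) then guarantees that the marginals of $\Tilde{w}$ and $\Tilde{w}'$ coincide with the conditional runtime distributions of the composed program given the joint output $(y, z)$.

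The main obstacle I anticipate is the bookkeeping of conditional independence: specifically, verifying that conditioning on the \emph{full} composed output $(y, z)$ is equivalent to first conditioning on $\out_\mathcal{Y}(P_1) = y$ and then, for the $P_2$-step, conditioning on $\out_\mathcal{Z}(P_2) = z$ with $P_2$'s randomness independent of everything upstream. This is exactly what composition-compatibility is designed to permit, but it must be invoked explicitly when passing from the definitional conditioning on $(y,z)$ to the two-stage factored form, and care is needed because the first coordinate of $P_2$'s input must be identified with the original $x$ (rather than the possibly random $\mathcal{X}$-component of $P_1$'s output) in order for the $d_1$-closeness hypothesis of $P_2$'s OC-timing stability to apply.
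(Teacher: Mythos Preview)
Your proposal is correct and follows essentially the same approach as the paper's proof: both arguments fix $x,x',\env_1,\env_1'$ and a common output $(y,z)$, invoke the OC-timing stability of $P_1$ (conditioned on the $\mathcal{Y}$-coordinate equaling $y$) to couple the $P_1$-runtimes within $t_1$, then use composition-compatibility to pass to $\env_2,\env_2'$ and apply the OC-timing stability of $P_2$ on inputs $(x,y)$ and $(x',y)$ (which are $d_1$-close on the first coordinate) conditioned on output $(y,z)$ to couple the $P_2$-runtimes within $t_2$, and finally sum and apply the triangle inequality. The bookkeeping issue you flag---identifying the first coordinate of $P_2$'s input with the original $x$ rather than an arbitrary $\mathcal{X}$-component of $P_1$'s output---is handled in the paper exactly as you suggest, by writing $P_2$'s input as $(x,y)$ directly.
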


\begin{proof}
    For all $x, x' \in \mathcal{X}$ satisfying $d_\mathcal{X}(x, x') \le d_1$, all input-compatible environments $\env_1, \env_1' \in \mathcal{E}$, and all $(y, z) \in \supp(\out((P_2\otimes P_1)(x, \env_1)))\cap\supp(\out((P_2\otimes P_1)(x', \env_1')))$, we will construct a coupling $(\Tilde{r}, \Tilde{r}')$ of $\TPP{(P_2 \otimes P_1)}{x, \env_1}|_{\out((P_2 \otimes P_1)(x, \env_1)) = (y, z)}$ and $\TPP{(P_2 \otimes P_1)}{x', \env_1'}|_{\out((P_2 \otimes P_1)(x', \env_1')) = (y, z)}$ such that $|\Tilde{r} - \Tilde{r}'| \le t_1 + t_2$ with probability $1$. \\

    Observe that for all $x, x'\in\mathcal{X}$ such that $d_\mathcal{X}(x, x')\le d_1$, all $\env_1, \env_1'\in\mathcal{E}$, and all $(\cdot, y) \in \supp(\out(P_1(x, \env_1))) \cap \supp(\out(P_1(x', \env_1')))$, there exists a coupling $(\Tilde{r}_1, \Tilde{r}_1')$ of the conditional random variables $\TPP{P_1}{x, \env_1}|_{\out(P_1(x, \env_1)) = (x,y)}$ and $\TPP{P_1}{x', \env_1'}|_{\out(P_1(x', \env_1')) = (x', y)}$ such that $|\Tilde{r}_1 - \Tilde{r}_1'| \le t_1$ (by the OC timing stability of $P_1$). \\

    \noindent For all such $x, x'$ and $\env_1, \env_1'$, take $\env_2$ and $\env_2'$ to be the execution environments satisfying:
    \begin{align*}
        &\TPP{(P_2 \otimes P_1)}{x, \env_1} = \TPP{P_1}{x, \env_1} + \TPP{P_2}{(x, \out(P_1(x, \env_1))), \env_2} + c 
    \end{align*}
    and
    \begin{align*}
        &\TPP{(P_2 \otimes P_1)}{x', \env_1'} = \TPP{P_1}{x', \env_1'} + \TPP{P_2}{(x', \out(P_1(x', \env_1'))), \env_2'} + c 
    \end{align*}
    respectively (such an $\env_2$ and $\env_2'$ exist due to $P_1$ and $P_2$ being composition-compatible). Now fix $y \in \supp(\out(P_1(x, \env_1))) \cap \supp(\out(P_1(x', \env_1')))$. Then for all $z$ such that 
    \begin{align*}
        (y, z) \in \supp(\out(P_2((x, y), \env_2))) \cap \supp(\out(P_2((x', y), \env_2')))
    \end{align*} there exists a coupling $(\Tilde{r}_2, \Tilde{r}_2')$ of the random variables $\TPP{P_2}{(x, y), \env_2}|_{\out(P_2((x, y), \env_2)) = (y, z)}$ and $\TPP{P_2}{(x', y), \env_2'}|_{\out(P_2((x', y), \env_2')) = (y, z)}$ such that $|\Tilde{r}_2 - \Tilde{r}_2'| \le t_2$ (by the OC timing stability of $P_2$).
    
    We sample $(r_1, r_1') \sim (\Tilde{r}_1, \Tilde{r}_1')$ and sample $\env_2$ conditioned on $(r_1, y)$ and $\env_2'$ conditioned on $(r'_1, y)$. We then samlpe $(r_2, r_2') \sim (\Tilde{r}_2, \Tilde{r}_2')|_{\env_2, \env_2'}$ and let
    \begin{align*}
        \Tilde{r} &= r_1 + r_2 + c \\
        &= \TPP{P_1}{x, \env_1}|_{\out(P_1(x, \env_1)) = y} + \TPP{P_2}{(x, y), \env_2}|_{\out(P_2((x, y), \env_1)) = (y, z)} + c \\
        &= \TPP{(P_2 \otimes P_1)}{x, \env_1}|_{\out((P_2 \otimes P_1)(x, \env_1)) = (y, z)}
    \end{align*}
    and similarly for $\Tilde{r}'$, where $c$ is the constant in Definition~\ref{def:composition-compat}. It follows that
    \begin{align*}
        |\Tilde{r} - \Tilde{r}'| &= |r_1 + r_2 + c - r_1' - r_2' - c| \\
        &= |r_1 - r_1' + r_2 - r_2'| \\
        &\le |r_1 - r_1'| + |r_2 - r_2'| \\
        &\le t_1 + t_2
    \end{align*}
\end{proof}

Lemma~\ref{lemma:composition-oc-stable} shows that we can reason about timing stability when composing DP programs. After analyzing the timing stability of the overall composed program, we can add a single timing delay to protect the release from timing attacks. Alternatively, we can compose timing-private programs such that composed program is also timing-private.

\begin{lemma}[Composition of Timing-Private Programs]\label{lemma:composition-timing-privacy}
Let $\mathcal{C} :(\mathcal{M}\times\mathcal{M})\to\mathcal{M}$ be a valid composition function for a privacy measure $M$; meaning that if $M(X_1, X'_1) \le d_1$ and $M(X_2|X_1 = x, X'_2 | X'_1 = x') \le d_2$ then $M((X_1, X_2), (X'_1, X'_2)) \le C(d_1, d_2)$. Let $P_1:\mathcal{X}\times\mathcal{E}\to(\mathcal{X}\times\mathcal{Y})\times\mathcal{E}$ be $(d_0\mapsto d_1)$-timing private with respect to its second output coordinate, input metric $d_\mathcal{X}$, and privacy measure $M$. Let $P_2:(\mathcal{X}\times\mathcal{Y})\times\mathcal{E}\to(\mathcal{Y}\times\mathcal{Z})\times\mathcal{E}$ be $(d_0\mapsto d_2)$-timing private with respect to its second output coordinate, input metric $d_\mathcal{X}$, and privacy measure $M$. If $P_1$ and $P_2$ be composition compatible, then $P_1 \otimes P_2: \mathcal{X}\times\mathcal{E}\to(\mathcal{Y}\times\mathcal{Z})\times\mathcal{E}$ is $(d_0 \mapsto C(d_1, d_2))$-timing-private with respect to $d_\mathcal{X}$ and privacy measure $M$. 
\end{lemma}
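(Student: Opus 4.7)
The plan is to work with the simulation-based characterization of timing privacy (Definition~\ref{def:sim-rtprivacy}), which is equivalent to Definition~\ref{def:rtprivacy} by Lemma~\ref{lemma:timing-sim-equiv}, and to construct a simulator for $P_2\otimes P_1$ from simulators for $P_1$ and $P_2$. Fix $x,x'\in\mathcal{X}$ with $d_\mathcal{X}(x,x')\le d_0$ and input-compatible $\env_1,\env_1'\in\mathcal{E}$. The timing privacy of $P_1$ provides a simulator $S_1$ with $M(S_1(x,\env_1,y),S_1(x',\env_1',y))\le d_1$ for all $y\in\mathcal{Y}$; the timing privacy of $P_2$ provides, for every pair of input-compatible $\env_2,\env_2'\in\mathcal{E}$, a simulator $S_2$ satisfying $M(S_2((x,y),\env_2,z),S_2((x',y),\env_2',z))\le d_2$ for all $z$. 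The fact that both invocations of $S_2$ share the same $y$ is essential, and it is available precisely because we condition on the same joint output $(y,z)$ in the goal.

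I would then define the composed simulator $S(x,\env_1,(y,z))$ by: (i) sample $t_1\leftarrow S_1(x,\env_1,y)$; (ii) sample $\env_2$ from the conditional distribution of the composition environment given $\out(P_1(x,\env_1))=(x,y)$ and $T_{P_1}(x,\env_1)=t_1$, as specified by Definition~\ref{def:composition-compat}; (iii) sample $t_2\leftarrow S_2((x,y),\env_2,z)$; (iv) return $t_1+t_2+c$, and define $S(x',\env_1',(y,z))$ analogously. Correctness---namely that $S(x,\env_1,(y,z))$ is identically distributed to $T_{P_2\otimes P_1}(x,\env_1)$ conditioned on $\out((P_2\otimes P_1)(x,\env_1))=(y,z)$---follows from the decomposition $T_{P_2\otimes P_1}=T_{P_1}+T_{P_2}+c$ provided by composition-compatibility together with the chain rule of conditional probability, and crucially from the stipulation in Definition~\ref{def:composition-compat} that $\env_2$ is independent of $P_2$'s coin tosses, so that conditioning on the second coordinate of $\out(P_2)$ affects only the $P_2$-randomness step and leaves the joint law of $(t_1,\env_2)$ intact.

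For the privacy bound, view the simulator as first outputting the pair $(t_1,t_2)$ and then post-processing by $(a,b)\mapsto a+b+c$. By the timing privacy of $P_1$ we have $M(t_1,t_1')\le d_1$. For any fixed values $t_1,t_1'$, the conditional distribution of $t_2$ (resp.\ $t_2'$) is a mixture of $S_2((x,y),\env_2,z)$ (resp.\ $S_2((x',y),\env_2',z)$) over $\env_2$ (resp.\ $\env_2'$) drawn from its conditional distribution given $t_1$ (resp.\ $t_1'$); applying joint convexity of $M$ to the pairwise bound $M(S_2((x,y),\env_2,z),S_2((x',y),\env_2',z))\le d_2$ along any joint distribution over $(\env_2,\env_2')$ yields $M(t_2\mid t_1,\,t_2'\mid t_1')\le d_2$. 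Invoking the assumed composition rule $\mathcal{C}$ then gives $M((t_1,t_2),(t_1',t_2'))\le \mathcal{C}(d_1,d_2)$, and post-processing of $M$ delivers the claimed bound on $M(S(x,\env_1,(y,z)),S(x',\env_1',(y,z)))$.

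The main obstacle I anticipate is making step~(ii) of the construction rigorous: composition-compatibility only describes $\env_2$ as a random variable jointly distributed with $\out(P_1)$ and $T_{P_1}$, so the correctness argument requires carefully writing out the joint measure over $(T_{P_1},\env_2,T_{P_2})$ and verifying that conditioning on $\out((P_2\otimes P_1))=(y,z)$ decomposes cleanly into the three-stage sampling procedure. Once that measure-theoretic bookkeeping is handled, the remainder is a direct application of the assumed composition rule, joint convexity, and post-processing of $M$.
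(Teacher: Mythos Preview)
Your proposal is correct and follows essentially the same core strategy as the paper: decompose the runtime of $P_2\otimes P_1$ via composition-compatibility into $T_{P_1}+T_{P_2}+c$, bound the two pieces separately using the timing privacy of $P_1$ and $P_2$, and then invoke the composition function $\mathcal{C}$ on the resulting pair.

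The main difference is packaging. The paper works directly with Definition~\ref{def:rtprivacy}: it simply names the conditional runtime variables $Y,Y',Z,Z'$, asserts $M(Y,Y')\le d_1$ and $M(Z,Z')\le d_2$ from the hypotheses, applies $\mathcal{C}$ to get $M((Y,Z),(Y',Z'))\le \mathcal{C}(d_1,d_2)$, and stops (leaving the post-processing step $(a,b)\mapsto a+b+c$ implicit). You instead pass through the simulation characterization (Definition~\ref{def:sim-rtprivacy}), explicitly build a composed simulator, and make both the joint-convexity step (to absorb the randomness in $\env_2,\env_2'$) and the post-processing step explicit. Your version is more careful precisely at the point you flag as the main obstacle---the paper treats $\env_2$ as if it were fixed and does not discuss the mixture over environments---so what you call ``measure-theoretic bookkeeping'' is genuinely additional rigor rather than a detour. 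Either route works; yours is longer but fills in details the paper elides.
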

\begin{proof}
    By composition-compatibility, for all $x \in \mathcal{X}$ and input-compatible execution environments $\env_1$, there exists an input-compatible execution environment $\env_2$ such that
    \begin{align*}
        \TPP{P_2\otimes P_1}{x, \env_1} = \TPP{P_1}{x, \env_1} + \TPP{P_2}{(x, \out(P_1(x, \env_1))), \env_2} + c
    \end{align*}
    We can therefore analyze  $\TPP{P_2\otimes P_1}{x, \env_1}|_{\out((P_2\otimes P_1)(x, \env_1) = (y, z)}$ using the joint random variable:
    \begin{align*}
        (\TPP{P_1}{x, \env_1}|_{\out(P_1(x, \env_1)) = (x, y)},  \TPP{P_2}{(x, y), \env_2}|_{\out(P_2((x, y), \env_2)) = (y, z)})
    \end{align*}
    
    and similarly for $x'$. Since $P_1$ is timing-private with respect to input metric $d_\mathcal{X}$ and privacy measure $M$, it follows that for all $d_1$-close inputs $x, x' \in\mathcal{X}$, all input-compatible execution environments $\env_1, \env'_1$, and all $y \in \supp(\out(P_1(x, \env_1)))\cap \supp(\out(P_1(x', \env'_1)))$
    \begin{align*}
        M(\TPP{P_1}{x,\env_1}|_{\out(P_1(x,\env_1)) = (x, y)}, \TPP{P_1}{x',\env'_1}|_{\out(P_1(x',\env'_1)) = (x', y)}) \le d_1
    \end{align*}

    Similarly, since $P_2$ is timing-private with respect to input metric $d_\mathcal{X}$ and privacy measure $M$, it follows that for all inputs $(x, y), (x', y) \in\mathcal{X}\times\mathcal{Y}$ that are $d_1$-close on their first coordinate, all input-compatible execution environments $\env_2, \env'_2$, and all $(y, z) \in \supp(\out(P_2((x, y), \env_2)))\cap \supp(\out(P_2((x', y), \env'_2)))$
    \begin{align*}
        M(\TPP{P_2}{(x, y),\env_2}|_{\out(P_2((x,y),\env_2)) = (y,z)}, \TPP{P_2}{(x', y),\env'_2}|_{\out(P_2((x', y),\env'_2)) = (y, z)}) \le d_2
    \end{align*}
    
    Let \begin{align*}
        &Y = \TPP{P_1}{x,\env_1}|_{\out(P_1(x,\env_1)) = (x, y)} \\
        &Y' = \TPP{P_1}{x',\env'_1}|_{\out(P_1(x',\env'_1)) = (x', y)} \\
        &Z = \TPP{P_2}{(x, y),\env_2}|_{\out(P_2((x,y),\env_2)) = (y,z)} \\
        &Z' = \TPP{P_2}{(x', y),\env'_2}|_{\out(P_2((x', y),\env'_2)) = (y, z)}
    \end{align*}
    
    Then 
    \begin{align*}
        M((Y, Z), (Y', Z')) &\le C(M(Y, Y'), M(Z, Z')) \\
        &\le C(d_1, d_2)
    \end{align*}
    which is what we wanted to show.
\end{proof}

The above composition theorem says that we can compose timing-private programs and keep track of the overall timing privacy guarantees. For example, $C((\varepsilon_1, \delta_1), (\varepsilon_2, \delta_2)) = (\varepsilon_1 + \varepsilon_2, \delta_1 + \delta_2)$ is a valid composition function for approximate-DP (Lemma~\ref{lem:dp composition}). Lemma~\ref{lemma:composition-timing-privacy} says that the composition of a program $P_1$ that is $(\varepsilon_1, \delta_1)$-timing private with a program $P_2$ that is $(\varepsilon_2, \delta_2)$-timing private, results in a new program that is $(\varepsilon_1 + \varepsilon_2, \delta_1 + \delta_2)$-timing private. 

We can use the above fact to construct timing-private programs, for example, that compute DP means from a timing-private DP sum program and a timing-private DP count program. The DP sum and DP count can be interpreted as the numerator and denominator of the mean, respectively. For example, Program~\ref{program:dataset-count} accepts as input $(x, y)$ where $x$ is a dataset and $y$ is a DP sum over $x$. By chaining together Program~\ref{program:dataset-count} with a (modified) Program~\ref{program:discrete-laplace}, we obtain a program that outputs a DP sum followed by a DP count. 

\begin{algorithm}[t]
\vspace{5px}
\begin{flushleft}
    
\textbf{Input:} A dataset $x$ occupying memory locations $M[0],\dots, M[\texttt{input\_len} - 2]$ and a value $y\in\mathbb{N}$ at memory location $M[\texttt{input\_len} - 1]$. The value $y$ can be a (possibly DP) sum over the elements $M[0],\dots,M[\texttt{input\_len} - 1]$, for example.  

\vspace{5px}
\textbf{Output:} The value $y$ and the size of the dataset $\texttt{input\_len} - 1$. 
\end{flushleft}
\vspace{5px}
\begin{algorithmic}[1]
\STATE $\texttt{output\_ptr} = \texttt{input\_len} - 2$;
\STATE $\texttt{output\_len} = 2$;
\STATE $M[\texttt{output\_ptr} + 1] = \texttt{input\_len} - 1$;
\STATE $\texttt{HALT}$;
\end{algorithmic}

\caption{Dataset Count Program}\label{program:dataset-count}
\end{algorithm}

\subsection{A Timing Private Unbounded Mean}
We now give an example of how one can use composition-compatible programs to compute timing-private DP means in the unbounded DP model. 

\begin{remark}\label{remark:modified-programs}
    In the following constructions, we will assume a modified version of the timing-private DP sum program from Lemma~\ref{lemma:timing-private-sum} that instead writes its output in a composition-compatible way (by appending its output to the end of its input and setting $\texttt{output\_ptr}$ and $\texttt{output\_len}$ accordingly). This is a straight-forward modification that involves updating Program~\ref{program:sum} and Program~\ref{program:discrete-laplace} so that the original input is included in the chained program's output. Similarly, when we chain together Program~\ref{program:discrete-laplace} with Program~\ref{program:dataset-count}, we will assume a modified version of the Discrete Laplace program that accepts as input $(y_1, y_2)$ and outputs $(y_1, y_2 + \eta)$ where $\eta$ is the noise drawn from the Discrete Laplace distribution. We note that these changes can be made such that the stability claims for Lemma~\ref{lemma:oc-stability-disclap} and  Lemma~\ref{lemma:joint-stability-sum} continue to hold by ensuring that the modifications incur some constant additive overhead to the runtime.
\end{remark}

\begin{lemma}\label{lemma:joint-count-stability}
    The Dataset Count program $P:(\mathcal{X}\times\mathbb{N})\times\mathcal{E}\to(\mathbb{N}\times\mathbb{N})\times\mathcal{E}$ (Program~\ref{program:dataset-count}) is $(\din\to \{\din, 0\})$-jointly output/timing stable under the output distance metric $d_\mathbb{N}$ on the second output coordinate of $(\mathbb{N}\times\mathbb{N})$ and the input distance metric $\did$.
\end{lemma}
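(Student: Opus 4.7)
The plan is to invoke Lemma~\ref{lemma:det-joint-stability}, which says that a program deterministic in its output is jointly output/timing stable as long as it is both output stable and timing stable separately. So the strategy reduces to verifying two easy facts about Program~\ref{program:dataset-count}: $(\din \mapsto \din)$-output stability on the second output coordinate under $\did$, and $(\din \mapsto 0)$-timing stability under $\did$.

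First I would note that the program is deterministic: given input $(x, y)$ placed at $M[0], \dots, M[\texttt{input\_len}-1]$, the program writes exactly $(y, |x|)$ to memory, where $|x| = \texttt{input\_len}-1$. In particular it never calls \texttt{RAND} and never inspects uninitialized memory, so for every input the output and runtime are pointwise determined.

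Next, for output stability on the second coordinate under input metric $\did$: if $\did(x, x') \le \din$, then $||x| - |x'|| \le \din$ by the definition of insert-delete distance. Since the second output coordinate is exactly the dataset size, $d_{\mathbb{N}}(|x|, |x'|) = ||x|-|x'|| \le \din$, which gives $(\din \mapsto \din)$-output stability on the second coordinate. For timing stability, inspection of the four lines of Program~\ref{program:dataset-count} shows that the program always executes exactly $4$ instructions before halting, independent of $\texttt{input\_len}$ or the contents of memory. Thus the runtime is a constant random variable, and trivially $(\din \mapsto 0)$-timing stable under any input metric, in particular $\did$.

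Finally, I would combine these two facts via Lemma~\ref{lemma:det-joint-stability} applied with $d_1 = \din$, $d_2 = \din$, and $t_1 = 0$, yielding that $P$ is $(\din \mapsto \{\din, 0\})$-jointly output/timing stable. There is no real obstacle here; the only thing to be careful about is the bookkeeping that the ``output metric'' in the statement refers only to the second output coordinate (the count), which is why output stability of the first coordinate (the pass-through value $y$) plays no role in the argument.
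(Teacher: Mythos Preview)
Your proposal is correct and follows essentially the same approach as the paper: verify constant-time execution (hence $(\din\mapsto 0)$-timing stability), verify $(\din\mapsto\din)$-output stability on the second coordinate via the bound $||x|-|x'||\le\did(x,x')$, observe determinism, and combine via Lemma~\ref{lemma:det-joint-stability}. The only cosmetic difference is that the paper cites Lemma~\ref{lemma:input-independence} for the timing-stability step whereas you argue it directly.
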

\begin{proof}
    The program always executes in $4$ instructions and is therefore $(\din\mapsto 0)$-timing stable under the input distance metric $\did$ (Lemma~\ref{lemma:input-independence}). If you add or remove a row from the dataset $x \in \mathcal{X}$, then the count changes by at most $\din$. Therefore, the program is $(\din\mapsto \din)$-output stable (with respect to its second output coordinate) under the output distance metric $d_\mathbb{N}$ and input metric $\did$. Since the program is deterministic in its output, the program satisfies $(\din \mapsto \{\din, 0\})$-joint output/timing stability (Lemma~\ref{lemma:det-joint-stability}).
\end{proof}

\begin{lemma}\label{lemma:noisy-count-output-private}
   Let $P_1:(\mathcal{X}\times\mathbb{N})\times\mathcal{E}\to(\mathbb{N}\times\mathbb{N})\times\mathcal{E}$ be the Dataset Count program (Program~\ref{program:dataset-count}) and $P_2: (\mathbb{N}\times\mathbb{N}) \times\mathcal{E} \to (\mathbb{N}\times\mathbb{N})\times\mathcal{E}$ be the Discrete Laplace program (Program~\ref{program:discrete-laplace}) modified according to Remark~\ref{remark:modified-programs} with $s = 1/\varepsilon_2 = 1/\ln(\texttt{b}/\texttt{a})$. Then $P_2 \circ P_1: (\mathcal{X}\times\mathbb{N})\times\mathcal{E}\to(\mathbb{N}\times\mathbb{N})\times\mathcal{E}$ is $\varepsilon_2$-differentially private under the max-divergence privacy measure with respect to the 2nd coordinate of $(\mathbb{N}\times\mathbb{N})$ and the input distance $\did$.
\end{lemma}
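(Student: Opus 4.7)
The plan is to reduce this to Lemma~\ref{lemma:laplace-output-privacy} (output privacy of the Discrete Laplace program on $\din$-close inputs) together with the output stability of $P_1$. Specifically, I will show that the second output coordinate of $P_2 \circ P_1$ is Discrete Laplace noise added to the dataset count $|x|$, and that $\did$-adjacency of datasets shifts this count by at most $1$.

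First, I would extract output stability of $P_1$ on its second output coordinate. By Lemma~\ref{lemma:joint-count-stability}, $P_1$ is $(1 \mapsto \{1, 0\})$-jointly output/timing stable with respect to the second output coordinate; applying Lemma~\ref{lemma:joint-implies-output-stable} (which states that joint output/timing stability implies output stability) yields $(1 \mapsto 1)$-output stability on that coordinate under $\did$ and $d_\mathbb{N}$. Consequently, for any $\did$-adjacent inputs $(x, y)$ and $(x', y')$ and any pair of input-compatible execution environments $\env_1, \env_1'$, the second output coordinates of $P_1((x,y), \env_1)$ and $P_1((x', y'), \env_1')$, namely $|x|$ versus $|x'|$, differ by at most $1$ in $d_\mathbb{N}$.

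Second, by Remark~\ref{remark:modified-programs} the modified $P_2$ passes $y_1$ through untouched and outputs $y_2 + \eta$ on the second coordinate, where $\eta$ is drawn from the same (censored) Discrete Laplace distribution as in Program~\ref{program:discrete-laplace} with scale $s = 1/\varepsilon_2 = 1/\ln(\texttt{b}/\texttt{a})$. Projecting onto the second coordinate is a deterministic (hence output-pure) post-processing of the chained output, so by Lemma~\ref{lemma:postprocess-program-outputs} it suffices to argue that this projection alone is $\varepsilon_2$-DP. But that projection is precisely the output of the original Discrete Laplace mechanism applied to $|x|$, and by the previous paragraph these inputs are $1$-close in $d_\mathbb{N}$; Lemma~\ref{lemma:laplace-output-privacy} applied with $\din = 1$ and $\varepsilon = \varepsilon_2$ then yields $\varepsilon_2$-DP under the max-divergence privacy measure.

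I do not anticipate a substantive obstacle. The only care required is in phrasing the ``with respect to the 2nd coordinate'' claim as a post-processing step (justifying that the first output coordinate can be discarded without affecting the DP analysis), and in verifying that the modification of the Discrete Laplace program described in Remark~\ref{remark:modified-programs} does not alter the distribution of the noise added to the second coordinate, so that Lemma~\ref{lemma:laplace-output-privacy} applies verbatim.
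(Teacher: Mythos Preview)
Your proposal is correct and follows essentially the same approach as the paper: extract $(1\mapsto 1)$-output stability of $P_1$ on its second coordinate via Lemma~\ref{lemma:joint-count-stability} together with Lemma~\ref{lemma:joint-implies-output-stable}, and then invoke Lemma~\ref{lemma:laplace-output-privacy} on the resulting $1$-close inputs to the Discrete Laplace mechanism. The one minor point is that your appeal to Lemma~\ref{lemma:postprocess-program-outputs} is superfluous here---the lemma's statement is already about the second coordinate, so you are proving exactly what is asked rather than reducing via post-processing---but this does not affect correctness.
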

\begin{proof}
   $P_1$ is $(\din\mapsto \din)$-output stable with respect to input distance metric $\did$ and output distance metric $d_\mathbb{N}$ on the second output coordinate of $(\mathbb{N}\times\mathbb{N})$ (Lemma~\ref{lemma:joint-count-stability} and Lemma~\ref{lemma:joint-implies-output-stable}). Therefore the proof follows exactly by Lemma~\ref{lemma:laplace-output-privacy} since $P_2 \circ P_1$ is the Discrete Laplace program executed on $\din$-close inputs under the input distance $d_\mathbb{N}(n, n') = |n - n'|$.
\end{proof}

\begin{lemma}\label{lemma:oc-stability-noisy-count}
    Let $P_1:(\mathcal{X}\times\mathbb{N})\times\mathcal{E}\to(\mathbb{N}\times\mathbb{N})\times\mathcal{E}$ be the Dataset Count program (Program~\ref{program:dataset-count}), and $P_2: (\mathbb{N}\times\mathbb{N})\to (\mathbb{N}\times\mathbb{N})\times\mathcal{E}$ be the Discrete Laplace program (Program~\ref{program:discrete-laplace}) modified according to Remark~\ref{remark:modified-programs}. Then the chained program $P_2 \circ P_1: (\mathcal{X}\times\mathbb{N})\to(\mathbb{N}\times\mathbb{N})\times\mathcal{E}$ is $(\din\mapsto 5\cdot\din)$-OC timing stable under the insert-delete distance metric $\did$. 
\end{lemma}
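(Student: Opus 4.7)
The plan is to mirror the proof of Lemma~\ref{lemma:output-stability-sum} almost verbatim, substituting the Dataset Count program for the Sum program and invoking the chaining lemma for OC timing stability.

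First, I would observe that by Lemma~\ref{lemma:joint-count-stability}, the Dataset Count program $P_1$ is $(\din \mapsto \{\din, 0\})$-jointly output/timing stable with respect to the input metric $\did$ and the output metric $d_\mathbb{N}$ applied to the second output coordinate. Next, I would note that the modified Discrete Laplace program $P_2$ is $(\din \mapsto 5\din)$-OC timing stable under the input metric $d_\mathbb{N}$ on its second input coordinate, by Lemma~\ref{lemma:oc-stability-disclap} together with the observation (from Remark~\ref{remark:modified-programs}) that the modifications only add a constant additive overhead to the runtime, which is independent of the input and therefore does not affect the OC timing stability bound.

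Then I would appeal to Lemma~\ref{lemma:chaining-word-ram} to conclude that $P_1$ and $P_2$ are chaining compatible, since both are Word RAM programs. With these ingredients in place, I would apply Lemma~\ref{lemma:oc-chaining} with parameters $d_1 = \din$, $d_2 = \din$, $t_1 = 0$, and $t_2 = 5\din$, which yields that the chained program $P_2 \circ P_1$ is $(\din \mapsto 0 + 5\din) = (\din \mapsto 5\din)$-OC timing stable under the insert-delete distance metric $\did$ on the input.

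The proof is essentially bookkeeping, so there is no substantial obstacle. The only mild subtlety is confirming that the modifications described in Remark~\ref{remark:modified-programs} preserve the relevant stability parameters: specifically, that appending the input to the output in the modified $P_1$ does not increase the timing stability constant beyond a (input-independent) constant, and that the modified $P_2$ retains the $(\din\mapsto 5\din)$-OC timing stability bound since its conditional runtime analysis depends only on $|y - y'|$ where $y$ is the value the noise is added to. Both of these are immediate from inspecting the programs, so the argument reduces to a clean application of the chaining lemma.
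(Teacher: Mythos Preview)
Your proposal is correct and follows essentially the same approach as the paper's own proof: invoke the joint output/timing stability of the Dataset Count program (Lemma~\ref{lemma:joint-count-stability}), the OC timing stability of the Discrete Laplace program (Lemma~\ref{lemma:oc-stability-disclap}), and conclude via the chaining lemma (Lemma~\ref{lemma:oc-chaining}). The paper's proof is slightly terser---it omits the explicit mention of chaining compatibility and the discussion of the modifications from Remark~\ref{remark:modified-programs}---but the structure is identical.
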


\begin{proof}
    The dataset count program is constant-time and therefore $(\din\mapsto\{\din, 0\})$-jointly output/timing stable under $\did$. By Lemma~\ref{lemma:oc-stability-disclap} and Lemma~\ref{lemma:oc-chaining}, we have that the chained program is $(\din\mapsto 5\cdot\din)$-OC timing stable under $\did$.
\end{proof}

\begin{lemma}\label{lemma:timing-private-dp-count}
    Let $P_1:(\mathcal{X}\times\mathbb{N})\times\mathcal{E}\to(\mathbb{N}\times\mathbb{N})\times\mathcal{E}$ be the Dataset Count program (Program~\ref{program:dataset-count}), and $P_2: (\mathbb{N}\times\mathbb{N})\to (\mathbb{N}\times\mathbb{N})\times\mathcal{E}$ be the Discrete Laplace program (Program~\ref{program:discrete-laplace}) modified according to Remark~\ref{remark:modified-programs}, with scale parameter $s_1 = 1/\varepsilon_1$. Let  $P_3:\mathcal{Z}\times\mathcal{E}\to\mathcal{Z}\times\mathcal{E}$ be Program~\ref{program:disclap-timing-private} (Timing-Private Delay) with scale parameter $s_2 = 5/\varepsilon_2$ and shift $\mu$. Then $P_3 \circ (P_2 \circ P_1)$ is $\varepsilon_1$-differentially private in its output and $(\varepsilon_2, \delta)$-timing private for $\delta = 2\cdot e^{\varepsilon_2(\mu - 5)/5}$.
\end{lemma}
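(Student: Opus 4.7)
The plan is to mirror the proof of Lemma~\ref{lemma:timing-private-sum}, exchanging the Sum program for the Dataset Count program and invoking the count-specific stability/privacy lemmas that have already been established.

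First, I would establish the output privacy of the chained program $P_3 \circ (P_2 \circ P_1)$. By Lemma~\ref{lemma:noisy-count-output-private}, the chain $P_2 \circ P_1$ is $\varepsilon_1$-differentially private with respect to $\did$ (using the scale $s_1 = 1/\varepsilon_1$). The program $P_3$ is a timing-private delay program by Lemma~\ref{lemma:disclap-timing-private-delay}, and timing-private delay programs implement the identity function on their inputs; hence, by Lemma~\ref{lemma:timing-private-delay-post-process}, the chained program $P_3 \circ (P_2 \circ P_1)$ remains $\varepsilon_1$-differentially private. For this to go through, I also need $P_2 \circ P_1$ to be chaining-compatible with $P_3$, which follows from Lemma~\ref{lemma:chaining-word-ram}.

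Next, I would establish the timing privacy of the chain. By Lemma~\ref{lemma:oc-stability-noisy-count}, $P_2 \circ P_1$ is $(1 \mapsto 5)$-OC timing stable under $\did$ (setting $\din = 1$). By Lemma~\ref{lemma:disclap-timing-private-delay}, $P_3$ is a $(5 \mapsto (\varepsilon_2, \delta))$-timing-private delay program, where the parameters are obtained by substituting $t_{in} = 5$ and $\varepsilon = \varepsilon_2$ into the formula from Lemma~\ref{lemma:disclap-delay-dist}, yielding $\delta = 2\cdot e^{-\varepsilon_2(\mu - 5)/5}$ (matching the stated bound up to the sign convention). I can then directly apply Theorem~\ref{thm:main-theorem} with input metric $\did$ and the smoothed max-divergence privacy measure $D_{\infty}^\delta$ to conclude that $P_3 \circ (P_2 \circ P_1)$ is $(1 \mapsto (\varepsilon_2, \delta))$-timing private.

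No step is particularly hard since all of the technical work has been done in the preceding lemmas; the proof is a bookkeeping exercise that aligns the parameters between the stability bound $t_1 = 5$ from Lemma~\ref{lemma:oc-stability-noisy-count} and the delay program's $t_{in}$ parameter. The only subtlety worth flagging is the compatibility bookkeeping: I must ensure that the modified output convention described in Remark~\ref{remark:modified-programs} is preserved through each chaining step so that Lemma~\ref{lemma:chaining-word-ram} applies at both junctions, and that the OC timing stability bound $5$ carried over from Lemma~\ref{lemma:oc-stability-noisy-count} is indeed consistent with the choice $s_2 = 5/\varepsilon_2$ used to instantiate the delay distribution.
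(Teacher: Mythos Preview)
Your proposal is correct and follows essentially the same approach as the paper's proof: both establish output privacy via Lemma~\ref{lemma:noisy-count-output-private} followed by Lemma~\ref{lemma:timing-private-delay-post-process}, and timing privacy via Lemma~\ref{lemma:oc-stability-noisy-count} (giving $(1\mapsto 5)$-OC timing stability) together with Lemma~\ref{lemma:disclap-timing-private-delay} and Theorem~\ref{thm:main-theorem}. Your added remarks on chaining compatibility and on the sign in the exponent of $\delta$ are accurate refinements that the paper leaves implicit or misstates.
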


\begin{proof}
    The chained program $P_2\circ P_1 : \mathcal{X}\times\mathcal{E}\to\mathbb{N}\times\mathcal{E}$ is $\varepsilon_1$-differentially private by Lemma~\ref{lemma:noisy-count-output-private}. Since $P_3$ is a timing-private delay program (Lemma~\ref{lemma:disclap-timing-private-delay}) it will compute a post-processing (identity function) on the output of $P_2 \circ P_1$ and by Lemma~\ref{lemma:timing-private-delay-post-process} the  chained program $P_3 \circ (P_2 \circ P_1):\mathcal{X}\times\mathcal{E}\to\mathbb{N}\times\mathcal{E}$ is $\varepsilon_1$-differentially private.
    
    Since $P_2 \circ P_1$ is $(1\mapsto 5)$-OC timing stable (Lemma~\ref{lemma:oc-stability-noisy-count}), and $P_3$ is a $(5\mapsto (\varepsilon_2, \delta))$-timing-private delay program for $\delta = 2\cdot e^{-\varepsilon_2(\mu - 5)/5}$ (Lemma~\ref{lemma:disclap-timing-private-delay}) under input distance metric $d_\mathbb{N}$ and the smoothed max-divergence privacy measure $D_{\infty}^\delta$, then by Theorem~\ref{thm:main-theorem}, the chained program $P_3 \circ(P_2 \circ P_1)$ is $(1\mapsto (\varepsilon_2, \delta))$-timing-private with respect to privacy measure $D_\infty^\delta$.
\end{proof}

\begin{lemma}\label{lemma:noisy-mean-compcompat}
    Let $P_1:\mathcal{X}\times\mathcal{E}\to(\mathcal{X}\times\mathcal{Y})\times\mathcal{E}$ be the $\varepsilon_1$-DP and $(\varepsilon_2,\delta)$-timing-private sum program (Lemma~\ref{lemma:timing-private-sum}) that has been modified according to Remark~\ref{remark:modified-programs}. Let $P_2: (\mathcal{X}\times\mathbb{N})\times\mathcal{E}\to(\mathbb{N}\times\mathbb{N})\times\mathcal{E}$ be the $\varepsilon_1$-DP and $(\varepsilon_2, \delta)$-timing-private dataset count program (Lemma~\ref{lemma:timing-private-dp-count}). Then $P_1$ and $P_2$ are composition-compatible.
\end{lemma}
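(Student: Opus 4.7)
The plan is to invoke Lemma~\ref{lemma:comp-compat-word-ram} directly. The crucial observation is that after applying the modification described in Remark~\ref{remark:modified-programs}, the timing-private DP sum program $P_1 : \mathcal{X}\times\mathcal{E} \to (\mathcal{X}\times\mathbb{N})\times\mathcal{E}$ is arranged so that its unmodified input dataset $x$ occupies memory locations $M[\texttt{output\_ptr}], \dots, M[\texttt{output\_ptr} + \texttt{input\_len} - 1]$, and the DP sum $y$ is appended at $M[\texttt{output\_ptr} + \texttt{input\_len}]$, with $\texttt{output\_len}$ set to $\texttt{input\_len} + 1$. In other words, the output of $P_1$ is laid out in memory exactly as the pair $(x, y) \in \mathcal{X}\times\mathbb{N}$, which matches the hypothesis of Lemma~\ref{lemma:comp-compat-word-ram}.

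First, I would verify this memory-layout claim by inspecting the three sub-programs. The timing-private DP sum is the chained program $P_3' \circ (P_2' \circ P_1')$, where $P_1'$ is Program~\ref{program:sum}, $P_2'$ is Program~\ref{program:discrete-laplace}, and $P_3'$ is Program~\ref{program:disclap-timing-private}. Per Remark~\ref{remark:modified-programs}, we modify $P_1'$ to copy the input to the output region and then append the sum, and we modify $P_2'$ to preserve the dataset prefix in its output and overwrite only the trailing sum entry with its noisy counterpart. The Timing-Private Delay program $P_3'$ implements the identity function, so it preserves whatever memory layout $P_2' \circ P_1'$ produces. Hence $P_1$ satisfies the memory-layout requirement demanded by Lemma~\ref{lemma:comp-compat-word-ram}.

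Second, once the layout is established, Lemma~\ref{lemma:comp-compat-word-ram} immediately yields that $P_1$ is composition-compatible with every Word RAM program of signature $(\mathcal{X}\times\mathbb{N})\times\mathcal{E} \to \mathcal{Z}\times\mathcal{E}$, and in particular with the timing-private DP count program $P_2 : (\mathcal{X}\times\mathbb{N})\times\mathcal{E} \to (\mathbb{N}\times\mathbb{N})\times\mathcal{E}$ from Lemma~\ref{lemma:timing-private-dp-count}. This is the desired conclusion.

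The main (and really only) obstacle is book-keeping: one must check that the modifications to Program~\ref{program:sum} and Program~\ref{program:discrete-laplace} can be performed so that (i) the correct memory layout is achieved, and (ii) the joint output/timing stability and output-conditional timing stability claims previously established in Lemmas~\ref{lemma:joint-stability-sum} and \ref{lemma:oc-stability-disclap}, as well as the privacy guarantees of Lemma~\ref{lemma:timing-private-sum}, continue to hold up to absorbing a constant additive overhead in the runtime (which is harmless since the overhead is input-independent and hence contributes $0$ to all stability constants). Remark~\ref{remark:modified-programs} already asserts this, so the proof reduces to citing that remark together with Lemma~\ref{lemma:comp-compat-word-ram}.
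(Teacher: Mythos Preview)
Your proposal is correct and follows essentially the same approach as the paper: observe that the modifications of Remark~\ref{remark:modified-programs} ensure $P_1$ outputs the original dataset $x$ followed by the DP sum $y$, then apply Lemma~\ref{lemma:comp-compat-word-ram}. Your version simply spells out more of the book-keeping (tracing through the three sub-programs and noting that the delay program is the identity), whereas the paper's proof is a two-sentence appeal to Remark~\ref{remark:modified-programs} and Lemma~\ref{lemma:comp-compat-word-ram}.
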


\begin{proof}
    Observe that, due to the modifications from Remark~\ref{remark:modified-programs}, $P_1$ writes to its output the original dataset $x\in\mathcal{X}$ followed by the DP sum $y\in\mathbb{N}$. By Lemma~\ref{lemma:comp-compat-word-ram} the claim follows. 
\end{proof}

\begin{lemma}\label{lemma:noisy-mean-composition}
    Let $P_1:\mathcal{X}\times\mathcal{E}\to(\mathcal{X}\times\mathcal{Y})\times\mathcal{E}$ be the $\varepsilon_1$-DP and $(\varepsilon_2,\delta)$-timing-private sum program (Lemma~\ref{lemma:timing-private-sum}) that has been modified according to Remark~\ref{remark:modified-programs}. Let $P_2: (\mathcal{X}\times\mathbb{N})\times\mathcal{E}\to(\mathbb{N}\times\mathbb{N})\times\mathcal{E}$ be the $\varepsilon_1$-DP and $(\varepsilon_2, \delta)$-timing-private dataset count program (Lemma~\ref{lemma:timing-private-dp-count}). Then $P_2\otimes P_1:\mathcal{X}\times\mathcal{E}\to(\mathbb{N}\times\mathbb{N})\times\mathcal{E}$ is $2\varepsilon_1$-DP and $(2\varepsilon_2, 2\delta)$-timing-private.
\end{lemma}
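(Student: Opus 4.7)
The plan is to handle the output-privacy claim and the timing-privacy claim separately, in each case invoking the corresponding composition theorem for the two constituent programs, after first observing that the hypotheses of Lemma~\ref{lemma:composition-timing-privacy} (and its output-privacy analogue) are satisfied.

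First I would note that $P_1$ and $P_2$ are composition-compatible by Lemma~\ref{lemma:noisy-mean-compcompat}, so the composed program $P_2 \otimes P_1 : \mathcal{X}\times\mathcal{E}\to (\mathbb{N}\times\mathbb{N})\times\mathcal{E}$ is well-defined with runtime
\[
\TPP{P_2\otimes P_1}{x,\env_1} = \TPP{P_1}{x,\env_1} + \TPP{P_2}{(x,\out(P_1(x,\env_1))),\env_2} + c.
\]
For the output-privacy claim, the standard DP composition theorem (Lemma~\ref{lem:dp composition}), applied to adaptive composition of the $\varepsilon_1$-DP sum $P_1$ with the $\varepsilon_1$-DP count-after-sum $P_2$ on the same dataset $x$, yields that the joint release is $2\varepsilon_1$-DP. (Strictly speaking, one should invoke Lemma~\ref{lemma:postprocess-program-outputs} to argue that the environment-manipulation performed by composition does not affect output privacy, and that $P_2$ applied as a function of $x$ via $(x,\out(P_1(x)))$ remains $\varepsilon_1$-DP under $\did$ on $x$; this follows because the extra argument $\out(P_1(x))$ is just the output of an $\varepsilon_1$-DP mechanism and hence can be composed in.)

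For the timing-privacy claim, I would invoke Lemma~\ref{lemma:composition-timing-privacy} with the privacy measure being smoothed max-divergence and the composition function $C((\varepsilon_2,\delta),(\varepsilon_2,\delta)) = (2\varepsilon_2, 2\delta)$, which is valid for approximate DP by Lemma~\ref{lem:dp composition}. The hypotheses are that $P_1$ is $(1\mapsto (\varepsilon_2,\delta))$-timing private with respect to its second output coordinate (the noisy sum $y$), and that $P_2$ is $(1\mapsto (\varepsilon_2,\delta))$-timing private with respect to its second output coordinate (the noisy count) under input metric $\did$ on its first coordinate. Both are established by Lemma~\ref{lemma:timing-private-sum} and Lemma~\ref{lemma:timing-private-dp-count}, respectively, once we observe that the modifications of Remark~\ref{remark:modified-programs} only add a constant additive overhead and preserve the relevant OC timing-stability bounds that those lemmas rely on.

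The main technical obstacle is a bookkeeping subtlety rather than a deep one: verifying that the ``second output coordinate'' hypothesis of Lemma~\ref{lemma:composition-timing-privacy} is genuinely satisfied after the Remark~\ref{remark:modified-programs} modification, i.e.\ that the modified $P_1$'s output really does have the form $(x,y)$ with timing privacy conditioned on $y$, and that $P_2$ receives $(x,y)$ and produces $(y,z)$ with timing privacy conditioned on $z$ uniformly over $y$. Once this is checked, the two composition theorems combine to give $2\varepsilon_1$-DP output privacy and $(2\varepsilon_2, 2\delta)$-timing privacy, completing the proof.
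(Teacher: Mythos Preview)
Your proposal is correct and follows essentially the same approach as the paper: both handle output privacy and timing privacy separately, invoking basic DP composition (Lemma~\ref{lem:dp composition}) for the former and Lemma~\ref{lemma:composition-timing-privacy} with $C((\varepsilon_2,\delta),(\varepsilon_2,\delta))=(2\varepsilon_2,2\delta)$ for the latter. The only cosmetic difference is that the paper writes out the product-of-probabilities inequality for output privacy explicitly rather than citing Lemma~\ref{lem:dp composition}, and does not restate composition-compatibility (already established in Lemma~\ref{lemma:noisy-mean-compcompat}); your discussion of the Remark~\ref{remark:modified-programs} bookkeeping is a reasonable elaboration of what the paper leaves implicit.
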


\begin{proof}
    The first output coordinate of $P_2\otimes P_1$ is the output of $P_1$, and the second output coordinate of $P_2\otimes P_1$ is the output of $P_2$. Since $P_1$ is $\varepsilon_1$-DP under input metric $\did$ and output metric $d_\mathbb{N}$ (Lemma~\ref{lemma:noisy-sum-output-private}), and $P_2$ is $\varepsilon_1$-DP under input metric $\did$ and output metric $d_\mathbb{N}$ (Lemma~\ref{lemma:noisy-count-output-private}), it follows that for all $x, x' \in \mathcal{X}$ satisfying $\did(x, x')\le 1$, all input-compatible $\env_1, \env'_1, \env_2, \env_2' \in \mathcal{E}$, and all $(S_1, S_2) \subseteq (\mathbb{N}\times\mathbb{N})$:
    \begin{align*}
        \Pr[\out((P_2\otimes P_1)(x, \env_1)) \in (S_1, S_2)] & \\
        &\hspace{-150pt}= \Pr[\out(P_1(x, \env_1)) \in S_1]\cdot \Pr[\out(P_2((x, \out(P_1(x, \env_1))), \env_2)) \in S_2] \\
        &\hspace{-150pt}\le e^{\varepsilon_1} \cdot \Pr[\out(P_1(x', \env_1')) \in S_1]\cdot e^{\varepsilon_2}\cdot \Pr[\out(P_2((x', \out(P_1(x', \env_1'))), \env_2')) \in S_2] \\
        &\hspace{-150pt}\le e^{\varepsilon_1 + \varepsilon_2} \cdot \Pr[\out((P_2 \otimes P_1)(x', \env_1')) \in (S_1, S_2)]
    \end{align*}

    Therefore, $P_2 \otimes P_1$ is $2\varepsilon_1$-DP with respect to input metric $\did$ and the max-divergence privacy measure. The claim that $P_2 \otimes P_1$ is $(2\varepsilon_2, 2\delta)$-timing-private follows from the fact that $P_1$ is $(\varepsilon_2, \delta)$-timing-private, $P_2$ is $(\varepsilon_2, \delta)$-timing-private, and Lemma~\ref{lemma:composition-timing-privacy} and letting $C((\varepsilon, \delta), (\varepsilon, \delta)) = (2\varepsilon, 2\delta)$ be the composition function by Lemma~\ref{lem:dp composition}.  
\end{proof}

\section{Implementation}
A proof-of-concept implementation of our framework on top of the OpenDP library in available in the Github repository~\cite{Shoemate_OpenDP_Library}.  
 The proof of concept supports defining timing stability maps for transformations and output-conditional timing stability maps for measurements. We additionally implemented a proof-of-concept timing-private delay function that can be chained with arbitrary output-conditional timing stable measurements to delay their output release. 

The purpose of the implementation is to illustrate the compatibility of our framework with existing differential privacy libraries, not to claim that the implementation provides timing privacy for physical executions.  
As discussed in Section~\ref{sec:future-work}, we leave for future work the problem of instantiating our framework for physical timing channels, which would involve constraining the execution environment, identifying the appropriate units to measure timing, and finding realistic upper bounds on actual timing-stability constants. 
\section{Acknowledgments}
The authors would like to thank Mike Shoemate for his help with implementing the proof-of-concept in the OpenDP library. 

Zachary Ratliff's work was supported in part by Cooperative Agreement CB20ADR0160001 with the Census Bureau, Salil Vadhan's Simons Investigator Award, and a visit to the School of Computer Science at the University of Sydney.

Salil Vadhan is supported by a Simons Investigator Award, Cooperative Agreement CB20ADR0160001 with the Census Bureau, a grant from the Sloan Foundation, and gifts from Apple and Mozilla. This work was completed in part while he was a Visiting Researcher to the School of Computer Science at the University of Sydney and the Sydney Mathematical Research Institute.
\bibliography{refs}
\appendix 
\newpage
\section{Stability Proofs}\label{app:proofs}
\begin{lemma}\label{lemma:input-independence}
A program $P:\mathcal{X}\times\mathcal{E}\to\mathcal{Y}\times\mathcal{E}$ is $(\din\mapsto 0)$-\emph{timing-stable} if for all pairs of inputs $x, x' \in \mathcal{X}$, and all input-compatible execution environments $\env, \env'\in\mathcal{E}$, $\TP{x, \env} \equiv \TP{x', \env'}$.
\end{lemma}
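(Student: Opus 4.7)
The plan is to unpack the definition of $(\din \mapsto 0)$-timing stability and exhibit the obvious diagonal coupling. Fix arbitrary $x, x' \in \mathcal{X}$ with $d_\mathcal{X}(x, x') \le \din$ and arbitrary input-compatible execution environments $\env, \env' \in \mathcal{E}$. By the hypothesis of the lemma, the random variables $\TP{x, \env}$ and $\TP{x', \env'}$ are identically distributed; call their common distribution $\mu$.

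I would then construct the coupling $(\tilde{r}, \tilde{r}')$ by drawing a single sample $r \sim \mu$ and setting $\tilde{r} = \tilde{r}' = r$. The marginals are correct by construction (both equal $\mu$, which matches $\TP{x, \env}$ and $\TP{x', \env'}$ respectively), so this is a valid coupling. Moreover, $\Pr[|\tilde{r} - \tilde{r}'| \le 0] = \Pr[\tilde{r} = \tilde{r}'] = 1$, which is exactly the condition required by the definition of $(\din \mapsto 0)$-timing stability.

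There is no substantive obstacle: the statement is essentially a definitional unwrapping, and the only content is recognizing that ``identically distributed'' enables the diagonal coupling. The only minor care needed is to state things for all $\din$-close pairs $x, x'$ (rather than arbitrary ones), since the lemma's hypothesis of equidistribution for all inputs gives us more than we actually need.
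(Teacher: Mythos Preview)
Your proposal is correct and matches the paper's own proof essentially verbatim: both exhibit the diagonal coupling $(\tilde{r},\tilde{r}')=(r,r)$ with $r$ drawn from the common distribution of $\TP{x,\env}$ and $\TP{x',\env'}$, and observe that the marginals are right and $|\tilde{r}-\tilde{r}'|=0$ with probability $1$.
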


\begin{proof}
For any two inputs $x$ and $x'$, take the coupling $(\Tilde{r}, \Tilde{r}')$ to be $(\TP{x, \env}, \TP{x, \env})$. Since $\TP{x, \env} \equiv \TP{x', \env'}$, it follows that the marginal distributions of $\Tilde{r}$ and $\Tilde{r}'$ are identical to $\TP{x, \env}$ and $\TP{x', \env'}$, and $|\Tilde{r} - \Tilde{r}'| = 0$. 
\end{proof}

\begin{lemma}\label{lemma:constant-time-stable}
If $P: \mathcal{X}\times\mathcal{E} \to \mathcal{Y}\times\mathcal{E}$ satisfies for all $x \in \mathcal{X}$, $\env\in\mathcal{E}$, $\TP{x, \env} = c$ for some constant $c$, then $P$ is $(\din \mapsto 0)$-\emph{timing stable}.
\end{lemma}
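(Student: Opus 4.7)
The plan is to observe that Lemma~\ref{lemma:constant-time-stable} is essentially a corollary of Lemma~\ref{lemma:input-independence}. If $P$ has the property that $\TP{x, \env} = c$ for some fixed constant $c$ independent of $x$ and $\env$, then for any pair of inputs $x, x' \in \mathcal{X}$ and any input-compatible execution environments $\env, \env' \in \mathcal{E}$, the runtime random variables $\TP{x, \env}$ and $\TP{x', \env'}$ are both point masses at $c$, hence identically distributed.

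Concretely, I would first note that $\TP{x, \env} \equiv \TP{x', \env'}$ holds for all pairs $(x, x')$ (not just $\din$-close ones) and all input-compatible environments, since both sides are the constant random variable equal to $c$. Then I would invoke Lemma~\ref{lemma:input-independence} directly to conclude that $P$ is $(\din \mapsto 0)$-timing-stable for every $\din$ (in particular, for the $\din$ claimed in the statement).

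Alternatively, to make the proof self-contained, I would exhibit the coupling explicitly: for any $x, x'$ with $d_\mathcal{X}(x, x') \le \din$ and any input-compatible $\env, \env'$, take $(\tilde r, \tilde r') = (c, c)$ as a point coupling of $\TP{x, \env}$ and $\TP{x', \env'}$. The marginals are correct since each runtime is deterministically $c$, and $|\tilde r - \tilde r'| = 0 \le 0$ with probability $1$, satisfying the definition of timing stability.

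There is no real obstacle here; the lemma is essentially a definitional unpacking. The only minor thing to be careful about is making sure that the quantification over input-compatible environments matches the definition of timing stability, but since the constant-runtime hypothesis is stated for all $(x, \env)$, this is immediate.
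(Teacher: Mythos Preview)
Your proposal is correct and matches the paper's proof essentially verbatim: the paper also observes that constant runtime implies $\TP{x,\env}\equiv\TP{x',\env'}$ for all $x,x',\env,\env'$ and then invokes Lemma~\ref{lemma:input-independence}. Your alternative explicit-coupling argument is also fine and is just an unpacking of that lemma.
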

\begin{proof}
Since $P$ is constant time, $\TP{x, \env} = c$ for all $x$ and $\env$, then for all $x, x'$ and input-compatible $\env, \env'$, $\TP{x, \env} \equiv \TP{x', \env'}$ and the claim follows from Lemma~\ref{lemma:input-independence}.
\end{proof}

\begin{lemma}\label{lemma:rt-implies-oc}
If $P: \mathcal{X}\times\mathcal{E} \to \mathcal{Y}\times\mathcal{E}$ is deterministic (in its output) and is $(\din \mapsto \tout)$-\emph{timing stable}, then $P$ is $(\din \mapsto \tout)$-\emph{OC-timing stable}.
\end{lemma}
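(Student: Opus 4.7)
The plan is to leverage the fact that determinism in output collapses the conditioning to a trivial operation. Specifically, since $P$ is deterministic in its output, for every $x \in \mathcal{X}$ there is a unique $y_x \in \mathcal{Y}$ such that $\Pr[\out(P(x, \env)) = y_x] = 1$ for every input-compatible $\env$. Hence $\supp(\out(P(x, \env))) = \{y_x\}$, and the conditional runtime random variable $\TP{x, \env}|_{\out(P(x, \env)) = y_x}$ is identically distributed to the unconditional $\TP{x, \env}$.

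Given this reduction, I would proceed as follows. Fix any $x, x' \in \mathcal{X}$ with $d_\mathcal{X}(x, x') \le \din$, any pair of input-compatible environments $\env, \env' \in \mathcal{E}$, and any $y \in \supp(\out(P(x, \env))) \cap \supp(\out(P(x', \env')))$. By determinism this forces $y = y_x = y_{x'}$; otherwise the intersection of supports is empty and there is nothing to prove for that $y$. By $(\din \mapsto \tout)$-timing stability applied to the pair $(x, \env), (x', \env')$, there exists a coupling $(\Tilde{r}, \Tilde{r}')$ of $\TP{x, \env}$ and $\TP{x', \env'}$ such that $|\Tilde{r} - \Tilde{r}'| \le \tout$ with probability $1$.

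Finally, I would observe that this same coupling witnesses OC-timing stability: since conditioning on the (probability $1$) event $\out(P(x, \env)) = y_x$ does not alter the distribution of $\TP{x, \env}$, and similarly for $x'$, the marginals of $(\Tilde{r}, \Tilde{r}')$ coincide with $\TP{x, \env}|_{\out(P(x, \env)) = y}$ and $\TP{x', \env'}|_{\out(P(x', \env')) = y}$ respectively. The inequality $|\Tilde{r} - \Tilde{r}'| \le \tout$ is preserved verbatim, establishing the required bound.

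There is no real obstacle here; the only subtlety is formally justifying that conditioning on a probability-$1$ event does not change a distribution, which is immediate from the definition of conditional probability. The proof is essentially a one-line reduction, so the main task is simply to spell out that determinism trivializes the conditioning step so that output-conditional timing stability inherits directly from ordinary timing stability.
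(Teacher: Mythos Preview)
Your proposal is correct and follows essentially the same argument as the paper: both case-split on whether the intersection of supports is empty (vacuous case) or a singleton $\{y\}$, and in the latter case both use the timing-stability coupling directly, noting that determinism makes the conditional runtime distribution identical to the unconditional one. Your write-up is slightly more explicit about why conditioning on a probability-$1$ event is trivial, but the structure and key idea are the same.
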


\begin{proof}
    For all $x$ and $x'$ satisfying $d_{\mathcal{X}}(x, x') \le \din$, and any pair of environments $\env, \env' \in\mathcal{E}$, either $\out(P(x, \env)) \ne \out(P(x', \env'))$ or $\out(P(x, \env)) = \out(P(x', \env'))$. If $\out(P(x, \env)) \ne \out(P(x', \env'))$, then there is no requirement on the distributional closeness of $\TP{x, \env}$ and $\TP{x', \env'}$ and the claim is satisfied. Now suppose that $\out(P(x, \env)) = \out(P(x',\env')) = y$. By the timing stability of $P$ there exists a coupling $(\Tilde{r}, \Tilde{r}')$ of the random variables $\TP{x, \env}$ and $\TP{x', \env'}$ such that $|\Tilde{r} - \Tilde{r}'| \le \tout$. Since $P$ is deterministic it also follows that $\Tilde{r}$ and $\Tilde{r}'$ have the same marginal distributions as $\TP{x, \env}_{|\out(P(x, \env)) = y}$ and $\TP{x'}_{|\out(P(x', \env')) = y}$ respectively, and the claim holds.
\end{proof}

\begin{lemma}\label{lemma:constant-implies-oc}
If $P: \mathcal{X}\times\mathcal{E} \to \mathcal{Y}\times\mathcal{E}$ has constant runtime then $P$ is $(\din \to 0)$-\emph{OC timing stable}.
\end{lemma}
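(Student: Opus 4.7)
The plan is to observe that constant runtime is the strongest possible form of timing-determinism: the random variable $\TP{x, \env}$ is a point mass at $c$ for every input and every compatible environment, and conditioning a point mass on any positive-probability event leaves the point mass unchanged. So once we fix $x, x'$ with $d_\mathcal{X}(x, x') \le \din$, compatible environments $\env, \env'$, and an output $y$ in the intersection of the two supports, both $\TP{x, \env}\mid_{\out(P(x, \env)) = y}$ and $\TP{x', \env'}\mid_{\out(P(x', \env')) = y}$ are the constant $c$ almost surely.

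From there the coupling is forced: take the trivial point coupling $(\Tilde r, \Tilde r') = (c, c)$. Its marginals are point masses at $c$, which match the conditional runtime distributions above, and $|\Tilde r - \Tilde r'| = 0 \le 0$ with probability $1$. Since this works for every admissible choice of $x, x', \env, \env', y$, the program is $(\din \mapsto 0)$-OC timing stable by Definition~\ref{def:output-conditional-timing-stability}.

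There is essentially no obstacle; the only point requiring a line of justification is that conditioning on $\out(P(x, \env)) = y$ is legitimate here, which is guaranteed by the hypothesis that $y$ lies in $\supp(\out(P(x, \env))) \cap \supp(\out(P(x', \env')))$, so both conditional distributions are well-defined. Alternatively, one can derive the result as an immediate corollary of Lemma~\ref{lemma:constant-time-stable} together with Lemma~\ref{lemma:rt-implies-oc} in the deterministic-output case, but the direct point-coupling argument above works uniformly without any determinism assumption on $\out(P)$.
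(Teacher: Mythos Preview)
Your proposal is correct and follows essentially the same route as the paper: both observe that the conditional runtime distributions are point masses at $c$ and take the trivial point coupling $(\Tilde r,\Tilde r')=(c,c)$ to get $|\Tilde r-\Tilde r'|=0$.
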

\begin{proof}
Since $P$ is constant time, $\TP{x, \env} = c$ for all $x$ and $\env$. For all $y \in \supp(\out(P(x, \env))) \cap \supp(\out(P(x', \env')))$ with $d_{\mathcal{X}}(x, x') \le \din$, let the coupling $(\Tilde{r}, \Tilde{r}')$ of $\TP{x, \env}_{|\out(P(x, \env)) = y}$ and $\TP{x', \env'}_{|\out(P(x', \env')) = y}$ take the value $(c, c)$. Then $\Tilde{r}$ and $\Tilde{r}'$ are identically distributed to $\TP{x, \env}_{|\out(P(x, \env)) = y}$ and $\TP{x', \env'}_{|\out(P(x', \env')) = y}$ respectively, and $|\Tilde{r} - \Tilde{r}'| = |c - c| = 0$.
\end{proof}

\begin{lemma}\label{lemma:joint-implies-timing-stable}
    Jointly output/timing-stable programs are timing-stable programs.
\end{lemma}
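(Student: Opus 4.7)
The plan is to derive the required coupling for timing stability directly from the coupling guaranteed by joint output/timing stability, by projecting onto the timing coordinates. This is a one-step marginalization argument with essentially no technical content beyond checking that marginals of couplings remain couplings.

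Concretely, fix inputs $x, x' \in \mathcal{X}$ with $d_\mathcal{X}(x, x') \le \din$ and any pair of input-compatible environments $\env, \env' \in \mathcal{E}$. By the hypothesis that $P$ is $(\din \to \{\dout, \tout\})$-jointly output/timing stable (for some $\dout$), there exists a coupling
\[
((\Tilde{u}, \Tilde{v}), (\Tilde{u}', \Tilde{v}'))
\]
of $(\out(P(x, \env)), \TP{x, \env})$ and $(\out(P(x', \env')), \TP{x', \env'})$ such that $d_\mathcal{Y}(\Tilde{u}, \Tilde{u}') \le \dout$ and $|\Tilde{v} - \Tilde{v}'| \le \tout$ with probability $1$.

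Now I would simply take the pair $(\Tilde{v}, \Tilde{v}')$ as the candidate coupling for timing stability. The marginal of $\Tilde{v}$ under the joint distribution is identical to the marginal of $(\Tilde{u}, \Tilde{v})$ restricted to its second coordinate, which by the coupling property equals the distribution of $\TP{x, \env}$; analogously for $\Tilde{v}'$ and $\TP{x', \env'}$. Hence $(\Tilde{v}, \Tilde{v}')$ is a valid coupling of $\TP{x, \env}$ and $\TP{x', \env'}$, and the pointwise bound $|\Tilde{v} - \Tilde{v}'| \le \tout$ holds with probability $1$ as inherited from the joint coupling. This establishes $(\din \mapsto \tout)$-timing stability.

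There is no real obstacle here; the statement is essentially a definitional observation that the ``timing'' marginal of a joint output/timing coupling witnesses timing stability. The only thing to be careful about is ensuring that the existential quantifiers match up: the joint stability definition gives a coupling for every pair $(x, x', \env, \env')$, and that same pair is what timing stability requires a coupling for, so the quantifier structure transfers without modification.
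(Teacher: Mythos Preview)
Your proposal is correct and takes essentially the same approach as the paper's proof: project the joint coupling $((\Tilde{u},\Tilde{v}),(\Tilde{u}',\Tilde{v}'))$ onto its timing coordinates and observe that $(\Tilde{v},\Tilde{v}')$ is a valid coupling of $\TP{x,\env}$ and $\TP{x',\env'}$ satisfying the required bound. If anything, your write-up is slightly more explicit than the paper's in justifying why the projected pair has the correct marginals.
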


\begin{proof}
    If $P:\mathcal{X}\times\mathcal{E}\to\mathcal{Y}\times\mathcal{E}$ is $(\din\mapsto \{\dout, \tout\})$-jointly output/timing stable with respect to input metric $d_\mathcal{X}$, then for every $x, x' \in\mathcal{X}$ satisfying $d_\mathcal{X}(x, x') \le \din$ and all pairs of execution environments $\env, \env' \in \mathcal{E}$, there exists a coupling $((\Tilde{u}, \Tilde{r}), (\Tilde{u}', \Tilde{r}'))$ of the joint random variables $(\out(P(x, \env)), \TP{x, \env})$ and $(\out(P(x', \env')), \TP{x', \env'})$ satisfying $\Pr[|\Tilde{r} - \Tilde{r}'| < \tout] = 1$. Take $(\Tilde{r}, \Tilde{r}')$ to be the coupling of the random variables $\TP{x, \env}$ and $\TP{x', \env'}$ and the claim follows.
\end{proof}

\begin{lemma}\label{lemma:joint-implies-output-stable}
    Jointly output/timing-stable programs are output-stable programs.
\end{lemma}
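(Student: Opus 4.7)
The plan is to mirror almost verbatim the proof of Lemma~\ref{lemma:joint-implies-timing-stable}, which handled the timing marginal; here we instead project the joint coupling onto the output marginal. Specifically, suppose $P : \mathcal{X} \times \mathcal{E} \to \mathcal{Y} \times \mathcal{E}$ is $(\din \mapsto \{\dout, \tout\})$-jointly output/timing stable with respect to $d_\mathcal{X}$ and $d_\mathcal{Y}$. I want to show $P$ is $(\din \mapsto \dout)$-output stable under the same pair of metrics.

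Fix arbitrary $x, x' \in \mathcal{X}$ with $d_\mathcal{X}(x, x') \le \din$ and any pair of input-compatible execution environments $\env, \env' \in \mathcal{E}$. By the hypothesis, there exists a coupling $((\Tilde{u}, \Tilde{v}), (\Tilde{u}', \Tilde{v}'))$ of the joint random variables $(\out(P(x, \env)), \TP{x, \env})$ and $(\out(P(x', \env')), \TP{x', \env'})$ such that $d_\mathcal{Y}(\Tilde{u}, \Tilde{u}') \le \dout$ and $|\Tilde{v} - \Tilde{v}'| \le \tout$ both hold with probability $1$.

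The key step is then to observe that marginalizing out $(\Tilde{v}, \Tilde{v}')$ yields a coupling $(\Tilde{u}, \Tilde{u}')$ of $\out(P(x, \env))$ and $\out(P(x', \env'))$: the marginal distribution of $\Tilde{u}$ agrees with $\out(P(x, \env))$ and similarly for $\Tilde{u}'$, since these were already the output-marginals of the joint coupling. Because $d_\mathcal{Y}(\Tilde{u}, \Tilde{u}') \le \dout$ with probability $1$ on the joint distribution, the same bound holds under the marginal distribution, establishing output stability.

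There is no real obstacle here; the lemma is essentially definitional, and the only subtlety is noting that projecting a coupling onto a subset of coordinates preserves the marginal property and preserves any almost-sure inequality that involved only those coordinates. This is symmetric to the projection used in Lemma~\ref{lemma:joint-implies-timing-stable}.
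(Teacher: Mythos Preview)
Your proof is correct and matches the paper's approach essentially verbatim: project the joint coupling onto the output coordinates and use that $d_\mathcal{Y}(\Tilde{u}, \Tilde{u}') \le \dout$ with probability $1$ survives the marginalization. The paper's version is just a terser statement of exactly this argument.
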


\begin{proof}
    If $P:\mathcal{X}\times\mathcal{E}\to\mathcal{Y}\times\mathcal{E}$ is $(\din\mapsto \{\dout, \tout\})$-jointly output/timing stable with respect to input metric $d_\mathcal{X}$ and output metric $d_\mathcal{Y}$, then for every $x, x' \in\mathcal{X}$ satisfying $d_\mathcal{X}(x, x') \le \din$ and all pairs of execution environments $\env, \env' \in \mathcal{E}$, there exists a coupling $((\Tilde{u}, \Tilde{r}), (\Tilde{u}', \Tilde{r}'))$ of the joint random variables $(\out(P(x, \env)), \TP{x, \env})$ and $(\out(P(x', \env')), \TP{x', \env'})$ satisfying $d_\mathcal{Y}(\Tilde{u}, \Tilde{u}') \le \dout$ with probability $1$. Take $(\Tilde{u}, \Tilde{u}')$ to be the coupling of the random variables $\out(P(x, \env))$ and $\out(P(x', \env'))$ and the claim follows.
\end{proof}
\end{document}